\pdfoutput=1
\documentclass[12pt,a4paper]{amsart}
\usepackage[a4paper,inner=2.5cm,outer=2.5cm,top=2.5cm,bottom=2.5cm]{geometry}
\usepackage{amsmath,amssymb,amsthm,enumerate,mathtools,stmaryrd
}
\usepackage{hyperref}
\hypersetup{colorlinks=true,linkcolor=blue,citecolor=teal,filecolor=magenta,urlcolor=cyan}
\usepackage{longtable}

\usepackage{makecell}
\usepackage{graphicx}

\usepackage{float}

\usepackage[all]{xy}
\usepackage{extarrows}

\usepackage{xcolor}

\def\nonregular{{special}}

\usepackage{url}

\usepackage[backend=bibtex,style=alphabetic,sorting=nyt,isbn=false,url=false,doi=true,maxalphanames=10,minalphanames=4,mincitenames=4,maxcitenames=10,minnames=4,maxnames=10,giveninits=true,maxbibnames=99]{biblatex}
\addbibresource{BlobbedKP.bib}
\setcounter{biburllcpenalty}{7000}
\setcounter{biburlucpenalty}{8000}
\setcounter{biburlnumpenalty}{9000}

\theoremstyle{plain}
\newtheorem{theorem}{Theorem}[section]
\newtheorem{proposition}[theorem]{Proposition}

\newtheorem{lemma}[theorem]{Lemma}

\theoremstyle{definition}
\newtheorem{definition}[theorem]{Definition}
\newtheorem{example}[theorem]{Example}
\newtheorem{notation}[theorem]{Notation}

\makeatletter
\@addtoreset{proofpart}{theorem}
\makeatother
\theoremstyle{remark}
\newtheorem{remark}[theorem]{Remark}

\def\nonregular{{special}}

\newcommand{\Z}{\mathbb{Z}}
\newcommand{\C}{\mathbb{C}}

\newcommand{\cS}{\mathcal{S}}
\newcommand{\cA}{\mathcal{A}}

\newcommand{\cP}{\mathcal{P}}

\newcommand{\cW}{\mathcal{W}}
\newcommand{\cT}{\mathcal{T}}
\DeclareMathOperator{\Aut}{Aut}
\newcommand{\VEV}[1]{{\big\langle 0 \big| {#1} \big| 0 \big\rangle}}
\newcommand{\VEVc}[1]{{\big\langle 0 \big| {#1} \big| 0 \big\rangle^\circ}}

\newcommand{\res}{\mathop{\rm res}}

\newcommand{\vev}[1]{\langle 0 | {#1} | 0 \rangle}

\newcommand{\restr}[2]{\mathop{\big\lfloor_{{#1}\to {#2}}}}
\newcommand{\set}[1]{\llbracket {#1} \rrbracket}
\newcommand{\bt}{\mathbf{t}}
\newcommand{\bs}{\mathbf{s}}

\newcommand{\ii}{\mathrm{i}}
\newcommand{\np}{\mathsf{np}}

\newcommand{\fS}{\mathfrak{S}}

\title[Blobbed topological recursion and KP integrability]
{Blobbed topological recursion and KP integrability}

\author[A.~Alexandrov]{A.~Alexandrov}
\address{A.~A.: Center for Geometry and Physics, Institute for Basic Science (IBS), Pohang 37673, Korea
}
\email{alex@ibs.re.kr}

\author[B.~Bychkov]{B.~Bychkov}
\address{B.~B.: Department of Mathematics, University of Haifa, Mount Carmel, 3498838, Haifa, Israel}
\email{bbychkov@hse.ru}

\author[P.~Dunin-Barkowski]{P.~Dunin-Barkowski}
\address{P.~D.-B.: Faculty of Mathematics, HSE University, Usacheva 6, 119048 Moscow, Russia; HSE--Skoltech International Laboratory of Representation Theory and Mathematical Physics, Skoltech, Bolshoy Boulevard 30 bld. 1, 121205 Moscow, Russia; and NRC “Kurchatov Institute” -- ITEP, 117218 Moscow, Russia}
\email{ptdunin@hse.ru}

\author[M.~Kazarian]{M.~Kazarian}
\address{M.~K.: Faculty of Mathematics, HSE University, Usacheva 6, 119048 Moscow, Russia; and Igor Krichever Center for Advanced Studies, Skoltech, Bolshoy Boulevard 30 bld. 1, 121205 Moscow, Russia}
\email{kazarian@mccme.ru}

\author[S.~Shadrin]{S.~Shadrin}
\address{S.~S.: Korteweg-de Vries Institute for Mathematics, University of Amsterdam, Postbus 94248, 1090GE Amsterdam, The Netherlands}
\email{S.Shadrin@uva.nl}

\begin{document}
	
\begin{abstract}
		We revise the notion of the blobbed topological recursion by extending it to the setting of generalized topological recursion as well as allowing blobs which do not necessarily admit topological expansion. We show that the so-called non-perturbative differentials form a special case of this revisited version of blobbed topological recursion. Furthermore, we prove the KP integrability of the differentials of blobbed topological recursion for the input data that include KP-integrable blobs. This result generalizes, unifies, and gives a new proof of the KP integrability of nonperturbative differentials conjectured by Borot--Eynard and recently proved by the authors.
\end{abstract}

\maketitle
	
\tableofcontents
	
\section{Introduction}

In this paper we deal with the following concepts:

\begin{itemize}
	\item Topological recursion of Chekhov--Eynard--Orantin (the CEO topological recursion)~\cite{EO-1st} is a way to uniformly solve a huge variety of enumerative problems recursively. Its wide range of connections to the intersection theory of moduli spaces, matrix models, and integrability brings new insight in every instance of its applications.
	\item Blobbed topological recursion~\cite{BS-blobbed} is an interface to encode more general solutions of the so-called abstract loop equations. It substantially extends the range of applications of topological recursion, see e.g.~\cite{AS-blobbed,BHW-blobbed,bonzom2020blobbedtopologicalrecursioncorrelation}.
	\item Generalized topological recursion~\cite{alexandrov2024degenerateirregulartopologicalrecursion} is a very recently defined generalization of the original CEO topological recursion that removes all restrictions on the input. It provides interesting and meaningful answers in very unexpected situations~\cite{CGS}.
	\item KP integrability is a basic integrability property that keeps to reoccur in very different areas of mathematics. It can be understood globally as a property of a system of differentials, and it has close ties with topological recursion~\cite{BorEyn-AllOrderConjecture,eynard2024hirotafaygeometry,alexandrov2024topologicalrecursionrationalspectral,ABDKS3,ABDKSnp}.
\end{itemize}

Due to an enormous amount of applications of these concepts, it is very interesting to settle the foundational questions about their interrelation and possible context for the ``ultimate'' generalizations of the basic definitions. The main goal of the paper is to provide a new context to an old conjecture of Borot and Eynard~\cite{BorEyn-AllOrderConjecture,eynard2024hirotafaygeometry}, which states that non-perturbative differentials of topological recursion are KP integrable. This conjecture was recently proved~\cite{ABDKSnp}, and the implications of this result, for instance, for the knot theory, are still to be explored.

In a nutshell, the differentials of topological recursion defined on the Cartesian powers of a given Riemann surface $\Sigma$ are merged via some convolution procedure with the so-called Krichever differentials. The latter ones are also  differentials on the Cartesian powers of $\Sigma$ and their expansions give algebro-geometric solutions of the KP hierarchy~\cite{Krichever-main,KricheverShiota}, while the differentials themselves provide a setup for conformal field theory on $\Sigma$~\cite{Kawamoto,Ooguri}. The resulting convolution differentials are the output of non-perturbative topological recursion~\cite{EynardMarino,BorEyn-AllOrderConjecture,BorEyn-knots}.

The structure of the convolution procedure is very similar to the interface proposed by the blobbed topological recursion. And indeed, we show that the non-perturbative differentials give an instance of an extended version of blobbed topological recursion, where we allow to have non-topological expansions for the blobs. The blobs in this case are the Krichever differentials.

This, in fact, prompts also a revision of the definition of the blobbed topological recursion. We propose a new construction of generalized topological recursion with blobs, where the blobs essentially replace the Bergman kernel data. This provides us a very flexible deformation toolkit given by the freedom of choice of blobs. As a result, we give a new proof of the Borot--Eynard conjecture, conceptually different from the original one presented in~\cite{ABDKSnp}.

We also take a closer look at the KP integrability property for a system of differentials~\cite{ABDKS3}. We explain how our global viewpoint matches existing constructions of algebro-geometric solutions of KP and multi-KP hierarchy~\cite{Krichever-main,krichever2023quasiperiodicsolutionsuniversalhierarchy}. Interesting effect occurs  in the analysis of the convolution procedure that is a cornerstone of the constructions of the blobbed topological recursion and non-perturbative differentials. Namely, we prove that the convolution of KP-integrable systems of differentials produces a KP integrable system of differentials.

In our review of KP integrability considered as a global property for systems of differentials, as in~\cite{ABDKS3}, we also make a special accent on the multi-KP integrability aspects~\cite{Kac-vdLeur} that occur once we expand a given system of differentials at several points on the curve. Typically, it is done at the poles of $x$, one of the functions that forms the input data of topological recursion (see an extended explanation in~\cite{eynard2024hirotafaygeometry}). However, one can choose any number of arbitrary points and we discuss the corresponding statement in detail.

It worth to mention that even if the underlying Riemann surface $\Sigma$ is of genus $0$, and thus there are no non-perturbative effects and the (generalized) topological recursion is KP integrable by itself~\cite{alexandrov2024topologicalrecursionrationalspectral}, our discussion of multi-KP integrability is very useful. Namely, topological recursion is known to interact with the genus expansion of Dubrovin--Frobenius manifolds defined via Dubrovin's superpotential~\cite{dunin-super-1,dunin-super-2}. As a corollary, our discussion of multi-KP integrability conceptually explains the effects observed in the study of the integrable systems of topological type for the Hurwitz--Frobenius manifolds in genus zero~\cite{Dub-2dTFT}, where the multi-component KP system appears as a reduction of Givental--Milanov type approach to the Dubrovin--Zhang hierarchies, see a forthcoming paper~\cite{Carlet-vdLeur-Shadrin}.

\subsection{Notation} Throughout the paper we use the following notation:

\begin{itemize}
	\item
Let $\set n$ be the set of indices $\{1,\dots,n\}$. For each $I\subset \set n$ let $z_I$ denote the set of formal variables / points on a Riemann surface / functions (respectively, depending on the context) indexed by $i\in I$.

\item

For the $n$-th symmetric group $\fS_n$ its subset of cycles of length $n$ is denoted by $C_n$. Let $\det^\circ (A_{ij})$ denote the connected determinant of a matrix $(A_{ij})$ given in the case of an $n\times n$ matrix by
\begin{align}
	{\det}^\circ (A_{ij})  \coloneqq (-1)^{n-1} \sum_{\sigma \in C_n} A_{i,\sigma(i)}.
\end{align}

\item
The operator $\restr{z}{z'}$ is the operator of restriction of an argument $z$ to the value $z'$, that is, $\restr{z}{z'}f(z) = f(z')$. When we write $\restr{}{z}$ we mean that the argument of a function or a differential to which this operator applies is set to $z$, without specifying the notation for the former argument.


\item
Let $\cS(z)$ denote $z^{-1}(e^{\frac z2}-e^{-\frac z2} )$.

\end{itemize}

\subsection{Organization of the paper} We keep this paper focused on conceptual revision of the definitions and their interrelations and do not include any applications or potential applications. It is a conscious choice, as there is enormous literature on applications in any case, while the foundational questions are often underrepresented.

In Section~\ref{sec:KPintegrability} we give a survey of KP integrability, in particular in the global context as a property of a system of multi-differentials on an algebraic curve.

In Section~\ref{sec:Convolution} we discuss the main technical tool that we apply to systems of differential, the so-called \emph{convolution}, without any reference to their further application for topological recursion. 

Finally, in Section~\ref{sec:Blobbed} we use this convolution to discuss the extended version of blobbed topological recursion and its relation to KP integrability and its relation to the non-perturbative topological recursion.

\subsection{Acknowledgments}
The authors thank KoBus, the Korean bus company that serves the line between Pohang and the Incheon International Airport, for the excellent working conditions and stimulating environment that greatly contributed to the success of this work.
	
	A.~A. was supported by the Institute for Basic Science (IBS-R003-D1). A.~A. is grateful to BIMSA in Beijing, USTC in Hefei, and HSE in Moscow for hospitality.
	B.~B. was supported by the ISF Grant 876/20. P.D.-B. and M.K. were supported by the Basic Research Program of the National Research University Higher School of Economics. S.~S. was supported by the Dutch Research Council (OCENW.M.21.233). 
	
	\section{Review of global KP integrability}
	
	\label{sec:KPintegrability}
	
	\subsection{KP integrability for a system of differentials}
	
	Let $\Sigma$ be a possibly non-compact complex curve and $\{\omega_n\}$, $n\ge1$ be a system of symmetric meromorphic $n$-differentials on $\Sigma$.
	
	\begin{definition}
		We say that the system $\{\omega_n\}$ is KP integrable if there exists a bi-halfdifferential $K$ on~$\Sigma$ with a singularity on the diagonal and the local behavior
		\begin{equation}
			K(p_1,p_2)=\left(\frac1{z_1-z_2}+(\text{holomorphic})\right) \sqrt{dz_1}\sqrt{dz_2},\quad z_i=z(p_i),
		\end{equation}
		such that the following determinantal equalities hold
		\begin{align}
			\label{eq:omega1K}\omega_1(p_1)&=\restr{p_2}{p_1}\left(K(p_1,p_2)-\frac{\sqrt{dz_1}\sqrt{z_2}}{z_1-z_2}\right),\quad z_i=z(p_i),\\
			\label{eq:omeganK}\omega_n(p_{\set n})&=\det\nolimits^\circ\|K(p_i,p_j)\|_{i,j=1,\dots,n}.
		\end{align}
		Here $z$ is some local coordinate on~$\Sigma$ or just a meromorphic function. Neither the condition on the behavior of~$K$ near the diagonal nor an expression for $\omega_1$ depend on a choice of~$z$.
		
		We assume that either $K$ is meromorphic or $K$ is a formal series in an additional formal parameter~$\hbar$ such that the coefficient of any power of $\hbar$ is meromorphic. Then, the same holds for~$\omega_n$: it is global meromorphic or it is a series in~$\hbar$ with meromorphic coefficients, respectively.
	\end{definition}
	
	The term ``halfdifferential" means that $K$ is a section of a suitable square root of the canonical bundle over $\Sigma$ with respect to each of its arguments.
	
	The differential $\omega_2(p_1,p_2)=-K(p_1,p_2)K(p_2,p_1)$ has a pole of order two on the diagonal, and one can show that $\omega_n-\delta_{n,2}\frac{dz_1dz_2}{(z_1-z_2)^2}$ is regular on the diagonals.

	In fact, the kernel~$K$ can be expressed in terms of the differentials $\omega_n$ by the following explicit formula
	\begin{equation}
		K(p_1,p_2)=\frac{\sqrt{dz_1}\sqrt{dz_2}}{z_1-z_2}\exp\left(\sum_{n=1}^\infty\frac{1}{n!}
		\biggl(\int\limits_{p_2}^{p_1}\biggr)^n\Bigl(\omega_n(\tilde p_{\set{n}})
		-\delta_{n,2}\tfrac{d\tilde z_1d\tilde z_2}{(\tilde z_1-\tilde z_2)^2}\Bigr)\right)
	\end{equation}
	that holds for any choice of the local coordinate~$z$. To be precise, this formula defines $K$ locally as a formal expansion near a chosen point of the diagonal. We require, however, that $K$ extends analytically globally and the determinantal identities hold globally.
	
	To any choice of a point $q\in\Sigma$ regular for all differentials and a local coordinate~$z$ at this point we can associate a formal power series $\tau(t_1,t_2,\dots)$ whose coefficients are defined by the power expansion of~$\omega_n$ in the chosen coordinate:
	\begin{equation}\label{eq:omega-def}
		\omega_n=\sum_{k_1,\dots,k_n=1}^\infty\frac{\partial^n\log(\tau)}{\partial t_{k_1}\dots\partial t_{k_n}}\Bigm|_{t=0}\prod_{i=1}^nz_i^{k_i-1}dz_i
		+\delta_{n,2}\frac{dz_1dz_2}{(z_1-z_2)^2},
	\end{equation}
	with the normalization $\tau(0)=1$.
	
	Then, \emph{the condition of KP integrability for a system of differentials $\omega_n$ is equivalent to the condition that $\tau$ is a KP tau function} (for some, and then, for any choice of~$q$ and~$z$), see~\cite{ABDKS3} and Section~\ref{sec:Hirota-proof} below. The latter condition can be expressed in the form of Hirota bilinear relations
	\begin{equation}\label{eq:Hirota-KP}
		\begin{gathered}
			\res_{z=0}e^{\sum_{k=1}^\infty(t_k-t'_k)z^{-k}}\tau(t-[z])\tau(t'+[z])\frac{dz}{z^2}=0,
			\\t\pm[z]\coloneq \bigl(t_1\pm\tfrac{z}{1},t_2\pm\tfrac{z^2}{2},t_3\pm\tfrac{z^3}{3},\dots\bigr),
		\end{gathered}
	\end{equation}
	that hold identically in $t=(t_1,t_2,\dots)$ and $t'=(t'_1,t'_2,\dots)$. Namely, the coefficient of any $(t,t')$-monomial in the expression under the residue is a Laurent $1$-differential in~$z$ with a pole of finite order at $z=0$, and we require that the coefficient of $\tfrac{dz}{z}$ of this differential vanishes.
	
	If the KP integrability holds, then the tau function~$\tau$ can also be expressed directly in terms of~$K$ by the following expansion in the basis of Schur functions $s_\lambda(t)$
	\begin{equation}
		\begin{aligned}
			\tfrac{K(z_1,z_2)}{\sqrt{dz_1}\sqrt{dz_2}}&=\sum K_{i,j}z_1^iz_2^j,\qquad |z_2|\ll|z_1|\ll 1,
			\\ \tau(t)&=\sum_{\lambda}\det\|K_{\lambda_i-i,j-1}\|_{i,j=1,\dots,\ell(\lambda)}\;s_\lambda(t).
		\end{aligned}
	\end{equation}
	The coefficients $K_{i,j}$ are also known as affine coordinates of the point of the Sato Grassmannian associated with a given tau function.
	
	\subsection{Example: trivial system of differentials}
	A KP integrable system of differentials $\{\omega_n\}$ is called trivial if $\omega_n=0$ for $n\ne2$. The differential $\omega_2$ cannot be equal to zero since it has a pole on the diagonal with biresidue~$1$. However, the KP integrability implies that it should be of a specific form.
	
	\begin{proposition}\label{prop:trivialKP}
		A system of symmetric differentials $\omega_n=\delta_{n,2}B$ is KP integrable if and only if the bidifferential $B=\omega_{2}$  is of the form
		\begin{equation}
			B(p_1,p_2)=\frac{dz(p_1)\,dz(p_2)}{(z(p_1)-z(p_2))^2}
		\end{equation}
		for some meromorphic function~$z$%
. The function~$z$ is defined uniquely up to a linear fractional transformation.
	\end{proposition}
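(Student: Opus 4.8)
The plan is to prove the two implications separately, producing an explicit kernel $K$ in each direction, and in the ``only if'' direction reconstructing the coordinate $z$ directly from $K$.

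\emph{The ``if'' direction and existence.} Given a meromorphic $z$ with $B=\frac{dz(p_1)\,dz(p_2)}{(z(p_1)-z(p_2))^2}$, I would take the Cauchy kernel $K(p_1,p_2)=\frac{\sqrt{dz_1}\sqrt{dz_2}}{z_1-z_2}$, $z_i=z(p_i)$, and verify the determinantal identities of the definition. One has $\omega_1=0$ and $\omega_2=-K(p_1,p_2)K(p_2,p_1)=B$ by inspection, while $\det\nolimits^\circ\|K(p_i,p_j)\|=0$ for every $n\ge 3$. The latter is the classical statement that the Cauchy kernel is the free-fermion vacuum two-point function, whose connected correlators vanish above order two (equivalently, the associated tau function is $\tau\equiv1$); concretely, the sum over necklaces in $\det\nolimits^\circ$ collapses by the Cauchy-determinant identity. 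This exhibits $\{\delta_{n,2}B\}$ as KP integrable and yields existence.

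\emph{The ``only if'' direction, conceptual reduction.} Suppose $\{\delta_{n,2}B\}$ is KP integrable, with kernel $K$ and tau function $\tau$ attached to a regular point $q$ and a coordinate. By~\eqref{eq:omega-def} the mixed derivatives $\partial_{k_1}\cdots\partial_{k_n}\log\tau|_{t=0}$ are the expansion coefficients of $\omega_n$ (with the Bergman term removed for $n=2$). Since $\omega_1=0$ and $\omega_n=0$ for $n\ge3$, all Taylor coefficients of $\log\tau$ of order $1$ and of order $\ge 3$ vanish, so $\log\tau$ is a pure quadratic form and $\tau=\exp(\tfrac12\sum c_{ij}t_it_j)$ is Gaussian. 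The claim is thus equivalent to the assertion that a Gaussian KP tau function arises precisely from a Bergman bidifferential.

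\emph{Reconstruction of $z$.} To make this concrete I would recover the coordinate from $K$ itself. Fixing two reference points $q_0\ne p_*$, the ratio $z(p):=K(p,q_0)/K(p,p_*)$ is a genuine meromorphic function on $\Sigma$, since the half-differential in $p$ cancels. The $n=3$ identity gives for the ratio $R(p_1,p_2):=K(p_1,p_2)/K(p_2,p_1)$ the factorization $R(p_1,p_2)=-h(p_1)/h(p_2)$ with $h(p)=K(p,q_0)/K(q_0,p)$; the gauge $K\mapsto (g(p_2)/g(p_1))K$ with $g=\sqrt h$ leaves every $\omega_n$ unchanged (the factors telescope around each cycle) and drops out of $z$ up to a constant, turning $K$ into an antisymmetric $\hat K$ with $\hat K(p_1,p_2)^2=B$. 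Specializing the $n=4$ identity to the points $p_1,p_2,p_3,q_0$ and dividing by the product of the six entries, the Plücker-type relation $\tfrac{1}{\hat K_{12}\hat K_{34}}+\tfrac{1}{\hat K_{14}\hat K_{23}}=\tfrac{1}{\hat K_{13}\hat K_{24}}$ becomes exactly the additive cocycle relation $m(p_1,p_2)+m(p_2,p_3)+m(p_3,p_1)=0$ for the antisymmetric function $m(p_1,p_2):=\hat K(p_1,q_0)\hat K(p_2,q_0)/\hat K(p_1,p_2)$. Hence $m(p_1,p_2)=W(p_1)-W(p_2)$ for a meromorphic $W$ (agreeing with $z$ up to a Möbius transformation), so $\hat K(p_1,p_2)=\hat K(p_1,q_0)\hat K(p_2,q_0)/(W(p_1)-W(p_2))$. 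Matching the biresidue $1$ of the double pole of $B=\hat K(p_1,p_2)^2$ on the diagonal forces $\hat K(p,q_0)^2=dW(p)$, whence $B=\frac{dW(p_1)\,dW(p_2)}{(W(p_1)-W(p_2))^2}$.

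\emph{Uniqueness and the main obstacle.} Uniqueness up to a fractional linear transformation is the rigidity of the Bergman kernel of $\mathbb{P}^1$: if $\frac{dz_1dz_2}{(z_1-z_2)^2}=\frac{dw_1dw_2}{(w_1-w_2)^2}$ then $\partial_{z_1}\partial_{z_2}\log(w(z_1)-w(z_2))=\partial_{z_1}\partial_{z_2}\log(z_1-z_2)$, and integrating together with the antisymmetry of $w_1-w_2$ forces $w=\frac{az+b}{cz+d}$. I expect the main difficulty to be the reconstruction step: checking that the low-order relations $n=3,4$ (equivalently, the Gaussianity of $\tau$) already force the full cross-ratio/cocycle structure, and controlling the global single-valuedness of the constructed objects together with the square-root ambiguities of the half-differential gauge, so that $W$ is a globally meromorphic function on $\Sigma$ reproducing $B$.
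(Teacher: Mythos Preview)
The paper does not actually prove this proposition: it simply states that it ``is a reformulation of a result from~\cite{ABDKS3}'' and moves on. So there is no in-paper argument to compare against, and your proposal is already far more detailed than what the paper offers.

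On the substance: your plan is sound and essentially complete. The ``if'' direction is immediate with the Cauchy kernel. For the ``only if'' direction, your use of the $n=3$ identity to gauge $K$ to an antisymmetric $\hat K$, and of the $n=4$ identity to obtain the additive cocycle $m(p_1,p_2)+m(p_2,p_3)+m(p_3,p_1)=0$, is exactly right; one checks directly that for antisymmetric $\hat K$ the vanishing of $\det\nolimits^\circ\|\hat K\|$ for $n=4$ is equivalent to the Pl\"ucker-type relation you wrote, and your cocycle then forces $\hat K(p_1,p_2)=\hat K(p_1,q_0)\hat K(p_2,q_0)/(W(p_1)-W(p_2))$ with $\hat K(p,q_0)^2=dW$. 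The uniqueness argument via $\partial_{z_1}\partial_{z_2}\log(w_1-w_2)=\partial_{z_1}\partial_{z_2}\log(z_1-z_2)$ is also the standard one.

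Two small remarks. First, the early candidate $z(p)=K(p,q_0)/K(p,p_*)$ is never actually used; you may as well drop it and work with $W$ throughout. Second, the obstacle you flag is genuine but mild: the ambiguity in $g=\sqrt{h}$ is a global sign only (since $h$ is nowhere-vanishing away from $q_0$ and $h(q_0)=-1$ is a regular value), and in any case the quantity $m(p_1,p_2)=\hat K(p_1,q_0)\hat K(p_2,q_0)/\hat K(p_1,p_2)$ is invariant under $g\mapsto -g$, so $W$ is a bona fide meromorphic function regardless of the branch chosen. With that observation your proof closes.
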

	
This proposition is a reformulation of a result from~\cite{ABDKS3}. 
For a trivial KP integrable system of differentials and for a point~$q\in\Sigma$, one can find a local coordinate~$z$, %
such that $\tau_{q,z}\equiv1$, $\log(\tau_{q,z})\equiv0$. Note, however, that for a generic choice of local coordinate $z$ the tau function $\tau_{q,z}$ is not constant anymore but rather $\log(\tau_{q,z})$ is a nontrivial quadratic function.

Proposition is also valid in the case when $B$ is a series in $\hbar$ with meromorphic coefficients. In this case the function~$z$ and the local coordinate, trivializing the tau function, is also depending on~$\hbar$.
	
	If we require, in addition to triviality and KP integrability, that $\Sigma$ is compact and $\omega_{2}$ has no singularities apart from the diagonal, then $\Sigma$ is necessarily rational, and $\omega_{2}$ is the standard $\C P^1$ Bergman kernel,  $\omega_{2}=\frac{dz_1dz_2}{(z_1-z_2)^2}$, where $z$ is an affine coordinate on~$\Sigma=\C P^1$, see \cite{ABDKS3} for details.
	
	\subsection{Extended differentials}
	
	While discussing KP integrability for a system of differentials $\{\omega_n\}$, it is useful to consider an extended set of $n|n$ differentials (in fact, half-differentials) $\Omega^\bullet_n$ and their connected counterparts $\Omega_n$ defined by
	\begin{align} \label{eq:Omega-bullet-expint}
		\Omega^\bullet_n(p^+_{\set n},p^-_{\set n}) = 	&  \prod_{1\leq k<\ell\leq n} \frac{(z^+_k-z^+_\ell)(z^-_k- z^-_\ell)}{(z^+_k- z^-_\ell)(z^-_k-z^+_\ell)}
		\prod_{i=1}^n \frac{\sqrt{dz^+_i}\sqrt{dz^-_i}}{z^+_i- z^-_i}  \times
		\\ \notag
		&
		\exp\Bigg(
		\sum\limits_{\scriptsize\substack{m_1,\dots,m_n\geq 0 \\ \sum_{i=1}^n m_i=m\geq 1}} \prod\limits_{i=1}^n \frac{1}{m_i!} \Big(\int\limits_{ p^-_i}^{p^+_i} \Big)^{m_i} \bigl(\omega_m(\tilde p_{\set m})-\delta_{m,2}\tfrac{d\tilde z_1d\tilde z_2}{(\tilde z_1-\tilde z_2)^2}\bigr)
		\Bigg)  ,
		\\\label{eq:Omega-expint}	\Omega_n(p^+_{\set{n}},p^-_{\set n})  =& \sum_{\ell=1}^n \frac{(-1)^{\ell-1}}{\ell} \sum_{\scriptsize\substack{I_1\sqcup \cdots \sqcup I_\ell = \set n, \\ \forall j\, I_j\not=\emptyset}} \prod_{j=1}^\ell \Omega_{|I_j|}^{\bullet}(p^+_{I_j},p^-_{I_j}).
	\end{align}
	Here $p^+_{\set{n}}=(p^+_1,\dots,p^+_n)$ and $p^-_{\set{n}}=(p^-_1,\dots,p^-_n)$ are two sets of independent arguments, $z$ is an arbitrary local coordinate on~$\Sigma$, and $z^\pm_i=z(p^\pm_i)$, $\tilde z_i=z(\tilde p_i)$. These formulas define $\Omega^\bullet_n$, $\Omega_n$ in a neighborhood of some point on the diagonal. The point is such that these differentials do not depend on a choice of the local coordinate~$z$ and extend globally to~$\Sigma^{2n}$. Moreover, if the system of differentials $\{\omega_n\}$ is KP integrable, then the extended differentials satisfy the following determinantal identities (see \cite{ABDKS3})
	\begin{align}\label{eq:Omega-bullet-det}
		\Omega^\bullet_n(p^+_{\set n},p^-_{\set n})&=\det\|K(p^+_i,p^-_j)\|,\\
		\label{eq:Omega-det}\Omega_n(p^+_{\set n},p^-_{\set n})&=\det\nolimits^\circ\|K(p^+_i,p^-_j)\|,
	\end{align}
	where $\det\nolimits^\circ$ is a connected version of determinant. In particular,
	\begin{equation}
		\Omega_1^\bullet=\Omega_1=K.
	\end{equation}
	
	Notice that the pole on the diagonals $p_i^+=p_i^-$ cancel out for the connected $n|n$ differentials for $n\ne1$, and we have
	\begin{equation}
		\omega_n (p_{\set n}) = \restr{p^+_{\set n} }{p_{\set n}}\restr{p^-_{\set n}}{p_{\set n}}\Omega_n(p^+_{\set n}, p^-_{\set n}),\quad n\ge2.
	\end{equation}
	Therefore, the determinantal identities for~$\omega_n$ are specializations of those for~$\Omega_n$.
	
	It worth to mention that there is a direct combinatorial expression for $\Omega_n$ in terms of $\omega$-differentials as a sum over certain graphs~\cite{ABDKS3}.
	
	\subsection{VEV presentation}\label{Sec.VEV}
	We present here expressions for the differentials $\omega_n$, $\Omega^\bullet_n$, $\Omega_n$ in terms of the corresponding tau function as certain vacuum expectation values in the Fock space, see \cite{ABDKS3} for details. These expressions serve actually as a motivation for introducing these differentials.
	
	We consider the (bosonic) Fock space $\C[[t_1,t_2,\dots]]$ consisting of formal power series in~$t$-variables. Introduce the following operators acting in this space:
	\begin{align}
		J_k&=\begin{cases}\partial_{t_k},& k>0,\\0,&k=0,\\(-k)t_{-k},&k<0,\end{cases}
		\\X(z,\bar z)&=\frac{\sqrt{dz}\sqrt{d\bar z}}{z-\bar z}
		\;e^{\sum\limits_{k<0}\frac{z^k-\bar z^k}{k}J_k}e^{\sum\limits_{k>0}\frac{z^k-\bar z^k}{k}J_k},
		\\J(z)&=\restr{\bar z}{z}\left(X(z,\bar z)-\tfrac{\sqrt{dz}\sqrt{d \bar z}}{z-\bar z}\right)
		=\sum_{k=-\infty}^\infty z^{k-1}dz\,J_k.
	\end{align}
	Denote by $\vev{\cdot}$ the functional that takes the free term of the corresponding series, $\vev{\tau(t)}=\tau(0)$. Then, for the tau function $\tau=\tau_{q,z}$ associated with the local coordinate~$z$ at the point~$q\in\Sigma$ we have:
	\begin{align}
		\label{eq:Omega-bullet-VEV}
		\Omega^\bullet_n(p^+_{\set n},p^-_{\set n})&=\VEV{X(z^+_1,z^-_1)\dots X(z^+_n,z^-_n)\;\tau}
		\\\label{eq:Omega-bullet-tau}&=
		\prod_{1\leq k<\ell\leq n} \frac{(z^+_k-z^+_\ell)(z^-_k- z^-_\ell)}{(z^+_k- z^-_\ell)(z^-_k-z^+_\ell)}
		\prod_{i=1}^n \frac{\sqrt{dz^+_i}\sqrt{dz^-_i}}{z^+_i- z^-_i}
		\times\\\notag &\hskip3cm
		\tau(t)\bigm|_{t_k=\frac{1}{k}\sum\nolimits_{i=1}^n((z^+_i)^k-(z^-_i)^k)},
		\\\Omega_n(p^+_{\set n},p^-_{\set n})&=\VEVc{X(z^+_1,z^-_1)\dots X(z^+_n,z^-_n)\;\tau},
		\\\label{eq:omega-VEV}\omega_n(p_{\set n})&=\VEVc{J(z_1)\dots J(z_n)\;\tau},
	\end{align}
	where $z_i=z(p_i)$, $z^\pm_i=z(p^\pm_i)$, and where $\vev{\cdot}^\circ$ denotes the corresponding connected correlators related to the disconnected ones $\vev{\cdot}$ by inclusion-exclusion relations. Note that the singular correction for $n=2$ appearing in~\eqref{eq:omega-def} is included to~\eqref{eq:omega-VEV} automatically. Note also the following special case of~\eqref{eq:Omega-bullet-VEV}:
	\begin{equation}\label{eq:K-VEV}
		K(p,\bar p)=\VEVc{X(z,\bar z)\;\tau}=\frac{\sqrt{dz}\sqrt{d\bar z}}{z-\bar z}\;\tau(t)\bigm|_{t_k=\frac{z^k-\bar z^k}k},
		\quad z=z(p),~\bar z=z(\bar p).
	\end{equation}
	
	\begin{remark}
	The right hand sides in~\eqref{eq:Omega-bullet-VEV}--\eqref{eq:K-VEV} are defined as formal Laurent expansions in a suitable sector of $z$-variables but the left hand sides extend as global meromorphic objects on the whole $\Sigma^{2n}$ or~$\Sigma^{n}$, respectively. Therefore, these relations should be considered not as definitions of the left hand sides but rather as a motivation for their actual definitions~\eqref{eq:omega1K}--\eqref{eq:omeganK},~\eqref{eq:Omega-bullet-expint}--\eqref{eq:Omega-expint} or~\eqref{eq:Omega-bullet-det}--\eqref{eq:Omega-det}.
	\end{remark}
	
	\subsection{Expression in terms of theta function}\label{sec:Kr-constr}
	
	Assume that we are given the following data:
	\begin{itemize}
		\item a function (a power series) $\theta(w_1,w_2,\dots)$ with finite or infinite number of arguments such that $\theta(0)\ne0$;
		\item a collection of meromorphic differential $1$-forms $\eta=(\eta_1,\eta_2,\dots)$ on a (not necessarily compact) Riemann surface~$\Sigma$;
		\item a symmetric bidifferential~$B(p_1,p_2)$ on~$\Sigma^2$ with a pole on the diagonal with biresidue one.
	\end{itemize}
	Then, we can define
	\begin{equation}
		\begin{aligned}\label{eq:Kr-omega}
			\omega_n(p_{\set{n}})&=\prod_{i=1}^n(\eta(p_i)\partial_w)\log\theta(w)\bigm|_{w=0}+\delta_{n,2}B(p_1,p_2)
			\\&\coloneq \sum_{k_1,\dots,k_n}\frac{\partial^n\log \theta(w)}{\partial w_{k_1}\dots\partial w_{k_n}}\Bigm|_{w=0} \eta_{k_1}(p_1)\dots\eta_{k_n}(p_n)+\delta_{n,2}B(p_1,p_2).
		\end{aligned}
	\end{equation}
	Then, both the series $\tau=\tau_{q,z}$ associated with a choice of a regular point~$q\in\Sigma$ and a local coordinate~$z$ according to~\eqref{eq:omega-def} and the extended differentials $\Omega_n^\bullet$ for the differentials~$\{\omega_n\}$ can be expressed in terms of~$\theta$. Explicitly, define the coefficients $a_{i,j}$, $b_{i,j}$ by the following expansions at the point~$q$:
	\begin{align}
		\eta_i(p)&=\sum_{j=1}^\infty a_{i,j}z^{j-1}dz,\quad z=z(p),
		\\B(p_1,p_2)-\frac{dz_1dz_2}{(z_1-z_2)^2}&=\sum_{i,j=1}^\infty b_{i,j}z_1^{i-1}dz_1z_2^{i-1}dz_2,
		\quad z_i=z(p_i).
	\end{align}
	Then, we can introduce (independently of KP integrability)
	\begin{align}\label{eq:Kr-tau-1}
		\tau(t)&=e^{Q(t)}\frac{\theta(w)\bigm|_{w_k=T_k(t)}}{\theta(0)},
		\\\notag&T_k(t)=\sum_{\ell=1}^\infty a_{k,\ell}t_\ell,
		\\\notag&Q(t)=\frac12\sum_{i,j=1}^\infty b_{i,j}t_it_j,
		\\\Omega^\bullet_n(p^+_{\set n},p^-_{\set n})&=
		\prod_{1\leq k<\ell\leq n} \frac{E(p^+_k,p^+_\ell)E(p^-_k,  p^-_\ell)}{E(p^+_k, p^-_\ell)E(p^-_k,p^+_\ell)}
		\prod_{i=1}^n \frac{1
		}{E(p^+_i, p^-_i)}
		\frac{\theta\bigl(\textstyle\sum_{i=1}^n(\cA(p^+_i)-\cA(p^-_i))\bigr)}{\theta(0)}.
		\\\notag &\cA(p)=\int_q^p\eta=\Bigl(\int_q^p\eta_1,\int_q^p\eta_2,\dots\Bigr),
		\\\notag &E(p_1,p_2)=\frac{z_1-z_2}{\sqrt{dz_1}\sqrt{dz_2}}e^{-\frac12\int\limits_{p_2}^{p_1}\int\limits_{p_2}^{p_1}
			\bigl(B(\tilde p_1,\tilde p_2)-\frac{d\tilde z_1d\tilde z_2}{(\tilde z_1-\tilde z_2)^2}\bigr)},\quad z_i=z(p_i).
	\end{align}
	
	In the case when $\Sigma$ is a disc, $\theta=\tau$, $w_k=t_k$, $\eta_k=z^{k-1}dz$, and $B=\frac{dz_1dz_2}{(z_1-z_2)^2}$, the above expression for~$\Omega^\bullet_n$ is equivalent to~\eqref{eq:Omega-bullet-VEV}.
	Another important case is when $\Sigma$ is a compact genus~$g$ Riemann surface, $\theta=\theta(w_1,\dots,w_g)$ is the Riemann theta function, $\eta=(\eta_1,\dots,\eta_g)$ is a basis of holomorphic $1$-forms, and $B$ is the Bergman kernel, which is the unique bi-differential with the only pole being the double pole on the diagonal with bi-residue $1$ and 
%
vanishing~$\frak A$-periods (we refer to this Bergman kernel as the \emph{standard Bergman kernel} in what follows). In this case,~\eqref{eq:Kr-tau-1} is the KP tau function of Krichever construction, and the KP integrability for the differentials~\eqref{eq:Kr-omega} is a reformulation of Krichever theorem, see \cite{ABDKSnp}.
	
	\subsection{Proof of Hirota equations}\label{sec:Hirota-proof}
	Though equivalence of Hirota equations and determinantal identities can be considered as well known (in particular, under the name of the Wick theorem), we still present here computations deriving Hirota equations from determinantal identities. Variations of these arguments will be used in this paper elsewhere.
	
	Consider a collection of $2m+2n-2$ pairwise distinct points on $\Sigma$ regular for the kernel $K$:
	\begin{equation}
		Z=(p_1,\dots,p_m,\bar p_1,\dots,\bar p_{m-1},p'_1,\dots,p'_{n-1},\bar p'_1,\dots,\bar p'_{n}).
	\end{equation}
	
	\begin{lemma}\label{lem:Hirota-Omega}
		For a KP integrable system of differentials $\{\omega_n\}$ the following relations hold
		\begin{equation}\label{eq:Hirota-Omega}
			\sum_{q\in Z}\res\limits_{z=q}\Omega^\bullet_m(p_{\set m},\bar p_{\set {m-1}},z)\Omega^\bullet_n(p'_{\set {n-1}},z,\bar p'_{\set {n}})=0
		\end{equation}
		for all $m,n\ge1$ and for arbitrary collection of points forming~$Z$.
	\end{lemma}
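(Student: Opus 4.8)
The plan is to push everything through the determinantal identity~\eqref{eq:Omega-bullet-det}, which is the only place where the hypothesis of KP integrability is used, and then to reduce the statement to a purely local residue computation for the kernel~$K$.

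First I would rewrite both factors as determinants of~$K$ via~\eqref{eq:Omega-bullet-det}. In $\Omega^\bullet_m(p_{\set m},\bar p_{\set{m-1}},z)$ the integration variable~$z$ is the last of the $p^-$-arguments, so inside the matrix $\|K(p^+_i,p^-_j)\|$ it occupies a single column, with entries $K(p_i,z)$, $i=1,\dots,m$; in $\Omega^\bullet_n(p'_{\set{n-1}},z,\bar p'_{\set n})$ the variable~$z$ is the last of the $p^+$-arguments, so it occupies a single row, with entries $K(z,\bar p'_j)$, $j=1,\dots,n$. Expanding each determinant along this column (resp.\ row) gives
\begin{equation*}
\Omega^\bullet_m(p_{\set m},\bar p_{\set{m-1}},z)=\sum_{i=1}^m C_i\,K(p_i,z),\qquad
\Omega^\bullet_n(p'_{\set{n-1}},z,\bar p'_{\set n})=\sum_{j=1}^n C'_j\,K(z,\bar p'_j),
\end{equation*}
where $C_i$ and $C'_j$ are the corresponding cofactors, which do not involve~$z$. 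Hence the integrand of~\eqref{eq:Hirota-Omega} equals $\sum_{i,j}C_iC'_j\,K(p_i,z)K(z,\bar p'_j)$, and it suffices to treat each summand $K(p_i,z)K(z,\bar p'_j)$ separately.

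Next I would locate the poles in~$z$. For fixed $i,j$ the $1$-form $K(p_i,z)K(z,\bar p'_j)$ (a product of two half-differentials in~$z$) is regular at every point of~$Z$ except $z=p_i$ and $z=\bar p'_j$, where precisely one of the two factors has its simple diagonal pole; since the points of~$Z$ are pairwise distinct and regular for~$K$, no further coincidences or fixed singularities of~$K$ occur along~$Z$, and both poles are simple. Thus in the sum $\sum_{q\in Z}\res_{z=q}$ only the residues at $z=p_i$ and $z=\bar p'_j$ survive. Computing them from the local behaviour $K(p_1,p_2)=\bigl(\tfrac1{z_1-z_2}+(\text{holomorphic})\bigr)\sqrt{dz_1}\sqrt{dz_2}$: near $z=p_i$ the regular factor $K(z,\bar p'_j)$ contributes its value $K(p_i,\bar p'_j)$ and the simple pole of $K(p_i,z)$ yields the residue $-K(p_i,\bar p'_j)$, whereas near $z=\bar p'_j$ the roles are reversed and one gets $+K(p_i,\bar p'_j)$. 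Therefore
\begin{equation*}
\Bigl(\res_{z=p_i}+\res_{z=\bar p'_j}\Bigr)K(p_i,z)K(z,\bar p'_j)=-K(p_i,\bar p'_j)+K(p_i,\bar p'_j)=0
\end{equation*}
for every pair $(i,j)$, and summing against $C_iC'_j$ yields~\eqref{eq:Hirota-Omega}.

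The only step that genuinely requires care is the residue bookkeeping: because $K$ is a half-differential in each of its arguments, one must track the $\sqrt{dz}$ factors to be certain that the two contributions are $\pm K(p_i,\bar p'_j)$ with opposite signs, a single misplaced sign turning the identity into a nonzero bilinear expression. I would therefore perform this local expansion once in an explicit local coordinate near the diagonal and record the resulting normalization. Everything else is formal: the determinantal identity absorbs all the content of KP integrability, after which the vanishing is a pure Cauchy-kernel cancellation that needs neither the residue theorem nor any global control of the singularities of~$K$ away from~$Z$.
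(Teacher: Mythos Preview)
Your argument is correct and is essentially identical to the paper's own proof: both use the determinantal identity~\eqref{eq:Omega-bullet-det}, expand each determinant along the column (resp.\ row) containing~$z$ to obtain $\sum_{i,j}\Omega_{i,j}\,K(p_i,z)K(z,\bar p'_j)$ with $z$-independent cofactors, and then observe that each summand has simple poles at $z=p_i$ and $z=\bar p'_j$ with residues $\mp K(p_i,\bar p'_j)\Omega_{i,j}$ that cancel. Your extra remark about tracking the $\sqrt{dz}$ factors is a welcome sanity check but not a departure from the paper's approach.
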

	
	\begin{proof}
		Applying determinantal identities and the column or row expansions, respectively, of the corresponding determinants we represent the differential under the residue in the form
		\begin{equation}
			\sum_{i=1}^m\sum_{j=1}^n K(p_i,z) K(z,\bar p'_j)\Omega_{i,j}
		\end{equation}
		with $\Omega_{i,j}$ independent of~$z$. The $(i,j)$th summand has simple poles at $z=p_i$ and $z=\bar p'_j$, with opposite residues $-K(p_i,\bar p'_j)\Omega_{i,j}$ and $K(p_i,\bar p'_j)\Omega_{i,j}$, respectively. Therefore, the contribution of this summand to the total residue in~\eqref{eq:Hirota-Omega} vanishes.
	\end{proof}
	
	\begin{proof}[Proof of bilinear relations~\eqref{eq:Hirota-KP}]
		Let $\tau=\tau_{q,z}$ be the tau function associated with a point $q\in\Sigma$ and a local coordinate~$z$ at this point.
		Applying~\eqref{eq:Omega-bullet-tau} we find
		\begin{equation}
			\Omega^\bullet_m(p_{\set m},\bar p_{\set {m}})\bigm|_{p_m=q,\bar p_m=z}=
			C\,e^{\sum_{k=1}^\infty t_k z^{-k}}\tau(t-[z])\frac{\sqrt{dz}}{z}\bigm|_{t_k=\frac1k\sum_{i=1}^{m-1}(z_i^k-\bar z_i^k)},
		\end{equation}
		where $z_i=z(p_i)$, $\bar z_i=z(\bar p_i)$, and $C$ is the product of linear prefactors in~\eqref{eq:Omega-bullet-tau} not containing $\bar p_m=z$. A similar expression relating $\Omega^\bullet_n(p'_{\set n},\bar p'_{\set {n}})$ and~$\tau(t'+[z])$ holds as well. We see that after a suitable substitution of times the differential under the residue in~\eqref{eq:Hirota-KP} extends as a global differential coinciding with that one of~\eqref{eq:Hirota-Omega} up to a factor independent of~$z$ (in fact, we obtain a limit of~\eqref{eq:Hirota-Omega} as $\bar p_m\to q$, $p'_n\to q$, that is, $\bar z_m\to 0$, $z'_n\to 0$). Let us represent the residue in~\eqref{eq:Hirota-KP} as an integral over a contour surrounding the point $q\in\Sigma$. Then, choosing the coordinates $z_i,\bar z_i,z'_j,\bar z'_j$ of the points $p_i,\bar p_i,p'_j,\bar p'_j$ small enough we may assume that all these points belong to the disk bounded by the contour, and thus~\eqref{eq:Hirota-KP} reduces to~\eqref{eq:Hirota-Omega}.
		
		Thus, we derived not the original Hirota equations but their specializations under the substitution $t_k=\frac1k\sum_{i=1}^{m-1}(z_i^k-\bar z_i^k)$, $t'_k=\frac1k\sum_{i=1}^{n-1}(z'_i{}^k-\bar z'_i{}^k)$. This substitution is known as passing to the so called Miwa variables. It is representative enough, so that if some relation on tau functions holds in Miwa variables for any~$m$,~$n$, then it holds for the original times. Therefore, the original Hirota equations hold as well.
	\end{proof}
	
	\subsection{KP symmetries}
	
	In the local setting, KP symmetries act as linear transformations of the Fock space taking KP tau functions to KP tau functions. The group of KP symmetries is quite big: all KP tau functions form a single orbit of this action. It means that any KP tau function can be obtained from any other KP tau function, for example, from that one identically equal to~$1$, by the action of a suitable transformation from this group.
	
	The corresponding Lie algebra of infinitesimal KP symmetries is spanned by scalars and by the operators $E_{i,j}$, $i,j\in\Z$, defined by the following generating series:
	\begin{equation}
		\sum_{i,j=-\infty}^\infty z^{j}\bar z^{-i-1}E_{i,j}=\frac{
			e^{\sum\limits_{k<0}\frac{z^k-\bar z^k}{k}J_k}e^{\sum\limits_{k>0}\frac{z^k-\bar z^k}{k}J_k}-1}{z-\bar z}.
	\end{equation}
	
	In this section we describe the analogue of this Lie algebra of infinitesimal KP symmetries for the setting of global KP integrability. Namely, we describe infinitesimal transformations of a system of differentials $\{\omega_n\}$ that preserve infinitesimally all determinantal identities. Denote
	\begin{equation}\label{eq:cW-def1}
		\cW_n(q^+,q^-;p_{\set n})=\restr{p^\pm_{\set n}}{p_{\set n}}\restr{p^\pm_{n+1}}{q^\pm}\Omega_{n+1}(p^+_{\set{n+1}},p^-_{\set{n+1}}),
	\end{equation}
	where $\Omega_n$ is the extended $n|n$ halfdifferential associated with $\{\omega_n\}$ and defined by~\eqref{eq:Omega-bullet-expint}--\eqref{eq:Omega-expint}. There is also a direct combinatorial formula for these differentials:
	\begin{align} \label{eq:cT-omega}
		\cT_n(q^+,q^-;p_{\set n})&=\sum_{k=1}^\infty\frac1{k!}\biggl(\int\limits_{q^-}^{q^+}\biggr)^k
		\left(\omega_{k+n}(\tilde p_{\set{k}},p_{\set{n}})-
		\delta_{n,0}\delta_{k,2}\tfrac{d\tilde z_{1}d\tilde z_{2}}{(\tilde z_{1}-\tilde z_{2})^2}\right),
		\\ \label{eq:cW-def2}
		\cW_n(q^+,q^-;p_{\set n})&=\frac{\sqrt{dz^+}\sqrt{dz^-}}{z^+-z^-}\,e^{\cT_0(q^+,q^-)}
		\sum_{\substack{\set{n}=\sqcup_{\alpha} J_\alpha
				\\J_\alpha\ne\emptyset}}
		\prod_{\alpha}\cT_{|J_\alpha|}(q^+,q^-;p_{J_\alpha}),
	\end{align}
	where $z$ is an arbitrary local coordinate or just a meromorphic function, $\tilde z_i=z(\tilde p_i)$, and $z^\pm=z(q^\pm)$ (this definition is independent of a choice of~$z$).
	The integration in the first formula carries over $\tilde p_1,\dots,\tilde p_k$.
	
	\begin{remark} Note that \eqref{eq:cT-omega} (and, by proxy,~\eqref{eq:cW-def2}) contains infinite summation. It is well-defined in a formal expansion near a point on the diagonal, but often can be considered as a globally defined analytic object.
	\end{remark}
	
	\begin{proposition} \label{prop:infinitesimal-deformations}
		The following two-parameter family of infinitesimal transformations
		\begin{equation}
			\Delta\omega_n(z_{\set n})=\cW_n(q^+,q^-;p_{\set n})
		\end{equation}
		is an infinitesimal KP symmetry of a system of differentials $\{\omega_n\}$ for any specialization of parameters $(q^+,q^-)\in\Sigma^2$. More explicitly, assume that the given system of the differentials $\{\omega_n\}$ satisfy determinantal identities~\eqref{eq:omega1K}--\eqref{eq:omeganK}, and let
		\begin{equation}
			\begin{aligned}
				K_\epsilon(p_1,p_2)
				&=K(p_1,p_2)+\epsilon\,\Omega_2(p_1,q^+,p_2,q^-)+O(\epsilon^2)
				\\&=K(p_1,p_2)-\epsilon\,K(p_1,q^-)\,K(q^+,p_2)+O(\epsilon^2).
		\end{aligned}\end{equation}
		Then, we have
		\begin{align}
			\restr{p_2}{p_1}\left(K_\epsilon(p_1,p_2)-\frac{\sqrt{dz_1}\sqrt{dz_2}}{z_1-z_2}\right)&=\omega_1(p_1)+\epsilon\,\cW_1(q^+,q^-;p_1)+O(\epsilon^2),\quad z_i=z(p_i),\\
			\det\nolimits^\circ\|K_\epsilon(p_i,p_j)\|_{i,j=1,\dots,n}&=\omega_n(p_{\set n})+\epsilon\,\cW_n(q^+,q^-;p_{\set n})+O(\epsilon^2).
		\end{align}
		
	\end{proposition}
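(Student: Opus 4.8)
The plan is to lift the claimed deformation to the level of the kernel~$K$ and to reduce the proposition to a purely combinatorial identity relating connected determinants of sizes $n$ and $n+1$. First I would record that the correction $\delta K(p_1,p_2)\coloneqq-K(p_1,q^-)K(q^+,p_2)$ is regular on the diagonal $p_1=p_2$ (for generic $(q^+,q^-)$ it is built from $K$ evaluated away from that diagonal), so $K_\epsilon=K+\epsilon\,\delta K+O(\epsilon^2)$ still has the prescribed $\tfrac1{z_1-z_2}$ singularity and is therefore an admissible kernel to all orders in~$\epsilon$. Hence the determinantal identities~\eqref{eq:omega1K}--\eqref{eq:omeganK} applied to $K_\epsilon$ define a genuine KP-integrable system, and to prove that $\Delta\omega_n=\cW_n$ is an infinitesimal KP symmetry it suffices to identify the first-order term of these determinantal expressions with $\cW_n$. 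The two formulas for $K_\epsilon$ in the statement agree because the $n=2$ instance of~\eqref{eq:Omega-det} gives $\Omega_2(p_1,q^+,p_2,q^-)=-K(p_1,q^-)K(q^+,p_2)$.

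For $n\ge2$ I would differentiate $\det\nolimits^\circ\|K_\epsilon(p_i,p_j)\|$ at $\epsilon=0$. Replacing in turn one factor of each cyclic monomial by $\delta K$ yields
\begin{equation*}
\begin{aligned}
&(-1)^{n-1}\sum_{\sigma\in C_n}\sum_{k=1}^n \delta K(p_k,p_{\sigma(k)})\prod_{i\ne k}K(p_i,p_{\sigma(i)})
\\&\qquad=(-1)^{n}\sum_{\sigma\in C_n}\sum_{k=1}^n K(p_k,q^-)\,K(q^+,p_{\sigma(k)})\prod_{i\ne k}K(p_i,p_{\sigma(i)}).
\end{aligned}
\end{equation*}
On the other hand, by~\eqref{eq:cW-def1} and~\eqref{eq:Omega-det} the differential $\cW_n$ is the connected determinant of the $(n+1)\times(n+1)$ matrix $M$ with $M_{ij}=K(p_i,p_j)$, $M_{i,n+1}=K(p_i,q^-)$, $M_{n+1,j}=K(q^+,p_j)$ for $i,j\in\set n$, and $M_{n+1,n+1}=K(q^+,q^-)$ (the last entry never enters, since $\det\nolimits^\circ$ runs over fixed-point-free cycles, which also makes the specialization $p_i^+=p_i^-=p_i$ nonsingular). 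The key step is the standard bijection between $(n+1)$-cycles $\tilde\sigma$ on $\set{n+1}$ and pairs $(\sigma,k)$, where $k=\tilde\sigma^{-1}(n+1)\in\set n$ and $\sigma\in C_n$ is obtained by deleting $n+1$ from the cycle, so that $\sigma(k)=\tilde\sigma(n+1)$. Under it the cyclic monomial of $\tilde\sigma$ factors precisely as $K(p_k,q^-)K(q^+,p_{\sigma(k)})\prod_{i\ne k}K(p_i,p_{\sigma(i)})$, and the overall sign $(-1)^{(n+1)-1}=(-1)^n$ agrees term by term with the variation above; this identifies the two expressions.

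The case $n=1$ is a one-line check: $\restr{p_2}{p_1}\bigl(K_\epsilon(p_1,p_2)-\tfrac{\sqrt{dz_1}\sqrt{dz_2}}{z_1-z_2}\bigr)=\omega_1(p_1)-\epsilon\,K(p_1,q^-)K(q^+,p_1)+O(\epsilon^2)$, while the same $n=2$ identity gives $\cW_1(q^+,q^-;p_1)=-K(p_1,q^-)K(q^+,p_1)$, so the two coincide. I expect the only genuine work to be the sign and index bookkeeping in the $(n+1)$-cycle bijection; there is no analytic obstacle, as $\delta K$ inherits the meromorphicity (or $\hbar$-expansion) of~$K$ and, as noted, the relevant specializations are nonsingular.
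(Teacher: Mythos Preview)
Your proof is correct and is exactly the ``straightforward expansion of the corresponding (connected) determinant'' that the paper invokes as its entire proof; you have simply written out the bookkeeping that the paper leaves implicit. In particular, the bijection $C_{n+1}\leftrightarrow C_n\times\set n$ via $k=\tilde\sigma^{-1}(n{+}1)$ and your observation that the diagonal entries (hence the singular specializations) never appear in $\det\nolimits^\circ$ for size $\ge 2$ are precisely the points one needs to check, and your sign count $(-1)^{(n+1)-1}=(-1)^n$ is correct.
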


The proof is obtained by a straightforward expansion of the corresponding (connected) determinant.\qed

Specializations of parameters $(q^+,q^-)$ appearing in applications are often of the following sort. Let $\partial_x$ be a meromorphic vector field on~$\Sigma$. We denote by $p\mapsto \mathfrak{g}^t_{\partial_x}p$ the time~$t$ phase flow of this vector field. Then, we set $q^\pm=\mathfrak{g}^{\pm\frac{u\hbar}{2}}_{\partial_x}p$, take the coefficient of $u^k$ for some~$k$, multiply by some auxiliary meromorphic function, and integrate the result over some contour, see, e.g. \cite{alexandrov2024topologicalrecursionrationalspectral}. In the local description of Sect.~\ref{Sec.VEV}, the coefficients of the expansion of $\cW_n$ in $q^\pm$ generate the Lie algebra of symmetries of the KP hierarchy.
	
	\subsection{Multi-KP hierarchy}
		
	A system of KP integrable differentials provides solutions not only to the KP hierarchy but also to a more general $N$-KP hierarchy for arbitrary $N\ge1$. For $N=2$ it is known also as Toda lattice hierarchy \cite{UT}. In this section we review a construction for such solution. In the setting of Riemann theta function considered in Sect.~\ref{sec:Kr-constr}, it coincides with an algebraic-geometrical solution suggested in~\cite{krichever2023quasiperiodicsolutionsuniversalhierarchy}. The results of this section are not used in the proof of the main theorems of this paper but rather provide  their obvious applications.
	
	An $N$-KP tau function $\tau=\tau(\bs|\bt)$ depends on $N$ sets of continuous times $\bt_\alpha=(t_{\alpha,1},t_{\alpha,2},\dots)$, $\alpha=1,\dots,N$, and $N$ discrete times $\bs=(s_1,\dots,s_N)\in\Z^N$ satisfying
	\begin{equation}
		\sum_{\alpha=1}^Ns_\alpha=0.
	\end{equation}
	Equations of the $N$-KP hierarchy can be represented in the form of Hirota-type bilinear relations
	\begin{equation}\label{eq:Hirota-NKP}
		\sum_{\gamma=1}^N(-1)^{\sum\limits_{\alpha<\gamma}(s_\alpha{-}s'_\alpha)}
		\res\limits_{z=0} \frac{dz}{z^{s_\gamma{-}s'_\gamma}}
		e^{\sum\limits_{k=1}^\infty(t_{\gamma,k}-t'_{\gamma,k})z^{-k}}
		\tau(\bs{-}1_\gamma\mid\bt{-}[z]_\gamma)\;\tau(\bs'{+}1_\gamma\mid\bt'{+}[z]_\gamma)=0
	\end{equation}
	that hold for arbitrary sets of $t,t',s,s'$ variables satisfying $\sum s_\alpha=1$, $\sum s'_\alpha=-1$.
	Here~$1_k\in\Z^N$ is the $k$th basic unit vector, and $\bt\pm[z]_k$ denotes the corresponding shift in the $k$th set of times:
	\begin{equation}
		(\bt\pm[z]_\gamma)_{\alpha,i}=t_{\alpha,i}\pm\delta_{\alpha,\gamma}\frac{z^i}{i}.
	\end{equation}
	
	We provide a construction for a solution of $N$-KP hierarchy associated with a KP integrable system of differentials $\{\omega_n\}$, an arbitrary generic collection of $N$ pairwise distinct points $q_1,\dots,q_N$ on~$\Sigma$, and a choice of local coordinates $\zeta_1,\dots,\zeta_N$ at these points. It is convenient to identify the collection of discrete times with a degree zero divisor
	\begin{equation}
		D(\bs)=\sum_{\alpha=1}^N s_\alpha q_\alpha\in{\rm div}_0\Sigma.
	\end{equation}
	Our goal is to associate a tau function to such divisor. In order to unify notations, we denote this tau function by $\tau_D(\bt)=\tau(\bs|\bt)$, $D=\sum s_\alpha q_\alpha$.
	For the first step we construct a $D$-modified kernel~$K_D$. In the case when $D=\sum_{i=1}^m(q'_i-q''_i)$ for some collection of $2m$~pairwise distinct points $q'_1,\dots,q'_m, q''_1,\dots, q''_m$, it is given by an explicit formula
	\begin{equation}
		K_D(p,\bar p)=\frac{\Omega^\bullet_{m+1}(p,q'_{\set m},\bar p,q''_{\set m})}{\Omega^\bullet_{m}(q'_{\set m},q''_{\set m})},
		\quad D=\sum_{i=1}^m(q'_i-q''_i).
	\end{equation}
	Then, it proves out that $K_D$ admits a natural limit as some of the points $q'_i,q''_i$ glue together in a family. Moreover, this limit depends neither of the way we split the points of~$D\in{\rm div}_0\Sigma$ into pairs in order to represent it in the form $\sum_{i=1}^m(q'_i-q''_i)$, nor on the number $m$ of such pairs. Thus, by continuity, the definition of the kernel $K_D$ extends to arbitrary divisors of degree zero.
	
	The kernel $K_D$ defines, in turn, a system of $D$-modified differentials
	\begin{equation}
		\begin{aligned}
			\omega_{D,1}(p_1)&=\restr{p_2}{p_1}\left(K_D(p_1,p_2)-\frac{\sqrt{dz_1}\sqrt{z_2}}{z_1-z_2}\right),\quad z_i=z(p_i),\\
			\omega_{D,n}(p_{\set n})&=\det\nolimits^\circ\|K_D(p_i,p_j)\|_{i,j=1,\dots,n}.
		\end{aligned}
	\end{equation}
	
	In the case $D=\sum_{i=1}^m(q'_i-q''_i)$ with pairwise distinct $q'_i,q''_i$, the corresponding $n|n$ differentials are given by
	\begin{equation}
		\Omega^\bullet_{D,n}(p_{\set n},\bar p_{\set n})=\frac{\Omega^\bullet_{m+n}(p_{\set n},q'_{\set m},\bar p_{\set n},q''_{\set m})}
		{\Omega^\bullet_{m}(q'_{\set m},q''_{\set m})},
		\quad D=\sum_{i=1}^m(q'_i-q''_i).
	\end{equation}
	
	The kernel $K_D$ has poles at the points of the divisor~$D$. This implies that $\omega_{D,1}$ gets also poles at the corresponding points. Namely, $\omega_{D,1}$ has simple poles at $q_1,\dots,q_N$ with residues $s_1,\dots,s_N$, respectively. However, these poles cancel out for the higher differentials $\omega_{D,n}$, $n\ge2$, which means that all these differentials are, in fact, holomorphic for~$n\ge2$ except a standard pole of $\omega_{D,2}$ on the diagonal.
	
	The Taylor coefficients of the logarithm of $\tau_D$ are defined from the power expansion of the differentials $\omega_{D,n}$ at suitable points. Explicitly, for $n\ge1$ and an arbitrary collection of indices $\vec\alpha=(\alpha_1,\dots,\alpha_n)$, $\alpha_i\in\{1,\dots,N\}$, expanding $\omega_{D,n}$ at the point $(q_{\alpha_1},\dots,q_{\alpha_n})\in\Sigma^n$ in the local coordinates $\zeta_{\alpha_1},\dots,\zeta_{\alpha_n}$, respectively, we set
	\begin{equation}
		\frac{\partial^n\log\tau_D}{\partial t_{\alpha_1,k_1}\dots\partial t_{\alpha_1,k_1}}\Bigm|_{t=0}=
		[z_1^{k_1-1}\dots z_n^{k_n-1}]\frac{\tilde\omega_{D,n}(p_{\set n})}{\prod_{i=1}^n dz_i},
		\qquad z_i=\zeta_{\alpha_i}(p_i),\quad p_i\to q_{\alpha_i},
	\end{equation}
	where
	\begin{equation}
		\begin{aligned}
			\tilde \omega_{D,1}(p)&=\omega_{D,1}(p)-s_\alpha\frac{dz}{z},\quad \vec\alpha=(\alpha),\quad z=\zeta_\alpha(p),
			\\\tilde \omega_{D,2}(p_1,p_2)&=\omega_{D,2}(p_1,p_2)-\frac{dz_1dz_2}{(z_1-z_2)^2},
			\quad \vec\alpha=(\alpha,\alpha),\quad z_i=\zeta_\alpha(p_i),
		\end{aligned}
	\end{equation}
	and $\tilde\omega_{D,n}=\omega_{D,n}$ in all other cases (including the case when $n=2$ and $\vec\alpha=(\alpha_1,\alpha_2)$ with $\alpha_1\ne\alpha_2$). By construction, $\tilde\omega_{D,n}$ is indeed regular at the considered point and expands as a power series.
	
	This relation defines $\tau_D$ as a formal power series uniquely up to a constant factor $C_D=\tau_D(0)$. A correct account of this factor is quite sensitive for validity of Hirota equations. It is defined as follows.
	
	Let $U_\alpha$ be a small neighborhood of the point $q_\alpha$ where the local coordinate $\zeta_\alpha$ is defined. For any divisor $\widetilde D=\sum_{i=1}^m(q'_i-q''_i)$ with pairwise distinct $q'_i,q''_i$ and supported by $\cup_{\alpha=1}^N U_\alpha$ we define
	\begin{equation}
		C_{\widetilde D}=\prod_{1\le i<j\le m}\frac{\epsilon(q'_i,q''_j)\epsilon(q''_i,q'_j)}{\epsilon(q'_i,q'_j)\epsilon(q''_i,q''_j)}
		\prod_{i=1}^m\epsilon(q'_i,q''_i)
		~\Omega_m(q'_{\set m},q''_{\set m}),
	\end{equation}
	where for $p\in U_\alpha$, $\bar p\in U_\beta$, we define $\epsilon(p,\bar p)$ from relation
	\begin{equation}
		\epsilon(p,\bar p)\sqrt{d\zeta_\alpha(p)}\sqrt{d\zeta_\beta(\bar p)}=\begin{cases}
			\phantom{-}1,&\alpha<\beta,
			\\-1,&\alpha>\beta,
			\\\zeta_\alpha(p)-\zeta_\alpha(\bar p)&\alpha=\beta.
		\end{cases}
	\end{equation}
	The constant $C_{\widetilde D}$ does not depend on the way we split the points of $\widetilde D$ into pairs to represent it in the form $\sum_{i=1}^m(q'_i-q''_i)$ and extends by continuity to arbitrary degree~0 divisors supported by $\cup_{\alpha=1}^NU_\alpha$, in particular to the case of the divisor $D=\sum s_\alpha q_\alpha$. This completes the definition of the function $\tau_D(\bt)=\tau(\bs|\bt)$.
	
	\begin{theorem}
		The function $\tau_D(\bt)=\tau(\bs|\bt)$ is an $N$-KP tau function, that is, it satisfies the $N$-KP bilinear relations.
	\end{theorem}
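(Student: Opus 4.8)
\section*{Proof proposal}

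The plan is to reproduce, in the $N$-KP setting, the strategy already used above to derive the ordinary bilinear relations~\eqref{eq:Hirota-KP} from Lemma~\ref{lem:Hirota-Omega}, but now distributing the insertion points among the $N$ marked points $q_1,\dots,q_N$ and carefully tracking the dependence on the divisor. First I would reduce to the generic case: by the very continuity arguments used to define $K_D$ and the normalization constant $C_D$, it suffices to establish~\eqref{eq:Hirota-NKP} when the degree~$1$ divisor $D=\sum s_\alpha q_\alpha$ and the degree~$-1$ divisor $D'=\sum s'_\alpha q_\alpha$ are represented by divisors with pairwise distinct points, so that the differentials $\Omega^\bullet_{D,n}$ are given by the explicit ratio formula and the denominators $\Omega^\bullet_m(q'_{\set m},q''_{\set m})$ are available in closed form.

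The crucial preliminary step is a $D$-deformed analog of~\eqref{eq:Omega-bullet-tau}. I would show that for a point $p\to q_\gamma$ with local coordinate $z=\zeta_\gamma(p)$, inserting $p$ as a $+$-argument into $\Omega^\bullet_{D-q_\gamma,\,n}$ produces
\[
(\text{explicit prefactor})\times\varepsilon_\gamma\,z^{-s_\gamma}\,e^{\sum_k t_{\gamma,k}z^{-k}}\,\tau_{D-q_\gamma}(\bt-[z]_\gamma)\,\frac{\sqrt{dz}}{z}\Big|_{\text{Miwa}},
\]
and symmetrically that insertion as a $-$-argument yields $\tau_{D'+q_\gamma}(\bt'+[z]_\gamma)$ carrying the factor $z^{+s'_\gamma}$. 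Here the power $z^{-s_\gamma}$ reflects the order-$s_\gamma$ pole of the kernel $K_D$ at $q_\gamma$ (the same pole that gives $\omega_{D,1}$ residue $s_\gamma$), the sign $\varepsilon_\gamma$ comes from the cocycle $\epsilon(p,\bar p)$, and the precise choice of the constant $C_D=\tau_D(0)$ is exactly what makes this identity hold with the stated prefactors.

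Next I would apply Lemma~\ref{lem:Hirota-Omega} to a configuration $Z$ built from the divisor points of $D$ and $D'$ together with the spectator points encoding the Miwa variables $\bt,\bt'$ clustered near the various $q_\alpha$. The $z$-independent denominators of the ratio formula factor out of the residue. Writing $\sum_{q\in Z}\res_{z=q}=0$ as a sum of contour integrals and shrinking the spectator coordinates — exactly as in the proof of~\eqref{eq:Hirota-KP} — lets me deform the contours so that the surviving contributions localize at the $N$ points $q_1,\dots,q_N$. The residue at $q_\gamma$ then reproduces the $\gamma$-th summand of~\eqref{eq:Hirota-NKP}: the two tau-factors $\tau_{D-q_\gamma}(\bt-[z]_\gamma)$ and $\tau_{D'+q_\gamma}(\bt'+[z]_\gamma)$ arise from the preliminary step, and the two prefactor powers combine to $z^{-(s_\gamma-s'_\gamma)}$. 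The relation having been checked in Miwa variables for all configurations and all numbers of insertion points, the standard Miwa-density argument recovers~\eqref{eq:Hirota-NKP} as an identity in the free times $\bt,\bt'$.

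The hard part will be the sign-and-order bookkeeping: showing that the $\epsilon(p,\bar p)$ factors — which equal $\pm1$ according to whether the neighborhood indices satisfy $\alpha<\beta$ or $\alpha>\beta$ — together with the calibrated normalization $C_D$ assemble into exactly the multicomponent sign $(-1)^{\sum_{\alpha<\gamma}(s_\alpha-s'_\alpha)}$ and the exponent $z^{-(s_\gamma-s'_\gamma)}$. This is the one place where genuinely $N$-KP (rather than a mere family of independent single-KP) structure is produced, and it is precisely the reason $C_D$ had to be defined with those particular $\epsilon$-cocycle weights; verifying that these data conspire correctly is the essential content of the proof.
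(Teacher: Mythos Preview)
Your proposal is correct and follows essentially the same route as the paper: reduce to generic divisors by continuity, pack the divisor points and the Miwa spectators as arguments of a single $\Omega^\bullet$, invoke Lemma~\ref{lem:Hirota-Omega} on a contour enclosing the $N$ disks, and then appeal to Miwa density. The paper's version is terser---it simply asserts that after the Miwa substitution and the deformation $D\to\widetilde D$, $D'\to\widetilde D'$ the bilinear relation \emph{becomes} \eqref{eq:Hirota-Omega} up to a $z$-independent factor, without spelling out the sign-and-exponent bookkeeping you flag as ``the hard part''; your identification of the $\epsilon$-cocycle and the calibrated $C_D$ as the source of the multicomponent sign $(-1)^{\sum_{\alpha<\gamma}(s_\alpha-s'_\alpha)}$ is exactly what the paper sweeps under the phrase ``takes the form.''
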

	
	\begin{proof} Following the arguments in the proof of Hirota equations of Sect.~\ref{sec:Hirota-proof}, let us represent the sum of residues in~\eqref{eq:Hirota-NKP} as a single integral along a contour which is a boundary of the union of~$N$ small disks centered at $q_1,\dots,q_N$. Apply the Miwa substitution  $t_{\alpha,k}=\frac1k\sum_{i=1}^{m_\alpha}(z_{\alpha,i}^k-\bar z_{\alpha_i}^k)$, $t'_{\alpha,k}=\frac1k\sum_{i=1}^{n_\alpha}(z'_{\alpha,i}{}^k-\bar z'_{\alpha,i}{}^k)$, to the times in this relation, and also apply small deformations of the divisors $D=\sum s_\alpha q_\alpha$ and $D'=\sum s'_\alpha q_\alpha$ to represent the deformed divisors $\widetilde D$ and $\widetilde D'$ in the form $\widetilde D=\sum_{i=1}^{m_0}(\tilde q_i-\bar{\tilde q}_i)+p_m$, $\widetilde D'=\sum_{i=1}^{n_0}(\tilde q'_i-\bar{\tilde q}'_i)-\bar p'_n$, respectively. Then, the desired Eq.~\eqref{eq:Hirota-NKP} takes the form
		\begin{equation}\label{eq:Hirota-Omega-NKP}
			\oint\Omega^\bullet_m(p_{\set m},\bar p_{\set {m-1}},z)\Omega^\bullet_n(p'_{\set {n-1}},z,\bar p'_{\set {n}})=0,
		\end{equation}
		where $m=\sum_{\alpha=0}^N m_\alpha+1$, $n=\sum_{\alpha=0}^N n_\alpha+1$, and where the arguments of $\Omega^\bullet_m$ and $\Omega^\bullet_n$ include the points representing the Miwa variables and also the points of the divisors $\tilde D$ and~$\tilde D'$. These arguments are points of~$\Sigma$ distributed somehow among the disks centered at $q_1,\dots,q_N$. Moreover, we may assume that all these points are fixed except the one denoted by~$z$ which runs over the union of these disks. Assuming determinantal identities we obtain~\eqref{eq:Hirota-Omega-NKP} as a special case of Lemma~\ref{lem:Hirota-Omega}. Taking the limit as $\tilde D\to D$ and $\tilde D'\to D'$ we obtain the desired Hirota equations~\eqref{eq:Hirota-NKP} in Miwa coordinates. Finally, since the number of Miwa coordinates can be arbitrary, we conclude that the Hirota equations hold for the original times as well.
	\end{proof}
	
	\begin{remark}
	Note that if $s_k=0$ for all $k$, then the differentials $\omega_{D,n}$ coincide with the initial differentials $\omega_{n}$ and produce a tau-function $\tau({\bf 0}|\bt)$ that solves the KP hierarchy in each family of $t_{\alpha,i}$ for a given $\alpha$ and arbitrary values of $t_{\beta,i}$ for $\beta\neq \alpha$ that can be considered as parameters of the KP tau-function.
	\end{remark}
	
	\begin{example}
		Assume that we are in the setting of Sect. \ref{sec:Kr-constr}, that is, $\omega_n$ is given by~\eqref{eq:Kr-omega} for some choice of~$\theta$,~$\eta$, and~$B$. In this case the function $\tau_D$ can be expressed in terms of~$\theta$. These expressions presented below hold independently of KP integrability. However, if the differentials $\omega_n$ are KP integrable, then the function $\tau_D$ is a tau function of the $N$-KP hierarchy. For $D=\sum_{\alpha=1}^N s_\alpha q_\alpha$ we define
		\begin{align}
			E(p_1,p_2)&=\frac{z_1-z_2}{\sqrt{dz_1}\sqrt{dz_2}}e^{-\frac12\int\limits_{p_2}^{p_1}\int\limits_{p_2}^{p_1}
				\bigl(B(\tilde p_1,\tilde p_2)-\frac{dz_1dz_2}{(z_1-z_2)^2}\bigr)},\quad z_i=z(p_i),
			\\\cA_k(D)&=\sum_{\alpha=1}^Ns_\alpha \int_{q_0}^{q_\alpha}\eta_k,
			\\S_D(p)&=\sum_{\alpha=1}^Ns_\alpha \int_{q_0}^{q_\alpha}B(\cdot,p),
		\end{align}
		where~$z$ is an arbitrary meromorphic function, $q_0$ is a fixed point chosen in advance, and where we assume that the points $p_1,p_2,p$ and all integration contours belong to some fixed simply connected domain in~$\Sigma$ containing all points $q_1,\dots,q_N$. One can check that~$E$ and~$\cA_k(D),S_D$ do not depend on a choice of~$z$ and $q_0$, respectively.
		
		Define the constants $a_{k,(\alpha,i)}$, $b_{(\alpha,i),(\beta,j)}$, $c_{D,(\alpha,j)}$ by the expansions
		\begin{equation}
			\begin{aligned}
				\eta_k(p)&=\sum_{i=1}^\infty a_{k,(\alpha,i)}z^{i-1}dz,\quad z=\zeta_\alpha(p),~p\to q_\alpha,
				\\B(p,\bar p)-\delta_{\alpha,\beta}\frac{dz d\bar z}{(z-\bar z)^2}&=
				\sum_{i,j=1}^\infty b_{(\alpha,i),(\beta,j)}z^{i-1}d z\; \bar z^{j-1}d\bar z,\quad
				\begin{cases}
					z=\zeta_\alpha(p),&p\to q_\alpha,\\
					\bar z=\zeta_\beta(\bar p),&\bar p\to q_\beta,
				\end{cases}
				\\ S_D(p)-s_\alpha\frac{dz}{z}&=\sum_{j=1}^\infty c_{D,(\alpha,j)} z^{j-1}dz,\quad z=\zeta_\alpha(p),~p\to q_\alpha,
			\end{aligned}
		\end{equation}
		and set
		\begin{equation}
			\begin{aligned}
				T_{D,k}(t)&=\cA_k(D)+\sum_{\alpha=1}^N\sum_{i=1}^\infty a_{k,(\alpha,i)}t_{\alpha,i},
				\\
				Q_D(t)&=\frac12\sum_{\alpha,\beta=1}^N\sum_{i,j=1}^\infty b_{(\alpha,i),(\beta,j)}t_{\alpha,i}t_{\beta,j}
				+\sum_{\alpha=1}^N\sum_{j=1}^\infty c_{D,(\alpha,j)}t_{\alpha,j},
				\\
				\widetilde C_D&=\prod_{1\le\alpha<\beta\le N}\left(\frac{\epsilon(q_\alpha,q_\beta)}{E(q_\alpha,q_\beta)}\right)^{s_\alpha s_\beta},
			\end{aligned}
		\end{equation}

		Then, we have
		\begin{equation}\label{eq:Kr-tau}
			\tau_D(t)=\widetilde C_D\;e^{Q_D(t)}\theta(w)\bigm|_{w_k=T_{D,k}(t)}.
		\end{equation}
		
		This expression for~$\tau_D$ involves nontrivial finite shifts of times, and in order to give meaning to~\eqref{eq:Kr-tau} we need to impose some additional assumptions on~$\theta$. Namely, we assume that either the number of arguments $w_i$ of~$\theta$ is finite and~$\theta$ is analytic or that~$\theta$ is a series in an additional formal parameter~$\hbar$ and the coefficient of any power of $\hbar$ in~$\theta$ depends analytically on a finite number of $w$-variables.
		
		In the case when~$\Sigma$ is compact and $\theta$ is the Riemann theta function, Expression~\eqref{eq:Kr-tau} is equivalent to the one of~\cite{krichever2023quasiperiodicsolutionsuniversalhierarchy}.
	\end{example}
	
	\section{Convolution of two systems of differentials}
	
	\label{sec:Convolution}
	
	\subsection{Definition of convolved differentials}
	Let $\Sigma$ be a possibly noncompact Riemann surface (complex curve) and $\cP=\{q_1,\dots,q_N\}\subset\Sigma$ be a finite set. For two meromorphic $1$-forms $\psi$ and $\phi$ such that $\psi$ possibly has poles and $\phi$ is holomorphic at~$\cP$ we define
	\begin{align}
		\psi*\phi\coloneqq \sum_{q\in \cP} \res_{p=q} \psi(p) \int_q^p \phi.
	\end{align}

	Now consider two systems of symmetric differentials $\{\tilde \psi_n\}_{n\ge1}$ and $\{\tilde \phi_n\}_{n\ge1}$ on~$\Sigma$. We assume that both $\tilde \psi_n$ and $\tilde\phi_n$ expand as a series in~$\hbar$, $\tilde \psi_n = \sum_{d}\hbar^d\tilde\psi_n^{\langle d\rangle}$ and $\tilde \phi_n = \sum_{d}\hbar^d\tilde\phi_n^{\langle d\rangle}$, whose coefficients $\tilde\psi_n^{\langle d\rangle}$ and $\tilde\phi_n^{\langle d\rangle}$ are meromorphic and regular on diagonals. We use the tilde symbol to emphasize that regularity on diagonals is satisfied for all~$n$ including $n=2$. Later on we will modify $\tilde\psi_2$ and $\tilde\phi_2$ by introducing a pole on the diagonal for these differentials.
	
	We also assume that the only poles that $\tilde \psi_n$ may possess are those of the form $p_i=q_j$ where $p_1,\dots,p_n$ are the arguments of~$\tilde\psi_n$ and $q_j$ is a point of $\cP$. Assume also that $\tilde\phi_n$ are regular on~$\cP$.

	\begin{definition}\label{def:convolution-system}
		A system of differentials $\tilde\omega_n$ of convolution of the differentials  $\{\tilde \psi_n\}_{n\ge1}$ and $\{\tilde \phi_n\}_{n\ge1}$ is defined as follows.
		
		\begin{itemize}
			\item
			For $n\geq 1$ define a set $G_n$ of graphs $\Gamma$ with some additional structure and properties:
			\begin{itemize}
				\item The set of vertices $V(\Gamma)$ splits into three subsets $V(\Gamma)=V_\ell(\Gamma)\sqcup V_{\psi}(\Gamma)\sqcup V_{\phi}(\Gamma)$, which we call leaves, $\psi$-vertices and $\phi$-vertices, respectively.
				\item The set $V_\ell(\Gamma)$ consists of $n$ ordered leaves $\ell_1,\dots,\ell_n$, that is, there is a fixed bijection $\set n \to V_\ell(\Gamma)$ such that $i\mapsto \ell_i$.
				\item Each edge in the set of edges $E(\Gamma)$ either connects a leaf to a $\psi$- or a $\phi$-vertex (such edges are called \emph{external}), or connects a $\psi$-vertex to a $\phi$-vertex (such edges are called \emph{internal}). Accordingly, the set $E(\Gamma)$ splits as $E(\Gamma)=E_{\ell}(\Gamma)\sqcup E_{{\psi}-\phi}(\Gamma)$.
				\item There is exactly one edge attached to each leaf $\ell_i$.
				\item The graph $\Gamma$ is connected.
			\end{itemize}
			%
			%
			
			Here is an example of such graph; the $\psi$- and $\phi$-vertices are colored in white, resp., black; the internal edges are marked by a star:
			\[
			\vcenter{
				\xymatrix@C=10pt@R=0pt{
					\txt{\tiny 1} \ar@{-}[dr] &   & \rule{30pt}{0pt} & &  &
					\\
					\txt{\tiny 2} \ar@{-}[r] &  *+[o][F-]{} \ar@{}[u]|(.8){\tilde\psi} \ar@{-}@/^5pt/[rrd]|{*}  & &   &  &
					\\
					\vdots \ar@{-}[ur] & & & *+[o][F*]{} \ar@{}[lu]_(.1){\tilde\phi} & &
					\\
					& *+[o][F-]{} \ar@{}[u]|(.6){\tilde\psi} \ar@{-}@/^5pt/[rru]|{*} \ar@{-}@/_5pt/[rru]|{*}  \ar@{-}@/_5pt/[drr]|{*}& &   & &
					\\
					&  & & *+[o][F*]{}  \ar@{}[lu]_(.1){\tilde\phi}  & \vdots \ar@{-}[l] &
					\\
					\vdots \ar@{-}[r]& *+[o][F-]{} \ar@{}[u]|(.4){\tilde\psi} \ar@{-}@/_5pt/[urr]|{*} & & &\txt{\tiny n} \ar@{-}[lu] &
					\\
					\vdots \ar@{-}[ur]
				}
			}
			\]
			
			\item To each graph $\Gamma\in G_n$ we associate its weight $\mathsf{w}(\Gamma)$ defined as follows.
			\begin{itemize}
				\item 
				A $\psi$-vertex (resp., $\phi$-vertex) $v$ of $\Gamma$ with $m$ edges attached to it is decorated by $\mathsf{w}(v)\coloneqq \tilde \psi_{m}$ (resp., $\mathsf{w}(v)\coloneqq \tilde\phi_{m}$), and the arguments of $\mathsf{w}(v)$ are put in bijection with the edges attached to $v$.
				\item 	The edge $e\in E_{\ell}(\Gamma)$ attached to $\ell_i$ is decorated by the operator $O(e)\coloneqq \restr {} {p_i}$.
				\item Each edge $e\in E_{{\psi}-\phi}(\Gamma)$ is decorated by the bilinear operator $O(e)\coloneqq *$ applied to the decorations of the $\psi$-vertex and $\phi$-vertex incident to $e$ in the variable corresponding to $e$.
				\item The weight $\mathsf{w} (\Gamma)$ is defined as
				\begin{align}
					\mathsf{w}(\Gamma) \coloneqq 
					\prod_{e\in E(\Gamma)} O(e)
					\bigg(\prod_{v\in V_{\psi}(\Gamma)} \mathsf{w}(v) \prod_{v\in V_{\phi}(\Gamma)} \mathsf{w}(v)\bigg),
				\end{align}
				where the action of the operators associated to the edges on the decorations of the vertices is prescribed by the graph $\Gamma$.
			\end{itemize}
			\item
			The convolved differentials $\tilde \omega_n$ are defined as
			\begin{align} \label{eq:GraphicalFormulaomega}
				\tilde \omega_n  =  \sum_{\Gamma\in G_{n}} \frac{\mathsf{w}(\Gamma)}{|\Aut(\Gamma)|},
			\end{align}	
			where $|\Aut(\Gamma)|$ denotes the order of the automorphisms group of a graph $\Gamma$ that preserves all markings of vertices.
		\end{itemize}
	\end{definition}
	
	\begin{remark} \label{rem:finiteness} Note that with this definition $\tilde \omega_n$ are given by an infinite summation over graphs. For the correctness of this definition we should impose some additional assumptions on $\tilde\psi_n$ and $\tilde \phi_n$ assuring that the number of graphs providing a nontrivial contribution to each $\tilde \omega^{\langle d \rangle}_n$ in $\tilde\omega_n=\sum_{d}\hbar^{d}\omega^{\langle d\rangle}_n$ is finite. For instance, it is sufficient to assume that the expansion of one of the two systems of differentials, say, $\{\tilde \psi_n\}$, in~$\hbar$ does not contain terms with $\hbar^{\leq 0}$, and the expansion of the other one, say, $\{\tilde \phi_n\}$, is regular in $\hbar$.
		
In particular, these expansions can be topological with extra vanishing of unstable terms. We remind the reader that, by definition, the expansion in $\hbar$ of a system of differentials, say $\{\tilde \psi_n\}$, is called \emph{topological}  if $\tilde \psi_n^{\langle d\rangle}\not=0$ only for $d = 2g-2+n$ where $g\in \Z_{\geq 0}$. By the \emph{unstable terms} of a topological expansion we mean $\tilde \psi^{\langle 0\rangle}_1$ and $\tilde \psi^{\langle 0\rangle}_2$.
	\end{remark}

	\begin{remark} \label{rem:topological-expansion}  Assume that the expansions of both $\tilde \psi_n$ and $\tilde \phi_n$ in $\hbar$ are topological and both $\tilde \psi^{\langle 0\rangle}_1$ and $\tilde \phi^{\langle 0\rangle}_1$ vanish, as well as at least one of two differentials $\tilde \psi_2^{\langle 0\rangle}$ and $\tilde \phi_2^{\langle 0\rangle}$. Then the convolution differentials $\tilde \omega_n$ are well defined and their expansion in $\hbar$ is topological as well.
	\end{remark}
	
	\begin{remark}
	If the convolution of all differentials of two collections vanishes, that is, $\psi_m*\phi_n=0$ for all $m,n\geq 0$, then the system of differentials $\tilde\omega_n$ of convolution of the differentials  $\{\tilde \psi_n\}$ and $\{\tilde \phi_n\}$ is nothing but their sum,
	\begin{equation}\
	\tilde\omega_n=\tilde \psi_n+\tilde \phi_n.
	\end{equation}
	\end{remark}
	
	
	\subsection{Expression in terms of potentials}\label{S_potentials}
	
	Assume that there is a sequence (finite or infinite) of meromorphic $1$-differentials $d\xi_1,d\xi_2,\dots$ such that for all $(d,n)$ the differential $\tilde\psi_n^{\langle d\rangle}$ admits an expansion of the form
	\begin{equation}
		\tilde\psi_n^{\langle d\rangle}(p_{\set n})=\sum_{k_1,\dots,k_n}
		c^{\langle d\rangle}_{k_1,\dots,k_n}d\xi_{k_1}(p_1)\dots d\xi_{k_n}(p_n)
	\end{equation}
	with some numerical coefficients $c^{\langle d\rangle}_{k_1,\dots,k_n}$ and with \emph{finitely many nonzero terms} for any fixed $(d,n)$. Then, all information on these coefficients, and hence, the differentials $\tilde\psi_n^{\langle d\rangle}$ can be encoded in a single generating series $F(t_1,t_2,\dots;\hbar)$ called \emph{potential} (cf.~\eqref{eq:Kr-omega})
	\begin{equation}
		F(t)=\sum_{n=1}^\infty\frac1{n!}\sum_{k_1,\dots,k_n}\sum_{d=0}^\infty \hbar^dc^{\langle d\rangle}_{k_1,\dots,k_n} t_{k_1}\dots t_{k_n}.
	\end{equation}
	
	In the setting of Definition~\ref{def:convolution-system}, assume that both systems of differentials $\{\tilde\psi_n\}$ and $\{\tilde \phi_n\}$ correspond to some potentials, that is, they are given by
	\begin{equation}
		\begin{aligned}
			\tilde\psi_n(p_{\set n})&=\restr{t}{0}\prod_{i=1}^n\Bigl({\textstyle\sum_k} d\xi_k(p_i)\frac{\partial}{\partial t_k}\Bigr) F^\psi(t),
			\\\tilde\phi_n(p_{\set n})&=\restr{s}{0}\prod_{i=1}^n\Bigl({\textstyle\sum_\ell} d\eta_\ell(p_i)\frac{\partial}{\partial s_\ell}\Bigr)F^\phi(s),
		\end{aligned}
	\end{equation}
	for two power series $F^\psi(t_1,t_2,\dots)$, $F^\phi(s_1,s_2,\dots)$ and two sequences of $1$-differentials $(d\xi_1,d\xi_2,\dots)$, $(d\eta_1,d\eta_2,\dots)$. By assumption, the differentials $d\xi_k$ may have poles at the points of~$\cP$ while $d\eta_\ell$ are holomorphic at these points.
	
	\begin{proposition} If both $\{\tilde\psi_n\}$ and $\{\tilde \phi_n\}$ admit expansions in terms of potentials, then, the corresponding convolution differentials $\tilde \omega_n$ also admit an expansion in terms of a potential,
		\begin{equation}
			\tilde\omega_n(p_{\set n})=\restr{t,s}{0}\prod_{i=1}^n\Bigl({\textstyle \sum_k} d\xi_k(p_i)\frac{\partial}{\partial t_k}
			+{\textstyle\sum_\ell} d\eta_\ell(p_i)\frac{\partial}{\partial s_\ell}\Bigr)F^\omega(t,s),
		\end{equation}
		and the potential $F^\omega$ is given explicitly by a Moyal-type product
		\begin{equation}
			\begin{aligned}
				e^{F^\omega(t,s)}&=e^{\sum_{k,\ell}b_{k,\ell}\frac{\partial^2}{\partial t_k\partial s_\ell}}e^{F^\psi(t)}e^{F^\phi(s)},
				\\ b_{k,\ell}&=d\xi_k*d\eta_\ell=\sum_{q\in\cP}\res\limits_{p=q}d\xi_k(p)\int_q^p d\eta_\ell.
			\end{aligned}
		\end{equation}
	\end{proposition}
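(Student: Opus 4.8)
The plan is to recognize both sides as generating functions for the same Feynman-type diagrams and to deduce the identity from the linked-cluster (Wick) theorem. The crucial point is that the convolution operator $*$ becomes the bilinear form $b_{k,\ell}$ once the vertices are written through the potentials: since $\tilde\psi_m$ expands in the $d\xi_k$-basis and $\tilde\phi_m$ in the $d\eta_\ell$-basis, an internal edge of a graph $\Gamma\in G_n$, carrying the operator $*$ applied in the $d\xi_k$-variable of its incident $\psi$-vertex and the $d\eta_\ell$-variable of its incident $\phi$-vertex, contributes exactly $\sum_{k,\ell} b_{k,\ell}\,\partial_{t_k}\partial_{s_\ell}$, because $b_{k,\ell}=d\xi_k*d\eta_\ell$. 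Likewise an external edge attached to the leaf $\ell_i$ contributes $\sum_k d\xi_k(p_i)\,\partial_{t_k}$ or $\sum_\ell d\eta_\ell(p_i)\,\partial_{s_\ell}$ according to whether the incident vertex is a $\psi$- or $\phi$-vertex, which is precisely the $i$-th factor of the combined potential operator $\prod_{i=1}^n\bigl(\sum_k d\xi_k(p_i)\partial_{t_k}+\sum_\ell d\eta_\ell(p_i)\partial_{s_\ell}\bigr)$ acting on $F^\omega(t,s)$.

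First I would set $Z(t,s):=e^{\sum_{k,\ell}b_{k,\ell}\partial_{t_k}\partial_{s_\ell}}\,e^{F^\psi(t)}\,e^{F^\phi(s)}$, the claimed right-hand side, and expand the operator exponential as $\sum_m \tfrac1{m!}\bigl(\sum_{k,\ell}b_{k,\ell}\partial_{t_k}\partial_{s_\ell}\bigr)^m$. Letting the $t$-derivatives act on $e^{F^\psi}$ and the $s$-derivatives on $e^{F^\phi}$, each resulting monomial is indexed by a graph whose $\psi$-vertices are the derivative clusters $\partial_{t_{k_1}}\cdots\partial_{t_{k_r}}F^\psi$ (equal to the symmetric coefficients of $\tilde\psi_r$, the $\tfrac1{r!}$ in $F^\psi$ being absorbed by the $r!$ ways of attaching the $r$ derivatives to the half-edges of the vertex), whose $\phi$-vertices are the analogous clusters of $F^\phi$, and whose internal edges are the $b_{k,\ell}$-pairings produced by the propagator. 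Since the propagator couples a $t$-derivative to an $s$-derivative, every internal edge joins a $\psi$-vertex to a $\phi$-vertex, and every uncontracted derivative is an external leg attached to a single vertex, exactly matching the combinatorial type of the graphs in $G_n$ but without the connectivity constraint.

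Next I would invoke the linked-cluster theorem, which for a generating function of this shape identifies $\log Z(t,s)$ with the same sum restricted to connected graphs. As the graphs $\Gamma\in G_n$ are precisely the connected ones, applying the combined potential operator to $\log Z$ and restricting to $t=s=0$ attaches the $n$ labelled external legs and reproduces $\sum_{\Gamma\in G_n}\mathsf{w}(\Gamma)/|\Aut(\Gamma)|=\tilde\omega_n$. Thus $\log Z$ is the connected potential $F^\omega$ that generates the $\tilde\omega_n$, which is exactly the asserted identity $e^{F^\omega}=Z$. The finiteness hypotheses recalled in Remark~\ref{rem:finiteness} ensure that all of these formal series are well defined coefficient-by-coefficient in $\hbar$, so the manipulations are legitimate.

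The main obstacle is the exact bookkeeping of the symmetry factors, i.e.\ verifying that the weight of each graph is $1/|\Aut(\Gamma)|$ and not some other rational number. I would handle this by the standard device of first expanding everything with fully labelled half-edges, where every Wick contraction is distinct and the generating function is manifestly the Moyal product $Z$, and only then passing to isomorphism classes of graphs. At that stage the orbit--stabilizer principle converts the product of the elementary factorials coming from the propagator's $\tfrac1{m!}$ and from the potentials' $\tfrac1{r!}$ into the single factor $|\Aut(\Gamma)|$ in the denominator, where $\Aut(\Gamma)$ is the group permuting parallel internal edges and interchangeable vertices while fixing the labelled leaves. The connectivity restriction imposed by taking $\log Z$ is what matches the requirement that the graphs in $G_n$ be connected.
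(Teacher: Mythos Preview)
The paper states this proposition without proof, treating it as a standard combinatorial fact. Your argument is correct and is exactly the natural one: substituting the potential expansions into Definition~\ref{def:convolution-system} converts each internal edge operator $*$ into the scalar pairing $b_{k,\ell}=d\xi_k*d\eta_\ell$, so the graph sum for $\tilde\omega_n$ becomes a standard Feynman expansion with propagator $\sum_{k,\ell}b_{k,\ell}\partial_{t_k}\partial_{s_\ell}$, interaction vertices $F^\psi$ and $F^\phi$, and $n$ external legs given by the mixed derivations $\sum_k d\xi_k(p_i)\partial_{t_k}+\sum_\ell d\eta_\ell(p_i)\partial_{s_\ell}$; the Moyal-type exponential then generates the disconnected sum and the linked-cluster theorem gives the connected version $F^\omega=\log Z$. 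Your handling of the automorphism factors via labelled half-edges and orbit--stabilizer is the standard bookkeeping and is fine.
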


	\subsection{Duality in the global setting}

	\begin{proposition} \label{prop:duality-convolution-global}
		Assume that $\Sigma$ is compact, the only possible poles of the differentials $\tilde\psi_n$ are those of the form $p_i=q$, $q\in\cP$, while the only possible poles of the differentials $\tilde \phi_n$ are those of the form $p_i=q$, $q\in\cP^\vee$, where $p_1,\dots,p_n$ are the arguments of $\tilde \psi_n$ or $\tilde\phi_n$, respectively, and $\cP$ and $\cP^\vee$ are two nonintersecting finite sets. Let residues and the $\mathfrak{A}$-periods of these differentials vanish for all $n$,
		\begin{equation}
		\res\limits_{p=q}\tilde{\psi}_n(p,p_{\set {n-1}})=0, \quad q\in\cP, \quad \quad \res\limits_{p=q}\tilde{\phi}_n(p,p_{\set {n-1}})=0, \quad q\in\cP^\vee,
		\end{equation}
		and
		\begin{equation}
			\oint_{p\in\mathfrak{A}_i}\tilde{\psi}_n(p,p_{\set {n-1}})=0, \quad \oint_{p\in\mathfrak{A}_i}\tilde{\phi}_n(p,p_{\set {n-1}})=0, \quad i=1,\dots,g(\Sigma),
		\end{equation}
		where $g(\Sigma)$ is the genus of the curve $\Sigma$.
		
		 Then, the convolution differentials of two systems of differentials $\{\tilde\psi_n\}$ and $\{\tilde\phi_n\}$ coincide with those of the systems  $\{\tilde\phi_n\}$ and $\{\tilde\psi_n\}$ (with the role of $\tilde\psi_n$ and $\tilde\phi_n$ swapped).
	\end{proposition}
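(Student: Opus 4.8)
The plan is to reduce the statement to a single-edge duality for the star product $*$ and then lift that identity to the full graph sum of Definition~\ref{def:convolution-system}. The two convolutions are indexed by the same set $G_n$, up to the involution that exchanges the colours of the $\psi$- and $\phi$-vertices. This involution preserves the combinatorial type of each graph, hence its automorphism group, and it matches vertex decorations: a white vertex of weight $\tilde\psi_m$ in the convolution of $\{\tilde\psi_n\}$ and $\{\tilde\phi_n\}$ becomes a black vertex of the same weight $\tilde\psi_m$ in the convolution with swapped roles, and symmetrically for $\tilde\phi$-vertices. Consequently the equality of the two systems of convolution differentials will follow once I show that every internal edge contributes the same in both pictures, i.e. that for the $1$-forms obtained by freezing all but one argument of $\tilde\psi_m$ and of $\tilde\phi_{m'}$ (the remaining arguments serving as spectators) one has the single-variable identity $\psi*\phi=\phi*\psi$, where on the left the residues are taken at the poles $\cP$ of $\psi$ and on the right at the poles $\cP^\vee$ of $\phi$. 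Since the differentials are symmetric and have no diagonal poles, and $\cP\cap\cP^\vee=\emptyset$, the spectator arguments introduce no extra poles in the edge variable, so this reduction is legitimate and the whole argument may be carried out coefficientwise in $\hbar$ and graph by graph.

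Next I would establish this single-edge identity. Let $\Phi,\Psi$ be primitives of $\phi,\psi$ in the edge variable. Because $\res_{p=q}\psi=0$ for $q\in\cP$ and $\res_{p=q}\phi=0$ for $q\in\cP^\vee$, the integration constant in $\int_q^p$ is irrelevant, and one may rewrite $\psi*\phi=\sum_{q\in\cP}\res_{p=q}\psi\,\Phi$ and $\phi*\psi=\sum_{q\in\cP^\vee}\res_{p=q}\phi\,\Psi$. The $1$-form $\psi\,\Phi$ is meromorphic with poles only at $\cP$ (coming from $\psi$, as $\Phi$ is holomorphic there) and at $\cP^\vee$ (coming from $\Phi$, as $\psi$ is holomorphic there). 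Integrating by parts, $\psi\,\Phi=d(\Psi\Phi)-\Psi\,\phi$, so at each $q\in\cP^\vee$ one has $\res_{p=q}\psi\,\Phi=-\res_{p=q}\Psi\,\phi$; summing gives $\sum_{q\in\cP^\vee}\res_{p=q}\psi\,\Phi=-\phi*\psi$. Hence the total residue of $\psi\,\Phi$ on $\Sigma$ equals $\psi*\phi-\phi*\psi$.

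The crux, which I expect to be the main obstacle, is to show this total residue vanishes by a Riemann bilinear computation on the compact surface $\Sigma$. Since all residues of $\psi$ and $\phi$ vanish, the primitives $\Psi,\Phi$ are single-valued on the cut surface $\hat\Sigma$ associated with a canonical homology basis $\mathfrak{A}_1,\mathfrak{B}_1,\dots,\mathfrak{A}_g,\mathfrak{B}_g$, so $\psi\,\Phi$ is a single-valued meromorphic $1$-form there. The residue theorem then reads $\sum_{\text{poles}}\res\psi\,\Phi=\tfrac{1}{2\pi\ii}\oint_{\partial\hat\Sigma}\psi\,\Phi$, and tracking the jumps of $\Phi$ across the $\mathfrak{A}$- and $\mathfrak{B}$-cuts evaluates the right-hand side as $\sum_{i=1}^g\bigl(\oint_{\mathfrak{A}_i}\phi\oint_{\mathfrak{B}_i}\psi-\oint_{\mathfrak{B}_i}\phi\oint_{\mathfrak{A}_i}\psi\bigr)$. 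The delicate part is precisely to verify that the boundary contributions organise into this period bilinear form; once they do, each summand contains exactly one $\mathfrak{A}$-period, of $\psi$ or of $\phi$, all of which vanish by hypothesis.

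Combining the two computations yields $\psi*\phi-\phi*\psi=0$, the desired single-edge duality. The first step then upgrades it to the equality of the two systems of convolution differentials: under the colour-swapping bijection of $G_n$ the vertex decorations and external (leaf) restrictions agree verbatim, while each internal edge evaluates to $\tilde\psi*\tilde\phi$ in one picture and $\tilde\phi*\tilde\psi$ in the other, and these coincide by the single-edge identity. Hence $\mathsf{w}(\Gamma)=\mathsf{w}(\Gamma')$ for corresponding graphs, and summing over $G_n$ with the matching automorphism factors gives the claimed equality of convolutions.
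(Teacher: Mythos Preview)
Your proof is correct and follows the same approach as the paper: both reduce the statement to the single-edge identity $\psi*\phi=\phi*\psi$ for the star product, established via integration by parts and the residue theorem on the compact curve. The paper compresses the entire argument into the chain of equalities~\eqref{eq:Star} without comment; your version spells out the Riemann bilinear computation and checks that the boundary-period contributions vanish under the $\mathfrak{A}$-period hypotheses, which is exactly the content hidden in the paper's passage from $-\sum_{q\in\cP}\res_{p=q}\phi(p)\int_q^p\psi$ to $\sum_{q\in\cP^\vee}\res_{p=q}\phi(p)\int_q^p\psi$.
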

	
	\begin{proof}
		Indeed, for two global differential $1$-forms $\psi$ and $\phi$ with possible poles at $\cP$ and $\cP^\vee$, respectively, we have
		\begin{equation}\label{eq:Star}
			\psi*\phi=\sum_{q\in\cP}\res\limits_{p=q}\psi(p)\int_q^p\phi
			=-\sum_{q\in\cP}\res\limits_{p=q}\phi(p)\int_q^p\psi
			=\sum_{q\in\cP^\vee}\res\limits_{p=q}\phi(p)\int_q^p\psi=\phi*\psi.
		\end{equation}
	\end{proof}
	
	\begin{remark}
		Note that the $*$-operation on the left hand side and the right hand side of Equation~\eqref{eq:Star} has different meaning: on the left hand side it is taken with respect to the set $\cP$, and on the right hand side it is taken with respect to the set $\cP^\vee$.
	\end{remark}
	
	\subsection{Conservation of KP integrability}
	
	Let $B_0$ be a symmetric bidifferential of the following special form
	\begin{equation}
		B_0(p_1,p_2)=\frac{dz(p_1)dz(p_2)}{(z(p_1)-z(p_2))^2},
	\end{equation}
	where $z$ is a fixed meromorphic function on~$\Sigma$. Set
	\begin{equation}
		\psi_n=\tilde\psi_n+\delta_{n,2}B_0,\quad
		\phi_n=\tilde\phi_n+\delta_{n,2}B_0,\quad
		\omega_n=\tilde\omega_n+\delta_{n,2}B_0,
	\end{equation}
	where the differentials $\{\tilde\omega_n\}$ are obtained by convolution of the systems of differentials $\{\tilde \psi_n\}$ and $\{\tilde\phi_n\}$.
	
	Note that the function $z$ defines a local embedding of $\Sigma$ into $\mathbb{C}P^1$. 	More precisely, we replace the curve $\Sigma$ by its open subset $U$ (with a smooth boundary $\partial U$) containing~$\cP$ and assume that $z$ used in the definition of $B_0$ defines an embedding of $U$ into $\C P^1$.
	Assume that $\{\tilde\psi_n\}$ extends via this embedding to the whole $\mathbb{C}P^1$ regularly outside the image of $U$ and $\{\tilde\phi_n\}$ is regular on $U$.

	\begin{theorem}\label{th:KP-conv}
		If the systems of differentials $\{\psi_n\}$ and $\{\phi_n\}$ are KP integrable and the assumptions on the regularity of $\{\tilde \psi_n\}$ outside $U$ (on $\mathbb{C}P^1$) and  of $\{\tilde \phi_n\}$ inside $U$ are satisfied, then the system of differentials $\{\omega_n\}$ is also KP integrable.
	\end{theorem}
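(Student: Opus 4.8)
The plan is to exhibit a kernel $K^\omega$ witnessing the KP integrability of $\{\omega_n\}$ and to verify the determinantal identities~\eqref{eq:omega1K}--\eqref{eq:omeganK} for it. By the equivalence recalled in Section~\ref{sec:KPintegrability} this is the same as checking that the tau function associated with $\{\omega_n\}$ satisfies the Hirota relations~\eqref{eq:Hirota-KP}, and by Lemma~\ref{lem:Hirota-Omega} the latter reduces to the vanishing of a single sum of residues of products of extended differentials. I would therefore organize the whole argument around the extended $n|n$ differentials $\Omega^\bullet_n$: the target is to express the convolved $\Omega^{\bullet,\omega}_n$ through the kernels $K^\psi$, $K^\phi$ of the two input systems and then to read off a determinantal presentation of the form $\Omega^{\bullet,\omega}_n=\det\|K^\omega(p^+_i,p^-_j)\|$.

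First I would build the candidate kernel. Working at a regular point $q$ with local coordinate $z$, I use the potential presentation of Section~\ref{S_potentials}: both $\{\psi_n\}$ and $\{\phi_n\}$ are encoded by potentials $F^\psi$, $F^\phi$, and the Moyal-type formula $e^{F^\omega(t,s)}=e^{\sum_{k,\ell}b_{k,\ell}\partial_{t_k}\partial_{s_\ell}}e^{F^\psi(t)}e^{F^\phi(s)}$ exhibits the convolved tau function $\tau^\omega$ as a Gaussian coupling of $\tau^\psi$ and $\tau^\phi$ with propagator $b_{k,\ell}=d\xi_k*d\eta_\ell$. In the language of Section~\ref{Sec.VEV} the coupling operator is a quadratic exponential in the current modes $J_k$, which is precisely the type of transformation one expects to preserve the KP class; this is the heuristic reason the statement should hold. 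Equivalently, and more usefully for the global form, I would define $K^\omega$ by resumming the graph expansion of Definition~\ref{def:convolution-system}: the connected two-leaf building block is a chain of alternating $\psi$- and $\phi$-vertices glued by the $*$-pairing, so $K^\omega$ is a geometric-series dressing of $K^\psi$ by the transfer operation built from $K^\phi$ and $*$.

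Next I would establish the determinantal identity for $\Omega^{\bullet,\omega}_n$ and deduce~\eqref{eq:omeganK}. Since both inputs already satisfy the determinantal identities~\eqref{eq:Omega-bullet-det}, I would substitute these connected-determinant weights into the graph sum~\eqref{eq:GraphicalFormulaomega} and reorganize; the cyclic structure of $\det\nolimits^\circ$ is exactly the cyclic concatenation of the alternating $\psi$--$\phi$ chains around the $n$ leaves, which is what makes the complicated tree-like graphs collapse into a connected determinant of the single block $K^\omega$. With the determinantal identities in hand, the Hirota relation for $\omega$ follows exactly as in Section~\ref{sec:Hirota-proof}: write the sum of residues in~\eqref{eq:Hirota-Omega} for $\Omega^{\bullet,\omega}$, expand by the determinantal identity, and observe the pairwise cancellation of residues, now merging contours across $\cP$ and $\partial U$ to reduce it to the already-known relations for $\psi$ and $\phi$ and invoking the duality of Proposition~\ref{prop:duality-convolution-global}.

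I expect the genuine difficulty to be the global analytic control rather than the formal bookkeeping. Near the diagonal everything is a convergent formal manipulation, but the theorem asks for a \emph{globally} meromorphic kernel $K^\omega$ (or an $\hbar$-series with meromorphic coefficients) and for the determinantal identities to hold on all of $\Sigma^{2n}$; the only mechanism producing this is the embedding $z\colon U\hookrightarrow\C P^1$ together with the complementary regularity of $\tilde\psi$ outside $U$ and of $\tilde\phi$ inside $U$, which pins down the poles of the resummed kernel and legitimizes every contour deformation. A secondary point requiring care is the finiteness of the graph sum coefficient-by-coefficient in $\hbar$, which must be arranged as in Remark~\ref{rem:finiteness} so that the geometric-series dressing defining $K^\omega$ terminates at each order.
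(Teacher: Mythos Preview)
Your plan is a direct attack: build $K^\omega$ explicitly and verify the determinantal identities by plugging the known connected-determinant formulas for $\psi$ and $\phi$ into the graph sum and ``reorganizing''. The paper does not do this, and the sketch as written has a genuine gap precisely at that reorganization step.

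The combinatorial picture you invoke --- ``the connected two-leaf building block is a chain of alternating $\psi$- and $\phi$-vertices'' and ``the cyclic structure of $\det\nolimits^\circ$ is the cyclic concatenation of chains around the $n$ leaves'' --- does not match the graphs in Definition~\ref{def:convolution-system}. Those are arbitrary connected bipartite multigraphs: vertices of any valence, multiple edges, loops. Substituting $\tilde\psi_m=\det\nolimits^\circ\|K^\psi\|$ and $\tilde\phi_m=\det\nolimits^\circ\|K^\phi\|$ into such a graph gives a product of connected determinants of different sizes glued by $*$-pairings; collapsing all of that to a single $\det\nolimits^\circ\|K^\omega\|$ is a nontrivial identity for which you offer no mechanism. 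The Moyal-product heuristic does not close this gap either: the operator $e^{\sum b_{k,\ell}\partial_{t_k}\partial_{s_\ell}}$ is a second-order differential operator coupling two tau functions in \emph{separate} sets of times, not an element of the KP symmetry group acting on a single tau function, so nothing automatic says the result lies in the KP class once you project to the times of a single expansion point. Your appeal to Proposition~\ref{prop:duality-convolution-global} for the contour manipulations is also misplaced here, since that proposition assumes compact $\Sigma$ and vanishing residues and $\mathfrak A$-periods, none of which are hypotheses of Theorem~\ref{th:KP-conv}.

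The paper sidesteps this combinatorics by a homotopy argument. Proposition~\ref{prop:def-omegas} shows that an infinitesimal KP-preserving deformation of $\{\phi_n\}$ by $\cW^\phi_n(p^+,p^-;\cdot)$ induces the deformation of $\{\omega_n\}$ by $\cW^\omega_n(p^+,p^-;\cdot)$, which is again KP-preserving by Proposition~\ref{prop:infinitesimal-deformations}. Proposition~\ref{prop:deformations-KP} then shows that the space of ``regular'' KP-integrable $\{\phi_n\}$ on $U$ is connected and these deformations span its tangent space, so one can flow $\{\phi_n\}$ to the trivial system $\delta_{n,2}B_0$. At that endpoint the convolution collapses to $\{\omega_n\}=\{\psi_n\}$, KP integrable by hypothesis, and KP integrability propagates back along the flow. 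The only combinatorial work is the soft resummation in Lemma~\ref{lem:Gamma-accent} and the proof of Proposition~\ref{prop:def-omegas}, which replaces your hard identity by the statement that $\cW^\phi\mapsto\cW^\omega$ under convolution.
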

	
	\subsection{Proof of Theorem~\ref{th:KP-conv}}
	
Let $\oint$ denote the sum of residues at the points of $\cP$.
	
	The proof of Theorem~\ref{th:KP-conv} is based on two fundamental properties of the convolution formula. In a nutshell, we have an interplay of the following two principles:
	\begin{itemize}
		\item Infinitesimal deformations of $\{\psi_n\}$ and $\{\phi_n\}$  that preserve KP integrability imply infinitesimal deformations of $\{\omega_n\}$ that preserve KP integrability;
		\item It is possible to fully trivialize $\{\psi_n\}$ or $\{\phi_n\}$ with the help of  deformations that preserve KP integrability. If $\{\psi_n\}$ (resp., $\{\phi_n\}$) is trivialized, that is, $\psi_n=\delta_{n,2}B_0$ (resp., $\phi_n=\delta_{n,2}B_0$), then $\{\omega_n\}=\{\phi_n\}$ (resp., $\{\omega_n\}=\{\psi_n\}$).
	\end{itemize}
	Once we formulate and prove the propositions realizing these two principles (see Proposition~\ref{prop:def-omegas} and Proposition~\ref{prop:deformations-KP} below, respectively), we are armed to complete the proof of Theorem~\ref{th:KP-conv}.
	
	\begin{remark} \label{rem:phi-psi-symmetry}
		Note that though $\psi$- and $\phi$-differentials play a bit different roles, they still enter the convolution formula for $\{\tilde\omega_n\}$ in a quite symmetric way, cf.~also Proposition~\ref{prop:duality-convolution-global}. So, in practice we have two possible choices for the exposition of the proof: we can choose which system of differentials we want to trivialize using the deformations preserving the KP integrability. The technical details are easier to follow in the case of trivialization of the $\phi$-differentials, so the argument below is presented for this case only.
	\end{remark}
	
	\subsubsection{Deformations preserving KP integrability}
	
	We consider a deformation problem for the differentials $\{\phi_n\}$ with respect to a small parameter $\epsilon$. In order to do this we need to enhance the notation:
	
	\begin{notation} Equations~\eqref{eq:cT-omega}--\eqref{eq:cW-def2} associate a system of differentials $\{\cW_n\}$ to a ``source'' system of differentials $\{\omega_n\}$. In an argument below we use the $\cW$-differentials for several different systems of the source differentials; in this case we enhance the notation by a superscript that refers to the source system. In particular, $\{\cW^\omega_n\}$ (resp., $\{\cW^\phi_n\}$), are the $\cW$-differentials associated to the source system $\{\omega_n\}$ (resp., $\{\phi_n\}$).
	\end{notation}

	Let $\phi_n(\epsilon; p_{\set n}) = \phi_n(p_{\set n}) + \epsilon \cW^\phi_n(p^+,p^-;p_{\set n})+O(\epsilon^2)$, where $p^+,p^-\in U$. In this case, Equation~\eqref{eq:omega_n-without-tilde} defines a system of differentials $\{\omega_n(\epsilon)\}$ that also depend on $\epsilon$.
	\begin{proposition} \label{prop:def-omegas} In this setting we have that $\omega_n(\epsilon; p_{\set n}) = \omega_n(p_{\set n}) + \epsilon \cW^\omega_n(p^+,p^-,p_{\set n})+O(\epsilon^2)$.
	\end{proposition}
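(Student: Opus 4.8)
The plan is to establish the identity at first order in $\epsilon$ directly from the graphical convolution formula~\eqref{eq:GraphicalFormulaomega}. Since the weight $\mathsf{w}(\Gamma)$ of a convolution graph is multilinear in the decorations of its vertices, and the term $\delta_{n,2}B_0$ in $\omega_n=\tilde\omega_n+\delta_{n,2}B_0$ does not depend on $\epsilon$, the coefficient of $\epsilon$ in $\omega_n(\epsilon)$ is obtained by summing, over all $\Gamma\in G_n$ with weight $1/|\Aut(\Gamma)|$, the graphs in which the decoration $\mathsf{w}(v)=\tilde\phi_m$ of exactly one $\phi$-vertex $v$ (of valence $m$) is replaced by its variation $\cW^\phi_m(p^+,p^-;\cdot)$. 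I would first record the small technical point that each $\cW^\phi_m$ is an admissible convolution input, i.e.\ regular on the diagonals of its $p$-arguments: this holds because $\cW^\phi$ is an infinitesimal KP symmetry (Proposition~\ref{prop:infinitesimal-deformations}), and the fixed diagonal singularity of $\phi_2$ forces $\cW^\phi_2$ to be regular on the diagonal. Thus $\delta\omega_n$ is a sum over convolution graphs carrying a single marked $\phi$-vertex whose decoration is a $\cW^\phi$-differential.

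Next I would bring both the computed $\delta\omega_n$ and the desired answer $\cW^\omega_n(p^+,p^-;p_{\set n})$ into a common combinatorial form. On the source side, substituting~\eqref{eq:cW-def2} for the marked $\cW^\phi_m$ splits the corresponding vertex into a cluster: a product over a set partition of its incident edges of factors $\cT^\phi_{|J_\beta|}(p^+,p^-;\cdot)$, together with the scalar vacuum factor $e^{\cT^\phi_0(p^+,p^-)}$ and the prefactor $\tfrac{\sqrt{dz^+}\sqrt{dz^-}}{z^+-z^-}$; by~\eqref{eq:cT-omega} each $\cT^\phi$ is itself a $\phi$-vertex of larger valence some of whose legs are integrated from $p^-$ to $p^+$ and closed off. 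On the target side I would expand $\cW^\omega_n$ by~\eqref{eq:cT-omega}--\eqref{eq:cW-def2} with source system $\{\omega_n\}$, and then resolve every $\omega$-differential appearing inside via~\eqref{eq:GraphicalFormulaomega}; this rewrites $\cW^\omega_n$ as a sum over convolution graphs, organized into blocks indexed by a partition of $\set n$, carrying the same prefactor and a vacuum factor $e^{\cT^\omega_0(p^+,p^-)}$, and with a number of external legs integrated from $p^-$ to $p^+$. The claim then becomes a term-by-term matching of these two graph sums, reconciling the partition and connectedness bookkeeping (the factors $\tfrac{(-1)^{\ell-1}}{\ell}$ and $\tfrac1{k!}$), the vacuum exponentials, and the automorphism orders $|\Aut(\Gamma)|$.

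The crux, and the step I expect to be the main obstacle, is a genuine mismatch in the leg structure: in the target expansion the integrated external legs of $\omega$ attach to both $\psi$- and $\phi$-vertices, whereas in the source expansion only $\phi$-vertices acquire legs integrated from $p^-$ to $p^+$. To resolve this I would pass to the vacuum-expectation-value and potential realization of Sections~\ref{Sec.VEV} and~\ref{S_potentials}. There the $\cW$-deformation is the infinitesimal KP symmetry given by inserting the vertex operator $X(z^+,z^-)$ into the tau function, and the convolution is the Moyal-type coupling $D=\exp\bigl(\sum_{k,\ell}b_{k,\ell}\,\partial_{t_k}\partial_{s_\ell}\bigr)$ with $b_{k,\ell}=d\xi_k*d\eta_\ell$ acting on $e^{F^\psi(t)}e^{F^\phi(s)}$. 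Since $X^\phi(p^+,p^-)$ acts only in the $s$-variables, the deformation of the output reads $\delta\tau^\omega=D\bigl[e^{F^\psi}\,X^\phi e^{F^\phi}\bigr]=\bigl(DX^\phi D^{-1}\bigr)\tau^\omega$, so the whole statement reduces to the intertwining relation $DX^\phi D^{-1}=X^\omega$, where $X^\omega$ is the vertex operator of the combined $\omega$-system. Conjugation by the Gaussian $D$ fixes the annihilation modes $\partial_{s_\ell}$ and dresses the creation modes by $s_m\mapsto s_m+\sum_k b_{k,m}\partial_{t_k}$; the essential non-formal point is to verify that this dressing — governed precisely by the $*$-pairing $b_{k,\ell}=d\xi_k*d\eta_\ell$, which is exactly where the regularity of $\tilde\phi$ inside $U$ and of $\tilde\psi$ outside $U$ enters — turns $X^\phi$ into $X^\omega$. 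Combinatorially this is the statement that the legs attaching to $\psi$-vertices in $\cW^\omega_n$ are the image, under the Moyal re-coupling, of the $\phi$-leg integrations of $\cW^\phi$. Once the intertwining is proved, translating back through~\eqref{eq:omega-VEV} and Proposition~\ref{prop:infinitesimal-deformations} identifies $\delta\omega_n$ with $\cW^\omega_n(p^+,p^-;p_{\set n})$, which is the assertion.
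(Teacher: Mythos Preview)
Your opening move is correct: the coefficient of $\epsilon$ in $\omega_n(\epsilon)$ is indeed obtained by marking one $\phi$-vertex and replacing its decoration by $\cW^\phi_m$, and you correctly identify the central difficulty --- when you expand $\cW^\omega_n$ via~\eqref{eq:cT-omega}--\eqref{eq:cW-def2} and then plug in the convolution formula for the $\omega$'s, the integrated legs $\int_{p^-}^{p^+}$ land on both $\psi$- and $\phi$-vertices, whereas on the source side they land only on $\phi$-vertices.

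Where your proposal diverges from the paper is in the resolution of this mismatch. The paper does not pass to the VEV/potential formalism; instead it first proves a reformulation of the convolution formula (Lemma~\ref{lem:Gamma-accent}) using the projection identity $\tilde\psi_n(\cdot,p'_{\set{n-1}}) * B_0(\cdot,p') = \tilde\psi_n(p',p'_{\set{n-1}})$. This rewrites $\omega_n$ as a sum over a modified set of graphs $G'_n$ in which \emph{every} leaf is attached to a $\phi$-vertex (with weights $\phi_m$ rather than $\tilde\phi_m$ on those vertices). Once this is done, the mismatch simply disappears: expanding $\cW^\omega_n$ by~\eqref{eq:cT-omega}--\eqref{eq:cW-def2} and substituting the $G'_n$-formula for each $\omega_m$, all integrated legs necessarily attach to $\phi$-vertices, and a straightforward resummation over the $\phi$-vertices adjacent to the special leaf $\ell_\pm$ reproduces $\cW^\phi_m$ on the nose. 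The hypothesis $p^+,p^-\in U$ is used here, since the integrated legs now feed into arguments of $\phi$-differentials, which are regular on $U$. You missed this lemma, and it is the key idea.

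Your alternative route via the intertwining $D X^\phi D^{-1}=X^\omega$ is not obviously salvageable as stated. First, the potential formalism of Section~\ref{S_potentials} is available only under the extra finite-expansion hypotheses on $\tilde\psi_n$ and $\tilde\phi_n$ spelled out there, which are not assumed in the general setting of the proposition. Second, and more seriously, the identity $D X^\phi D^{-1}=X^\omega$ is not a formality: the operator $X^\omega$ is naturally defined relative to a choice of expansion point and the full $(d\xi,d\eta)$ basis, and conjugating $X^\phi$ by the Gaussian $D$ dresses only the $s$-creation modes by $t$-derivatives --- it does not produce the $t$-creation part that $X^\omega$ carries. So the asserted intertwining, which you yourself flag as ``the essential non-formal point'', is where the argument actually stalls rather than where it closes.
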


	As a first step towards the proof of this proposition, we state an alternative formula for $\omega_n=\tilde \omega_n + \delta_{n,2}B_0$. Note that
	\begin{align} \label{eq:B0-phi}
		\tilde{\psi}_n(\cdot,p'_{\set{n-1}})*B_0(\cdot,p')=\oint_p \tilde{\psi}_n(\cdot,p'_{\set{n-1}})	\int_q^p B_0(\cdot ,p')  = \tilde{\psi}_n(p',p'_{\set{n-1}}),
	\end{align}
	where the second equality is a version of the projection property that is satisfied since $\{\tilde\psi_n\}$ extends to~$\C P^1$ regularly outside $U$.
	Now, consider a new set of graphs $G_n'$, which is exactly the same as $G_n$, but now all leaves are attached to the $\phi$-vertices. Moreover, in the definition of the weight $\mathsf{w}(G)$ for $\Gamma\in G_n'$ we use the weight $\mathsf{w}(v) = \tilde\phi_m + \delta_{m,2} B_0=\phi_m$ for the $\phi$-vertices of index $m$ except for the vertices of index two, where both edges are further attached to the $\psi$-vertices. The weight of the latter vertices remains unchanged and equal to $\tilde\phi_2= \phi_2-B_0$.
	
	\begin{lemma} \label{lem:Gamma-accent} We have:
		\begin{align} \label{eq:omega_n-without-tilde}
			\omega_n =  \sum_{\Gamma\in G'_{n}} \frac{\mathsf{w}(\Gamma)}{|\Aut(\Gamma)|}.
		\end{align}
	\end{lemma}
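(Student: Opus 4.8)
The plan is to prove the identity by reorganizing the graph sum on the right-hand side of \eqref{eq:omega_n-without-tilde}, using the projection property \eqref{eq:B0-phi} as the single analytic input; everything else is bookkeeping of graphs and their automorphisms. I would start from the $G'_n$ sum and reduce it to $\omega_n=\tilde\omega_n+\delta_{n,2}B_0$, rather than the other way around.

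First I would decompose the weights of the index-two $\phi$-vertices. For every non-special index-two $\phi$-vertex $v$, whose weight is $\phi_2=\tilde\phi_2+B_0$, I split the weight into the two summands $\tilde\phi_2$ and $B_0$. Expanding the resulting product over all such vertices turns $\sum_{\Gamma\in G'_n}\mathsf{w}(\Gamma)/|\Aut(\Gamma)|$ into a sum over the same graphs, with each non-special index-two $\phi$-vertex additionally marked as being of ``$\tilde\phi_2$-type'' or ``$B_0$-type''. The special index-two $\phi$-vertices, together with all $\phi$-vertices of index $m\ne2$, keep their weight $\tilde\phi_m$, since $\delta_{m,2}=0$ for $m\ne2$ and since the special vertices are by definition left untouched.

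Next I would eliminate the $B_0$-type vertices. By definition a non-special index-two $\phi$-vertex carries at least one leaf, so a $B_0$-type vertex is of exactly one of two kinds. If both its edges are leaves, then by connectivity the whole graph is this single vertex; as the leaves are labelled its automorphism group is trivial, and it contributes $B_0(p_1,p_2)$, accounting precisely for the term $\delta_{n,2}B_0$. If it carries one leaf and one internal edge to a $\psi$-vertex $w$, I collapse it: the projection property \eqref{eq:B0-phi} states exactly that convolving the decoration $\tilde\psi_k$ of $w$ with $B_0(\cdot,p_i)$ over $\cP$ reproduces $\tilde\psi_k$ with the collapsed argument set to the leaf position $p_i$. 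Thus removing the $B_0$-vertex and reattaching its leaf directly to $w$ leaves the total weight unchanged and produces a graph of $G_n$, in which leaves are now allowed on $\psi$-vertices, decorated by the tilded weights.

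The remaining point, and the one I expect to be the main obstacle, is to check that this collapse is a weight- and automorphism-preserving bijection between the $B_0$-free marked graphs and $G_n$ (together with the single exceptional graph producing $\delta_{n,2}B_0$). The inverse map inserts a $B_0$-vertex on every leaf-edge attached to a $\psi$-vertex, and one must verify it is well defined and inverse to the collapse. For the automorphism count I would exploit that the leaves are \emph{ordered}: any automorphism fixes every leaf, hence fixes the unique vertex carrying that leaf, and therefore fixes all leaf-bearing vertices—in particular all inserted $B_0$-vertices. Consequently insertion and collapse induce an isomorphism of automorphism groups, $|\Aut(\Gamma')|=|\Aut(\Gamma)|$, and since \eqref{eq:B0-phi} gives $\mathsf{w}(\Gamma')=\mathsf{w}(\Gamma)$ for the matched terms, the ratios $\mathsf{w}/|\Aut|$ agree termwise. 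Summing over all graphs then yields $\sum_{\Gamma\in G_n}\mathsf{w}(\Gamma)/|\Aut(\Gamma)|+\delta_{n,2}B_0=\tilde\omega_n+\delta_{n,2}B_0=\omega_n$, as claimed.
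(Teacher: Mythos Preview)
Your proposal is correct and follows essentially the same approach as the paper: split $\phi_2=\tilde\phi_2+B_0$ at the non-special index-two $\phi$-vertices, collapse each $B_0$-piece into the adjacent $\psi$-vertex via the projection identity \eqref{eq:B0-phi}, and observe that the sole non-collapsible configuration is the isolated $B_0$-vertex with two leaves, contributing $\delta_{n,2}B_0$. The paper's proof is terser and does not spell out the automorphism bookkeeping, but your argument that every non-special index-two $\phi$-vertex carries a labelled leaf and is therefore fixed by any automorphism is exactly what makes that bookkeeping work.
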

	\begin{proof}
		Indeed, in order to match the graphs in $G'_{n}$ and in $G_n$ we notice that
		\begin{align}
			\vcenter{\xymatrix@C=10pt@R=0pt{
					&  & \rule{20pt}{0pt} &  &
					\\
					& & & *+[o][F*]{}\ar@{}[u]|(.8){B_0} \ar@{-}@/_/[lld]|{*} \ar@{-}[r] & *+{\txt{\tiny $i$}}
					\\
					\vdots & *+[o][F-]{} \ar@{-}[l]+<5pt,7pt> \ar@{-}[l]+<5pt,-7pt> \ar@{-}[rr]+<-5pt,7pt>|{*} \ar@{-}[rr]+<-5pt,-7pt>|{*}
					\ar@{}[ru]^(.15){\tilde\psi} & & \vdots &
			}}
			=
			\vcenter{
				\xymatrix@C=10pt@R=0pt{
					{\txt{\tiny $i$}} \ar@{-}[rdd] &  & \rule{20pt}{0pt} &
					\\
					& & &
					\\
					\vdots & *+[o][F-]{} \ar@{-}[l]+<5pt,7pt> \ar@{-}[l]+<5pt,-7pt> \ar@{-}[rr]+<-5pt,7pt>|{*} \ar@{-}[rr]+<-5pt,-7pt>|{*}
					\ar@{}[ur]^(.15){\tilde\psi} & & \vdots
			}},
		\end{align}
		which is just Equation~\eqref{eq:B0-phi} applied inside weights of the graphs. The only exceptional contribution of graphs in $G_n'$ that does not have an interpretation in terms of graphs in $G_n$ is the one given by
		\begin{align}
			\vcenter{
				\xymatrix@C=10pt@R=0pt{
					& \txt{\tiny $1$} &
					\\
					*+[o][F*]{}\ar@{}[u]|(.8){B_0} \ar@{-}[ru] \ar@{-}[rd] &
					\\
					& \txt{\tiny $2$}   \ \
				}
			}
		\end{align}
		for $n=2$, which gives the extra summand $B_0(p_1,p_2)$. This implies that on the left hand side of Equation~\eqref{eq:omega_n-without-tilde} we have $\omega_n$ rather than $\tilde \omega_n$.
	\end{proof}
	
	With this lemma, we can proceed to the proof of the proposition.
	
	\begin{proof}[Proof of Proposition~\ref{prop:def-omegas}] Consider the sum over graphs that one obtains by plugging~\eqref{eq:omega_n-without-tilde} into Equations~\eqref{eq:cT-omega}--\eqref{eq:cW-def2}. It can be described as follows.
		
		\begin{itemize}
			\item
			For $n\geq 1$ define a set $G'_{\cW_n}$ of graphs $\Gamma$ with some additional structure and properties:
			\begin{itemize}
				\item The set of vertices $V(\Gamma)$ splits into three subsets $V(\Gamma)=V_\ell(\Gamma)\sqcup V_{\psi}(\Gamma)\sqcup V_{\phi}(\Gamma)$, which we call leaves, $\psi$-vertices and $\phi$-vertices, respectively.
				\item The set $V_\ell(\Gamma)$ consists of $n+1$ ordered leaves $\ell_1,\dots,\ell_n$, and $\ell_{\pm}$ that is, there is a fixed bijection $\set {n+1} \to V_\ell(\Gamma)$ such that $i\mapsto \ell_i$, $n+1\mapsto \ell_\pm$.
				\item Each edge in the set of edges $E(\Gamma)$ either connects a leaf to a $\phi$-vertex, or connects a $\psi$-vertex to a $\phi$-vertex. Accordingly, the set $E(\Gamma)$ splits as $E(\Gamma)=E_{\ell}(\Gamma)\sqcup E_{{\psi}-\phi}(\Gamma)$.
				\item There is exactly one edge attached to each leaf $\ell_i$, and an arbitrary number of edges attached to $\ell_\pm$.
				\item The graph $\Gamma$ is connected.
			\end{itemize}
			\item To each graph $\Gamma\in G'_{\cW_n}$ we associate its weight $\mathsf{w}_\cW(\Gamma)$ defined as follows.
			\begin{itemize}
				\item 
				A $\psi$-vertex (resp., $\phi$-vertex) $v$ of $\Gamma$ with $m$ edges attached to it is decorated by $\mathsf{w}(v)\coloneqq \tilde\psi_{m}$ (resp., $\mathsf{w}(v)\coloneqq \phi_{m}$), and the arguments of $\mathsf{w}(v)$ are put in bijection with the edges attached to $v$. The only exceptions are the $\phi$-vertices of index $2$: if both edges are connected to $\ell_\pm$, or its both edges are connected to $\psi$-vertices, then such $\phi$-vertex is decorated by $\mathsf{w}(v)\coloneqq\tilde \phi_{2}$.
				\item 	The edge $e\in E_{\ell}(\Gamma)$ attached to $\ell_i$ is decorated by the operator $O(e)\coloneqq \restr {} {p_i}$.
				\item 	Each edge $e\in E_{\ell}(\Gamma)$ attached to $\ell_\pm$ is decorated by the operator $O(e)\coloneqq \int_{p^-}^{p^+}$.
				\item Each edge $e\in E_{{\psi}-\phi}(\Gamma)$ is decorated by the bilinear operator $O(e)\coloneqq *$ applied to the decorations of the $\psi$-vertex and $\phi$-vertex incident to $e$ in the variable corresponding to $e$.
				\item The weight $\mathsf{w} _\cW(\Gamma)$ is defined as
				\begin{align}
					\mathsf{w}_\cW(\Gamma) \coloneqq 
					\prod_{e\in E(\Gamma)} O(e)
					\bigg(\prod_{v\in V_{\psi}(\Gamma)} \mathsf{w}(v) \prod_{v\in V_{\phi}(\Gamma)} \mathsf{w}(v)\bigg),
				\end{align}
				where the action of the operators associated to the edges on the decorations of the vertices is prescribed by the graph $\Gamma$.
			\end{itemize}
			\item
			The graphical interpretation of Equations~\eqref{eq:cT-omega}--\eqref{eq:cW-def2}, where we substituted $\omega$'s given by \eqref{eq:omega_n-without-tilde} reads:
			\begin{align} \label{eq:GraphicalFormula-cW-omega}
				\cW_n(p^+,p^-;p_{\set n})  =  \frac{\sqrt{dz(p^+)}\sqrt{dz(p^-)}}{z(p^+)-z(p^-)}\sum_{\Gamma\in G'_{\cW}} \frac{\mathsf{w}(\Gamma)}{|\Aut(\Gamma)|}.
			\end{align}	
		\end{itemize}
		
		Now we can apply the following resummation to~\eqref{eq:GraphicalFormula-cW-omega}. Namely, consider all $\phi$-vertices connected by an edge to $\ell_\pm$. Assume there are $m$ further edges attached to the union of these vertices. Fix the rest of the graph, including the attachments of these latter $m$ edges. Take the sum of the factors in the weight $\mathsf{w}_\cW$ over all possible subgraphs that can be formed by all $\phi$-vertices connected by an edge to $\ell_\pm$ with $m$ further edges attached to them. With an extra factor of $\frac{\sqrt{dz(p^+)}\sqrt{dz(p^-)}}{z(p^+)-z(p^-)}$ this sum gives a graphical interpretation of Equations~\eqref{eq:cT-omega}--\eqref{eq:cW-def2} for $\cW^\phi_m(p^+,p^-;p'_{\set m})$, where the last $m$ variables are put in correspondence with the $m$ further edges.
		
		Thus, the right hand side of Equation~\eqref{eq:GraphicalFormula-cW-omega} can be rewritten as a sum over $G'_n$, where to exactly one of the $\phi$-vertices we assign the weight $\cW^\phi_m(p^+,p^-)$ instead of $\phi_m$. The latter expression is equal to the coefficient of $\epsilon^1$ in~\eqref{eq:omega_n-without-tilde}, where the weight of each $\phi$-vertex of index $m$ is equal to $\phi_m(\epsilon)$. In other words, \eqref{eq:GraphicalFormula-cW-omega} is equal to the coefficient of $\epsilon^1$ in~\eqref{eq:omega_n-without-tilde} with the deformed $\phi$-differentials, which is precisely the statement of the proposition.
	\end{proof}
	
	\subsubsection{Deformations preserving KP integrability}
	Here we discuss the second principle mentioned above:
	\begin{proposition}\label{prop:deformations-KP} By specializations of the deformations given by
		\begin{align} \label{eq:deformations-cW}
			\phi_n(\epsilon; p_{\set n}) = \phi_n(p_{\set n}) + \epsilon \cW^\phi_n(p^+,p^-,p_{\set n})+O(\epsilon^2)	
		\end{align}	
		we span the full tangent space to the space of KP integrable differentials holomorphic on $U$ (outside the diagonal of $U^2$ for $n=2$, where $\phi_2$ must have a double pole with bi-residue $1$), and the latter space is connected.
	\end{proposition}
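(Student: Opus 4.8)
The plan is to translate the statement into the local Fock-space description of Section~\ref{Sec.VEV} and to exploit the two facts recorded there: that the differentials $\cW^\phi_n$ are infinitesimal KP symmetries, and that all KP tau functions form a single orbit of the connected group of KP symmetries. Recall from Proposition~\ref{prop:infinitesimal-deformations} that the deformation $\phi_n\mapsto\phi_n+\epsilon\,\cW^\phi_n(p^+,p^-;p_{\set n})$ is induced by the rank-one variation $K^\phi\mapsto K^\phi-\epsilon\,K^\phi(\cdot,p^-)K^\phi(p^+,\cdot)$ of the corresponding kernel and that it preserves the determinantal identities to first order; hence it is automatically tangent to the locus of KP integrable systems. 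Under the correspondence of Section~\ref{Sec.VEV} between systems of differentials and points of the Sato Grassmannian, expanding $\cW^\phi_n(p^+,p^-)$ in local coordinates at $p^\pm$ reproduces precisely the action of the generators $E_{i,j}$, as is visible by comparing the rank-one kernel $K^\phi(\cdot,p^-)K^\phi(p^+,\cdot)$ with the generating series $\sum_{i,j}z^j\bar z^{-i-1}E_{i,j}$ of the KP-symmetries subsection. First I would check that for $p^\pm$ lying outside $U$ these deformations are genuinely tangent to the constrained space: the only new poles of $\cW^\phi_n(p^+,p^-;p_{\set n})$ in the variables $p_i$ sit at $p_i=p^\pm$ and thus avoid $U$, while the bi-residue-one double pole of $\phi_2$ on the diagonal is preserved, since it originates from the diagonal singularity of $K^\phi$, which the rank-one variation does not touch.

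For the spanning statement I would argue as follows. A general infinitesimal deformation preserving KP integrability deforms the kernel by $\delta K^\phi=-\iint\mu(p^+,p^-)\,K^\phi(\cdot,p^-)K^\phi(p^+,\cdot)$ for a suitable density $\mu$; this is exactly the assertion, recorded in Section~\ref{Sec.VEV}, that the $E_{i,j}$ exhaust the Lie algebra of KP symmetries, together with the identification of the $E_{i,j}$ with the rank-one kernels above. The induced variation of the differentials is then the same combination $\iint\mu(p^+,p^-)\,\cW^\phi_n(p^+,p^-;p_{\set n})$ of the elementary deformations. Such a variation is holomorphic on $U$ if and only if the new poles, located at $p_i=p^\pm$, lie outside $U$, that is, if and only if $\mu$ is supported on pairs $(p^+,p^-)$ with $p^\pm\notin U$. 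Therefore the tangent space to the space of $U$-holomorphic KP integrable systems is spanned precisely by the specializations of $\cW^\phi_n(p^+,p^-)$ with $p^\pm$ outside $U$, which is the first claim. The main obstacle is exactly this surjectivity onto the \emph{constrained} tangent space: one must verify that every $U$-holomorphic KP deformation is captured by an outside-$U$ density $\mu$ and that no further directions are needed, which is where the precise interplay between the pole bookkeeping and the completeness of the $E_{i,j}$ has to be controlled.

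For connectivity I would integrate the infinitesimal picture. The deformations spanning the tangent space exponentiate to a connected subgroup $\mathcal{G}_U$ of the group of KP symmetries, namely the subgroup generated by the symmetries whose action preserves holomorphicity on $U$. The trivial system $\phi_n=\delta_{n,2}B_0$ is $U$-holomorphic and, by the remark following Proposition~\ref{prop:trivialKP}, corresponds to the tau function $\tau\equiv1$ in a suitable coordinate; I take it as the base point. Since all KP tau functions constitute a single orbit of the connected group of KP symmetries, and the spanning result identifies the tangent directions along the $U$-holomorphic locus with the Lie algebra of $\mathcal{G}_U$, the $U$-holomorphic locus is the $\mathcal{G}_U$-orbit of this base point and is therefore connected; in particular any $U$-holomorphic KP integrable system is joined to the trivial one by a path staying inside the locus. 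The delicate point here mirrors the one above: one has to ensure that $\mathcal{G}_U$ acts transitively on the whole locus rather than on a proper sub-orbit, which I expect to settle by combining the tangent-space computation with the connectedness of the ambient Sato Grassmannian.
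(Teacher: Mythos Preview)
Your outline captures the right infinitesimal picture---rank-one perturbations $K^\phi\mapsto K^\phi-\epsilon\,K^\phi(\cdot,p^-)K^\phi(p^+,\cdot)$ induce the deformations $\cW^\phi_n$---but two concrete steps do not close, and the paper's argument is considerably more direct on both.

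\textbf{Spanning.} Your move of pushing $p^\pm$ \emph{outside} $U$ so that the individual $\cW^\phi_n(p^+,p^-;\cdot)$ are holomorphic on $U$ is the wrong direction, and the subsequent claim that a $U$-holomorphic deformation forces $\mu$ to be supported on $(p^+,p^-)\notin U\times U$ is false: a contour integral in $p^\pm$ can produce a holomorphic result from summands with poles in $U$. More seriously, in the setup of the theorem the curve may well be cut down to~$U$ itself, so there is no ``outside $U$'' to place $p^\pm$. The paper avoids this entirely by working at the level of kernels: every ``regular'' KP integrable system comes from a kernel $K$ with $K-\tfrac{\sqrt{dz_1}\sqrt{dz_2}}{z_1-z_2}$ holomorphic on $U^2$, the map \{regular kernels\}$\to$\{regular systems\} is surjective on tangent spaces, and any holomorphic $\Delta K$ is recovered by the Cauchy-type identity
\[
\Delta K(z_1,z_2) = -\oint_{p'_1}\oint_{p'_2} \Delta K\bigl(z(p'_1),z(p'_2)\bigr)\,K(z_1,z(p'_1))\,K(z(p'_2),z_2),
\]
which exhibits the induced $\Delta\phi_n$ as the matching integral of $\cW^\phi_n(p'_1,p'_2;\cdot)$. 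This is the missing surjectivity step you flag as the ``main obstacle''; your $E_{i,j}$ discussion does not supply it for the \emph{constrained} tangent space.

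\textbf{Connectedness.} Your orbit argument is circular: you define $\mathcal G_U$ as the group generated by the tangent directions and then assert the locus is its orbit, but that is precisely what needs proof, and ``combining the tangent-space computation with connectedness of the ambient Grassmannian'' does not by itself yield transitivity on a subvariety. The paper simply writes down the straight-line path of kernels
\[
K(t;z_1,z_2)=(1-t)K(z_1,z_2)+t\,\tfrac{\sqrt{dz_1}\sqrt{dz_2}}{z_1-z_2},\qquad 0\le t\le 1,
\]
which stays regular for all $t$, and applies the determinantal formulas to get a path in the locus from the given $\{\phi_n\}$ to $\{\delta_{n,2}B_0\}$. No group theory, no Fock space: the whole proof lives at the level of kernels and is a couple of lines.
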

	\begin{proof} It is essentially a revised version of the statement of Proposition~\ref{prop:infinitesimal-deformations} from a different perspective. Indeed, every system of KP integrable differentials $ \{\phi_n\}$ such that $\phi_n-\delta_{n,2} B_0$ are holomorphic on $U^n$ (we call them ``regular'' KP integrable differentials) can be represented via determinantal formulas~\eqref{eq:omeganK} in terms of a 
		 kernel $K(z_1,z_2)$ such that $K(z_1,z_2)-\frac{\sqrt{dz_1}\sqrt{dz_2}}{z_1-z_2}$ is holomorphic on $U^2$ (we call such kernels ``regular'' kernels). Hence, the determinantal formulas induce an epimorphism from the tangent space to the space of ``regular'' kernels to the tangent space of the space of ``regular'' KP integrable differentials.
		
		Consider an arbitrary deformation of a ``regular'' $K(z_1,z_2)$ given by $K(\epsilon; z_1,z_2)=K(z_1,z_2)+\epsilon \Delta K(z_1,z_2)+O(\epsilon^2)$, where $\Delta K(z_1,z_2)$ is holomorphic in both arguments. Proposition~\ref{prop:infinitesimal-deformations} used in combination with
		\begin{align}
			\Delta K(z_1,z_2) = -\oint_{p_1'}\oint_{p_2'} \Delta K(z(p_1'),z(p_2')) K(z_1,z(p_1'))K(z(p_2'),z_2)
		\end{align}
		implies that the induced deformation of the differentials given by the determinantal formulas is given by
		\begin{align}
			\phi_n(\epsilon;p_{\set n}) = \phi_n (p_{\set n}) + \epsilon \oint_{p'_1}\oint_{p'_2} 	 \Delta K(z(p'_1),z(p'_2))\cW^\phi_n(p'_1,p'_2;p_{\set n}) + O(\epsilon^2),
		\end{align}
		which is indeed a specialization of the deformations~\eqref{eq:deformations-cW}.
		
		To conclude the proof, we only have to show that the space of ``regular'' KP integrable differentials is connected. To this end, notice that for an arbitrary ``regular'' kernel $K(z_1,z_2)$ the following formula
		\begin{align}
			K(t;z_1,z_2)\coloneqq (1-t)K(z_1,z_2)+t\tfrac{\sqrt{dz_1}\sqrt{dz_2}}{z_1-z_2}
		\end{align}
		for $0\leq t\leq 1$ connects $K(z_1,z_2)=K(0;z_1,z_2)$ through the space of ``regular'' kernels to $\tfrac{\sqrt{dz_1}\sqrt{dz_2}}{z_1-z_2}=K(1;z_1,z_2)$, and the determinantal formulas applied to $K(t;z_1,z_2)$ give a path in the space of ``regular'' KP integrable differentials connecting the given ones $\phi_n(p_{\set n})=\det\nolimits^\circ\|K(0;p_i,p_j)\|_{i,j=1,\dots,n}$ to $\delta_{n,2}B_0(p_1,p_2) = \det\nolimits^\circ\|K(1;p_i,p_j)\|_{i,j=1,\dots,n}$.
	\end{proof}
	
	\subsubsection{Final steps of the proof}
	
	Now we can proceed with the proof of the theorem.
	
	
	
	
	\begin{proof}[Proof of Theorem~\ref{th:KP-conv}]
		Proposition~\ref{prop:deformations-KP} implies that we can deform $\{\phi_n\}$ while preserving the KP integrability of the system, transforming it into $\{\overline{\phi}_n \coloneqq \delta_{n,2} B_0\}$. Simultaneously, the differentials $\{\omega_n\}$ undergo a corresponding deformation to a new set $\{\overline{\omega}_n\}$.
		
		Since, according to Proposition~\ref{prop:def-omegas}, the deformations of $\{\omega_n\}$ that accompany the deformations of $\{\phi_n\}$ also preserve KP integrability, it follows that the KP integrability of $\{\overline{\omega}_n\}$ implies the KP integrability of $\{\omega_n\}$.

On the other hand, in the case $\overline{\phi}_n - \delta_{n,2} B_0=0$, the contribution of all $\phi$-vertices in the graph summation formula of Definition~\ref{def:convolution-system} vanish and we have
$\{\overline{\omega}_n\} = \{\psi_n\}$. By assumption, $\{\psi_n\}$ is KP integrable, which implies that $\{\overline{\omega}_n\}$ is also KP integrable, and thus $\{\omega_n\}$ must be as well.
	\end{proof}

\section{Blobbed topological recursion}

\label{sec:Blobbed}

\subsection{Generalized topological recursion}\label{sec:GenTR}
Topological recursion of Chekhov--Eynard--Orantin \cite{CEO,EO-1st} associates a system of meromorphic differentials $\omega^{(g)}_n$, $g\geq 0$, $n\geq 1$, $2g-2+n\geq 0$, to an input data that consists of a Riemann surface $\Sigma$, a finite set of points $\cP\subset \Sigma$, two meromorphic functions $x$ and $y$ on $\Sigma$  such that the zeroes $\restr{}{q}dx = 0$ are simple and $\restr {} {q} {dy}\not= 0$ for each $q\in \cP$, and a bi-differential $B$ with the double pole on the diagonal with bi-residue $1$.
It has its origin in the computation of the cumulants of the matrix models, and by now it has multiple striking applications in algebraic geometry, enumerative combinatorics, and mathematical physics.

The original setup was generalized in~\cite{alexandrov2024degenerateirregulartopologicalrecursion} to the situation when~$dx$ and~$dy$ are meromorphic differentials with arbitrary behavior at the points of~$\cP$. We review here these definitions in the most general setting of local generalized topological recursion.

\begin{definition}\label{def:GendTR}
The initial data of the generalized topological recursion is a quintuple $(\Sigma,dx,dy,\cP,B)$ where
\begin{itemize}
\item $\Sigma$ is a (not necessarily compact) smooth Riemann surface (a complex curve);
\item $B$ is a symmetric meromorphic bidifferential on $\Sigma^2$ with a pole on the diagonal with biresidue~$1$ and holomorphic in the complement to the diagonal;
\item $\cP\subset\Sigma$ is a finite set; its elements are called \emph{key points};
\item $dx$ and $dy$ are meromorphic $1$-forms defined as germs (or even formal Laurent expansions) at the points of $\cP$.
\end{itemize}
\end{definition}

\begin{remark}
Note that we do not assume that $dx$ or $dy$ extend globally to $\Sigma$, but we require that $B$ is globally defined. On the other hand, we can assume that $\Sigma$ is a union of small disks centered at the key points. In this case $B$ is also defined locally near the key points only.	
\end{remark}

 Denote by $p\mapsto \mathfrak{g}^t_{\partial_x}p$ the time~$t$ phase flow of the vector field $\partial_x$ dual to the differential $dx$. Notice that for any meromorphic function~$f$ we have
\begin{equation}
f(\mathfrak{g}^t_{\partial_x}p)=e^{t\partial_x}f(p),\qquad x(\mathfrak{g}^t_{\partial_x}p)=x(p)+t.
\end{equation}

\begin{definition} We say that a system of differentials
\begin{equation}
\omega_n=\sum_{\substack{g\ge0\\(g,n)\ne(0,1)}}\hbar^{2g-2+n}\omega^{(g)}_n, \qquad n\geq 1,	
\end{equation}
  satisfies generalized topological recursion for the input data $(\Sigma,dx,dy,\cP,B)$ if
$\omega^{(0)}_2=B$ and the following two conditions hold:
\begin{itemize}
\item \emph{Generalized loop equations}: the differential
\begin{equation}\label{eq:loop-generalized}
\sum_{r\ge0}\bigl(-d\tfrac1{dy}\bigr)^r[u^r]e^{u(\cS(u\hbar\partial_x)-1)y}
\restr{q^\pm}{\mathfrak{g}^{\pm\frac{u\hbar}{2}}_{\partial_x}p}\cW_n(q^+,q^-;p_{\set n})
\end{equation}
is holomorphic in $p$ at any key point $q\in\cP$ for all $p_1,\dots,p_n$, where $x=x(p)$, and $y=y(p)$, $u$ is a new formal variable, and the extended differentials $\cW_n$ are defined by~\eqref{eq:cW-def1}--\eqref{eq:cW-def2}.
\item \emph{Projection property}: we have
\begin{equation}\label{eq:proj}
\omega^{(g)}_{n}(p,p_{\set {n-1}})=\sum_{q\in\cP}\res_{\tilde p=q} \omega^{(g)}_{n}(\tilde p,p_{\set {n-1}})\int_q^{\tilde p}B(\cdot,p),
\qquad 2g-2+n>0.
\end{equation}
\end{itemize}
\end{definition}

\begin{remark} These two conditions indeed allow to reconstruct the differentials $\omega^{(g)}_n$ recursively. Namely, the loop equations determine uniquely the principal parts of the poles of~$\omega^{(g)}_n$ at the key points with respect to the first argument inductively from the known differentials $\omega^{(g')}_{n'}$ with $2g'-2+n'<2g-2+n$, while the projection property recovers $\omega^{(g)}_n$ from its polar parts.
\end{remark}

\begin{remark}
 The resulting differentials $\omega^{(g)}_n$ are symmetric, meromorphic, and the only poles they may have are the key points in each argument.

In the case when the spectral curve data is nondegenerate, that is, $dx$ has simple zeroes and $dy$ is holomorphic and nonvanishing at the key points, the differentials of the generalized topological recursion coincide with those of usual CEO recursion.	
\end{remark}

\begin{remark}
It is sometimes convenient to define $\omega^{(0)}_1=y\,dx$, for some $y$ defined as a local primitive of $dy$. However, this differential is not always globally defined and univalued. Therefore, we prefer not to include the $(0,1)$-term to $\omega_1$, by convention. Instead, the contribution of the differentials $dx$ and $dy$ to the generalized loop equations (and, as a consequence, to the relations of the recursion) is included explicitly in~\eqref{eq:loop-generalized}.

On the other hand, the unstable term $\omega^{(0)}_2=B$ is included to $\omega_2$. Notice that the projection property is not satisfied for $(g,n)=(0,2)$: the right hand side of~\eqref{eq:proj} gives zero in this case.
\end{remark}

\begin{remark}
Let us write at a key point~$q\in\cP$ in some local coordinate $z$ centered at $q$
\begin{equation}
dx=a\,z^{r-1}dz\,(1+O(z)),\qquad
dy=b\,z^{s-1}dz\,(1+O(z)),\qquad a\,b\ne0.
\end{equation}
The point $q$ is called \emph{special} if $r+s>0$ and $(r,s)\ne(1,1)$, and \emph{non-special} otherwise. Then the non-special key points do not contribute to the topological recursion and they can be excluded from~$\cP$ without changing the differentials of topological recursion. Thus, one can assume without loss of generality that all key points are special.
\end{remark}

\subsection{Changing the set of key points}

Assume that the set $\cP$ of key points in the initial data of topological recursion splits as $\cP=\cP'\sqcup\cP''$.  Denote by $\psi_n$, $\phi_n$ and $\omega_n$ the differentials of (generalized) topological recursion with the spectral curve data $(\Sigma,dx,dy,\cP',B)$, $(\Sigma,dx,dy,\cP'',B)$, and $(\Sigma,dx,dy,\cP,B)$, respectively, that is, we use the same differentials $dx,dy,B$ but different sets of key points, see \cite{alexandrov2024degenerateirregulartopologicalrecursion} for details.

\begin{proposition}\label{prop:cP-cut} Assume that the curve $\Sigma$ is compact and $B$ is the standard Bergman kernel. 
The differentials $\tilde\omega_n=\omega_n-\delta_{n,2}B$ coincide with the convolution differentials of the systems of differentials $\tilde\psi_n=\psi_n-\delta_{n,2}B$ and $\tilde\phi_n=\phi_n-\delta_{n,2}B$.
\end{proposition}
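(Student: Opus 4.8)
The plan is to exploit the uniqueness of the generalized topological recursion: once $\omega^{(0)}_2=B$ is fixed, the loop equations~\eqref{eq:loop-generalized} determine the principal parts of $\omega^{(g)}_n$ at the key points inductively in $2g-2+n$, and the projection property~\eqref{eq:proj} reconstructs each differential from those principal parts. Hence it suffices to denote by $\hat\omega_n\coloneqq\tilde\omega_n+\delta_{n,2}B$ the system obtained from the convolution $\tilde\omega_n$ of $\{\tilde\psi_n\}$ and $\{\tilde\phi_n\}$, and to verify that $\{\hat\omega_n\}$ satisfies the generalized topological recursion for $(\Sigma,dx,dy,\cP,B)$. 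First I would record that the convolution is well defined with a topological expansion: both $\tilde\psi_n$ and $\tilde\phi_n$ have topological expansions whose unstable $(0,1)$ and $(0,2)$ terms vanish (since $\psi^{(0)}_1=\phi^{(0)}_1=0$ by convention and $\tilde\psi^{(0)}_2=\tilde\phi^{(0)}_2=0$), so Remark~\ref{rem:topological-expansion} applies. Moreover every vertex of a convolution graph carries a stable decoration with $2g-2+m>0$, so any nonempty connected graph has total degree $2g-2+n\ge1$; in particular $\tilde\omega^{(0)}_2=0$, i.e.\ $\hat\omega^{(0)}_2=B$.

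For the projection property I would use that in each graph of Definition~\ref{def:convolution-system} the leaf $\ell_i$ is attached to a single vertex $v$ decorated by $\tilde\psi_m$ (with poles only on $\cP'$) or $\tilde\phi_m$ (with poles only on $\cP''$); thus $\hat\omega_n$ depends on $p_i$ solely through the free argument of $\mathsf{w}(v)$. Since $\psi_m$ and $\phi_m$ both satisfy~\eqref{eq:proj} with the \emph{same} kernel $B$, at $\cP'$ respectively $\cP''$, and since the residue of a $\psi$-decoration (resp.\ $\phi$-decoration) paired with $\int_q^{\tilde p}B(\cdot,p)$ vanishes at the foreign points $q\in\cP''$ (resp.\ $q\in\cP'$) by regularity, the projection identity for $\mathsf{w}(v)$ in the variable $p_i$ can be applied while leaving the rest of the graph (built from the remaining arguments by the $*$-operations) untouched. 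Summing over graphs yields the projection property for $\hat\omega_n$ at the full set $\cP$.

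The remaining and hardest point is the generalized loop equation for $\hat\omega_n$ at each $q\in\cP$; by the $\psi\leftrightarrow\phi$ symmetry of the construction (Remark~\ref{rem:phi-psi-symmetry}, cf.\ also Proposition~\ref{prop:duality-convolution-global}) it is enough to treat $q\in\cP'$. The plan is to express $\cW_n(q^+,q^-;p_{\set n})$ for the system $\{\hat\omega_n\}$ through a graph resummation exactly as in the proof of Proposition~\ref{prop:def-omegas}: the pair $(q^+,q^-)$ entering $\cW_n$ via $\cT_n$ in~\eqref{eq:cT-omega}--\eqref{eq:cW-def2} attaches to a collection of vertices, and resumming that collection produces a single vertex decorated by a localized $\cW^\psi$- or $\cW^\phi$-quantity. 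Taking $q^\pm=\mathfrak{g}^{\pm\frac{u\hbar}{2}}_{\partial_x}p$ with $p\to q\in\cP'$, the $\phi$-decorations together with the $*$-operations (residues at $\cP'$ of $\psi$-legs paired with $\phi$-integrals that are holomorphic at $\cP'$) contribute factors regular in $p$ at $q$, so the polar part in $p$, and hence the entire loop-equation combination~\eqref{eq:loop-generalized}, is governed by the local $\cW^\psi$-data dressed by quantities regular at $q$. As $\{\psi_n\}$ satisfies the loop equation at $q$ by hypothesis, the corresponding combination for $\{\hat\omega_n\}$ is then holomorphic at $q$ as well.

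I expect the genuinely delicate step to be this last one: one must check that the regular $\phi$-dressing of the $\psi$-vertex carrying the $(q^+,q^-)$-insertion commutes with the operator $\sum_{r\ge0}\bigl(-d\tfrac1{dy}\bigr)^r[u^r]e^{u(\cS(u\hbar\partial_x)-1)y}$ of~\eqref{eq:loop-generalized}, so that holomorphicity at $q$ is preserved rather than destroyed by the $u$-expansion. This is precisely where the local character of the loop equation---that it constrains only the behavior in the single argument running to $q$---must be used, keeping track of which graph legs can produce poles at $q$. Once the three properties ($\hat\omega^{(0)}_2=B$, projection, and loop equations) are established, uniqueness of the generalized topological recursion differentials forces $\hat\omega_n=\omega_n$, that is, $\tilde\omega_n=\omega_n-\delta_{n,2}B$ is the convolution of $\{\tilde\psi_n\}$ and $\{\tilde\phi_n\}$, as claimed.
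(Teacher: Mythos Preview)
Your approach is essentially the paper's: verify that the convolution satisfies the generalized TR (projection property plus loop equations), then invoke uniqueness; the projection property is read off from the graph structure, and the loop equations come from the resummation argument of Proposition~\ref{prop:def-omegas} together with the duality of Proposition~\ref{prop:duality-convolution-global} to cover both $\cP'$ and $\cP''$.

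The one place where your plan is vaguer than it needs to be is the ``delicate step'' you flag at the end. In the paper this is not delicate at all: once you pass to the $G'_n$ representation of Lemma~\ref{lem:Gamma-accent} (all leaves on $\phi$-vertices), the resummation does \emph{not} produce a mixture of ``$\cW^\psi$- or $\cW^\phi$-quantities'' as you write, but expresses $\cW^\omega_n(q^+,q^-;p_{\set n})$ as a finite linear combination of $\cW^\phi_m(q^+,q^-;p'_{\set m})$ whose coefficients are operators acting only on the variables $p'_1,\dots,p'_m$. Since the loop-equation operator in~\eqref{eq:loop-generalized} acts only through $q^\pm$ and $p$, it commutes with these coefficients trivially, and holomorphicity at $q\in\cP''$ transfers directly from $\{\phi_n\}$ to $\{\omega_n\}$; then Proposition~\ref{prop:duality-convolution-global} (which is where compactness and the standard Bergman kernel are actually used) swaps the roles to handle $q\in\cP'$. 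So there is no commutation subtlety to check---only the clean factorization that the $G'_n$ representation provides.
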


This proposition means that for computing differentials of topological recursion it is sufficient to apply recursion itself in those cases with a single key point only. Then the TR differentials for the case of several key points are expressed in terms of those with one key point by explicit combinatorial expressions.

For the reformulation in terms of potential given in Sec. \ref{S_potentials}, it means that the potential for the generalized TR can be given as a Moyal-type product of the potentials associated with the key points. This is a reformulation of the Givental decomposition, there the diagonal part of the Givental operators is included into the definitions of the elementary potentials, and only the non-diagonal part describes interaction between different key points. Moreover, here we naturally can consider types of decompositions associated with different collections of the key points, which is close to the observations of \cite{AMM1,AMM2} in the context of matrix models and Virasoro constraints.

\begin{proof} Note that the projection property for $\{\omega_n\}$ follows manifestly from the projection properties for $\{\psi_n\}$ and $\{\phi_n\}$, and the structure of the convolution formula.
	
In order to prove the generalized loop equations at $q\in \cP''$ we use the same trick as in the proof of Proposition~\ref{prop:def-omegas}. Namely, we notice that $\cW^\omega_n(p^+,p^-;p_{\set n})$ can be represented as a linear combination of operators applied to
$\{\cW^\phi_m(p^+,p^-;p'_{\set m})\}$ that act only in the variables associated to $p'_1,p'_2,\dots$. Hence, the expression~\eqref{eq:loop-generalized}  in the generalized loop equations for $\{\omega_n\}$ is represented as a linear combination of operators applied to
\begin{equation}\label{eq:loop-generalized-phi}
	\sum_{r\ge0}\bigl(-d\tfrac1{dy}\bigr)^r[u^r]e^{u(\cS(u\hbar\partial_x)-1)y}
	\restr{q^\pm}{\mathfrak{g}^{\pm\frac{u\hbar}{2}}_{\partial_x}p}\cW^\phi_n(q^+,q^-;p'_{\set m})
\end{equation}
that act only on $p'_1,p'_2,\dots,p'_m$. Note that these expressions are finite for each fixed $n$ and degree of $\hbar$ (cf.~Remark~\ref{rem:topological-expansion}). Thus, the generalized loop equations for $\{\phi_n\}$ imply the generalized loop equations for $\{\omega_n\}$ at the points $q\in\cP''$.

In order to deal with the points $q\in\cP'$ we note that the combinatorics of graphs in the argument of Proposition~\ref{prop:def-omegas} works also with the interchanged roles of the differentials $\{\psi_n\}$ and $\{\phi_n\}$. To this end, we recall Proposition~\ref{prop:duality-convolution-global} --- we have a global setup, and  the residues of $\psi_n$ and $\phi_n$ indeed vanish, since these differential are obtained by topological recursion, so this proposition is applicable. Thus we can change the convolution operation on the edges to use the residues at the points of $\cP''$.  Hence, the argument above with the roles of $\{\psi_n\}$ and $\{\phi_n\}$ interchanged proves the generalized loop equations for $\{\omega_n\}$ at the points $q\in\cP'$.
\end{proof}

\begin{remark} It is important to stress that Proposition~\ref{prop:cP-cut} does not necessarily hold for a non-compact spectral curve. An important requirement is that we can change the convolution operation from the residues at the points of $\cP'$ to the points of $\cP''$, and this requires that the projection formula generates no singularities of the analytic extension of the involved differentials. In short, we can apply Proposition~\ref{prop:cP-cut} with locally defined $dx$ and $dy$ once we know that $B$ extends to the standard Bergman kernel on a global curve.
\end{remark}

\subsection{Definition of the blobbed topological recursion}\label{sec:TR-blobbed}

We are in the setting of Section~\ref{sec:GenTR}, that is, $\Sigma$ is a possibly noncompact Riemann surface and $\cP\subset\Sigma$ is a finite set.

Generalizing the situation of Proposition~\ref{prop:cP-cut}, consider an arbitrary system of symmetric differentials $\{\phi_n\}_{n\geq 1}$ called \emph{blobs}, where $\phi_n$ expands as a formal power series in $\hbar$, $\phi_n = \sum_{d=0}^\infty \hbar^d \phi_n^{\langle d\rangle}$. We assume that all $\phi_n^{\langle d \rangle}$ are holomorphic on $\Sigma^n$ except for $\phi_2^{\langle 0 \rangle}$, whose only singularity is the double pole on the diagonal with bi-residue $1$.

\begin{definition}\label{def:blobbedTR}
	
We associate to the input data $(\Sigma,dx,dy,\cP,\{\phi_n\}_{n\geq 1})$ a system of meromorphic differentials $\{\omega^{\langle d\rangle}_n\}$, $d\geq 0$, $n\geq 1$, which we call the \emph{blobbed TR differentials}, defined as follows.

\begin{itemize}
	\item
First, we choose an arbitrary Bergman kernel $B$ on $\Sigma^2$. Let $\{\psi^{B,(g)}_n\}$, $g\geq 0$, $n\geq 1$, $2g-2+n\geq 0$, be the system of differentials of generalized topological recursion associated to the input data $(\Sigma,dx,dy,\cP,B)$. Denote
\begin{align}\label{eq:89}
	\psi^{B}_n\coloneqq \sum_{\substack{g\ge0\\(g,n)\ne(0,1)}} \hbar^{2g-2+n}\psi^{B,(g)}_n.
\end{align}
\item Second, set
\begin{equation}\label{eq:90}
\tilde\psi^B_n=\psi^{B}_n-\delta_{n,2}B,\qquad
\tilde\phi_n=\phi_n-\delta_{n,2}B.
\end{equation}
\item
Denote by $\{\tilde\omega_n\}$ the convolution differentials of the systems of differentials $\{\tilde\psi^B_n\}$ and $\{\tilde\phi_n\}$.
The blobbed TR differentials $\omega_n=\sum_{d=0}^\infty\hbar^d \omega_n^{\langle d\rangle}$ are defined as
\begin{align} \label{eq:GraphicalFormulaomega}
	\omega_n  = \tilde\omega_n + \delta_{n,2} B.
\end{align}	
\end{itemize}
\end{definition}

\begin{remark} Note that with this definition the expansions of $\tilde\psi_n^B$ in~$\hbar$ contain no $\hbar^0$ terms, and hence, $\omega_n$ are well defined: there are finitely many graphs of Definition~\ref{def:convolution-system} contributing to each particular~$\omega_n^{\langle d\rangle}$.
\end{remark}

\begin{remark}
Note also that $\omega^{\langle 0 \rangle}_1 = \phi^{\langle 0 \rangle}_1$ and $\omega^{\langle 0 \rangle}_2 = \phi^{\langle 0 \rangle}_2$.
\end{remark}

\begin{remark} Assume that $\phi_n =\delta_{n,2} B$. Then $\omega_n = \psi^{B}_n$ are differentials of the usual (generalized) topological recursion.
\end{remark}

\begin{remark} Assume that the expansions of $\phi_n$ in $\hbar$ are topological, that is, $\phi_n^{\langle d\rangle}\not=0$ only if $d = 2g-2+n$ for some $g\geq 0$. Then the expansion of $\omega_n$ in $\hbar$ is topological as well. Exactly this situation was treated in details in~\cite{BS-blobbed} (in the setting of the original CEO topological recursion).
\end{remark}

\begin{example}
For arbitrary spectral curve data $(\Sigma,dx,dy,\cP,B)$ of topological recursion and a subset $\cP'\subset\cP$, the differentials $\omega_n$ of topological recursion with this spectral curve data can be regarded as differentials of blobbed topological recursion with the initial data~$(\Sigma,dx,dy,\cP',\{\phi_n\})$ with a smaller set $\cP'$ of key points and the differentials $\phi_n$ of topological recursion with the spectral curve data $(\Sigma,dx,dy,\cP\setminus\cP',B)$ regarded as blobs. This assertion is a reformulation of Proposition~\ref{prop:cP-cut}.
\end{example}

\subsection{Independence of a choice of~$B$}

\begin{lemma}\label{lem:Independence-of-B}
The blobbed differentials $\omega^{\langle d\rangle}_n$ are symmetric meromorphic differentials that do not depend on the particular choice of $B$.
\end{lemma}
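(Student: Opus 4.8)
The claims that the $\omega^{\langle d\rangle}_n$ are symmetric and meromorphic are immediate from Definition~\ref{def:blobbedTR}: the graph sum of Definition~\ref{def:convolution-system} is manifestly symmetric in the leaves (the set $G_n$ is closed under relabeling) and produces meromorphic differentials, and adding $\delta_{n,2}B$ preserves both. So the content is independence of $B$. I would fix the local data $(\Sigma,dx,dy,\cP)$ and the blobs $\{\phi_n\}$ and regard $\omega_n$ as a function of the auxiliary Bergman kernel. Two admissible kernels differ by a symmetric bidifferential $\gamma$ that is holomorphic on $U^2$ (both carry the same double pole with biresidue $1$ on the diagonal), so the admissible kernels form an affine space, hence a connected one. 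It therefore suffices to show that the first variation vanishes: writing $B_\epsilon=B+\epsilon\gamma$ with $\gamma$ an arbitrary holomorphic symmetric bidifferential, I would prove $\tfrac{d}{d\epsilon}\big|_{0}\omega_n=0$.

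The key input is a variational formula for how the generalized topological recursion differentials $\psi^B_n$ react to a change of $B$. Differentiating in $\epsilon$ the two defining conditions — the generalized loop equations~\eqref{eq:loop-generalized} and the projection property~\eqref{eq:proj} — yields a linear recursion, triangular in $2g-2+n$, for $\dot\psi_n:=\tfrac{d}{d\epsilon}\big|_0\psi^{B_\epsilon}_n$, whose inhomogeneous terms are the explicit occurrences of $B$: the seed $\dot\psi^{(0)}_2=\gamma$ and the extra term $\sum_{q}\res_{\tilde p=q}\psi^{(g)}_n(\tilde p,\cdot)\int_q^{\tilde p}\gamma(\cdot,p)$ in the differentiated projection. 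I would then verify that this recursion is solved by a single holomorphic $\gamma$-insertion, that is, by attaching exactly one $\gamma$-propagator to the $\psi$-structure via the residue pairing $*$ at $\cP$; uniqueness of the generalized recursion identifies $\dot\psi$ with this insertion. Equivalently, $\dot\psi$ is the $O(\lambda)$ part of the convolution of $\{\tilde\psi^B_n\}$ with the trivial system $\{\delta_{n,2}\lambda\gamma\}$, with the bare leaf–leaf term removed.

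With this formula I would compute $\tfrac{d}{d\epsilon}\big|_0\omega_n$ by Leibniz over the graph sum, using Lemma~\ref{lem:Gamma-accent} to present $\omega_n$ over $G'_n$. There are exactly four sources of $\gamma$: (a) the insertion $\dot\psi$ on a $\psi$-vertex (sign $+$); (b) the shift $-\gamma$ inside $\tfrac{d}{d\epsilon}\tilde\psi^B_2=\dot\psi_2-\gamma$ on an index-$2$ $\psi$-vertex (sign $-$); (c) the shift $\tfrac{d}{d\epsilon}\tilde\phi_2=-\gamma$ on an index-$2$ $\phi$-vertex (sign $-$); and (d) the final term $\tfrac{d}{d\epsilon}\delta_{n,2}B=+\delta_{n,2}\gamma$ (sign $+$). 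The mechanism is that every graph carrying one holomorphic $\gamma$-propagator appears twice with opposite signs: a $\gamma$ bridging the $\psi$-structure is produced with $+$ by (a) and, reinterpreting the same $\gamma$ as an index-$2$ $\phi$- or $\psi$-vertex, with $-$ by (b)/(c); the bare $\gamma$ with both ends on leaves ($n=2$) from (c) cancels the final term (d); and at order $\hbar^0$ the index-$2$ pieces cancel because $\dot\psi^{(0)}_2=\gamma$ exactly compensates the shift, consistently with $\tilde\psi^{B,\langle0\rangle}_2\equiv0$. Hence $\tfrac{d}{d\epsilon}\big|_0\omega_n=0$, and connectedness of the kernel space gives the independence. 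The same bookkeeping is most transparent in the potential picture of Section~\ref{S_potentials}, where $\gamma$-dressing is multiplication by a Gaussian bidifferential operator, and the cancellation is associativity of the Moyal product together with additivity of the quadratic shifts.

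The main obstacle I anticipate is the variational formula of the second paragraph: establishing that a change of $B$ in the generalized recursion is implemented by a single $\gamma$-insertion requires controlling the differentiated projection, whose $\gamma$-correction changes the recursion kernel from $B$ to $B+\gamma$, and matching it against the loop-insertion term by term. The delicate points are the signs and symmetry factors when the inserted $\gamma$ attaches to a $\psi$-vertex versus a $\phi$-vertex — here the symmetry of the $*$-pairing for a holomorphic argument, as in Proposition~\ref{prop:duality-convolution-global}, is exactly what makes the two descriptions coincide — together with the careful treatment of the unstable terms, where one must use $\omega^{\langle0\rangle}_1=\phi^{\langle0\rangle}_1$ and $\omega^{\langle0\rangle}_2=\phi^{\langle0\rangle}_2$ to see that no spurious $\gamma$-dependence survives.
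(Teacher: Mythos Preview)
Your approach coincides with the paper's: connect two choices of $B$ by a path, reduce to vanishing of the first variation, feed in the deformation formula for $\psi^B$ under a change of Bergman kernel (the paper's~\eqref{eq:Deformation-B}, which it imports from~\cite{EO-1st,Kazarian} for the CEO case and extends to generalized TR via the limit argument of~\cite{alexandrov2024degenerateirregulartopologicalrecursion}, whereas you propose to rederive it by differentiating the loop equations and the projection property), and then check that in the graph sum the $\gamma$-insertion from $\dot\psi$ cancels against the $-\gamma$ shifts in $\tilde\psi_2,\tilde\phi_2$, with the lone surviving $n=2$ piece absorbed by $+\delta_{n,2}\gamma$. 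Your bookkeeping of the four sources (a)--(d) is a slight rearrangement of the paper's two graph types~\eqref{eq:TwoTypesOfGraphs}, but the cancellation mechanism --- including the vanishing of the spurious graphs where the new $-\gamma$ vertex meets a $\phi$-vertex, because $*$ of two holomorphic forms is zero --- is the same.
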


\begin{example}
A special case of this lemma was observed in \cite[Section 5]{BS-blobbed}. Namely, assume that we have two possible choices for the Bergman kernel, $B_0$ and $B$. Then, the differentials of topological recursion computed with respect to the kernel $B$ (with the trivial blobs) coincide with the differentials of blobbed topological recursion associated with the differential $B_0$ as the Bergman kernel and the differentials $\tilde\phi_n=\delta_{n,2}(B-B_0)$ as the blobs.
\end{example}

The fact that $\{\omega_n\}$ are symmetric and meromorphic follows directly from the definition of the convolution of the systems of differentials. The proof that they do not depend on the choice of $B$ is given in the next Section. 

\subsubsection{Proof of Lemma~\ref{lem:Independence-of-B}} \label{sec:proof-Independence-B}

First, we state a deformation formula for the differentials of the topological recursion.

\begin{proposition}
	Let $\{\psi^{(g)}_n\}$ be the differentials of generalized topological recursion corresponding to the initial data $(\Sigma,dx,dy,\cP,B)$. Consider a deformation of the initial data with respect to a small parameter $\epsilon$ such that $\Sigma$, $dx$, $dy$ and $\cP$ are fixed and $B(\epsilon )=B+\epsilon \Delta B +O(\epsilon^2)$,  where~$\Delta B$ is a symmetric bidifferential holomorphic at $\cP\times\cP$. Let $\{\psi^{(g)}_n(\epsilon)\}$ be the differentials of generalized topological recursion corresponding to the initial data $(\Sigma,dx,dy,\cP,B(\epsilon))$. Then $\{\psi^{(g)}_n(\epsilon)=\psi^{(g)}_n+ \epsilon \Delta\psi^{(g)}_n+O(\epsilon^2) \}$, where
	\begin{align} \label{eq:Deformation-B}
		\Delta\psi^{(g)}_n(p_{\set n}) & =
		\sum_{i=1}^n\sum_{q\in\cP}\res\limits_{p=q}\psi^{(g)}_n(p,p_{\set{n}\setminus\{i\}})\int\limits_q^p\Delta B(\cdot,p_i)
		\\ \notag & \quad +
		\frac12\sum_{q,q'\in\cP}\res\limits_{p=q}\res\limits_{p'=q'}
		\biggl(\psi^{(g-1)}_{n+2}(p,p',p_{\set n})\biggr.
		\\ \notag & \qquad \qquad \qquad +\biggl.\sum_{\substack{g_1+g_2=g \\ I_1\sqcup I_2=\set{n}\\2g_i-2+|I_i|+1>0}}
		\psi^{(g_1)}_{|I_1|+1}(p,p_{I_1})\psi^{(g_2)}_{|I_2|+1}(p',p_{I_2})\biggr)
		\int\limits_q^p\int\limits_{q'}^{p'}\Delta B.
	\end{align}  	
\end{proposition}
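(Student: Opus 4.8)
The plan is to prove the formula by induction on $2g-2+n$, using the two conditions that characterize the differentials of generalized topological recursion: the generalized loop equations, which fix the principal parts of $\psi^{(g)}_n$ at the key points in its first argument in terms of the differentials with smaller $2g-2+n$, and the projection property, which recovers $\psi^{(g)}_n$ from these principal parts. Writing $\psi^{(g)}_n(\epsilon)=\psi^{(g)}_n+\epsilon\,\Delta\psi^{(g)}_n+O(\epsilon^2)$ and differentiating the whole recursion at $\epsilon=0$ produces a linear recursion satisfied by $\Delta\psi^{(g)}_n$. The base of the induction is $(g,n)=(0,2)$, where $\psi^{(0)}_2=B$ gives directly $\Delta\psi^{(0)}_2=\Delta B$; all other differentials fall in the range $2g-2+n>0$ where the two conditions apply. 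Denoting the right-hand side of the claimed formula by $R^{(g)}_n$, the goal is to show that $R^{(g)}_n$ satisfies the same linearized recursion, whence $\Delta\psi^{(g)}_n=R^{(g)}_n$ by uniqueness.

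The conceptual backbone is the observation that $\Delta B$ can enter the recursion in only two places. First, it appears in the projection kernel $\int_q^{\tilde p}B(\cdot,p)$: differentiating the projection property
\begin{equation}
\psi^{(g)}_n(p,p_{\set{n-1}})=\sum_{q\in\cP}\res_{\tilde p=q}\psi^{(g)}_n(\tilde p,p_{\set{n-1}})\int_q^{\tilde p}B(\cdot,p)
\end{equation}
contributes the explicit term $\sum_{q}\res_{\tilde p=q}\psi^{(g)}_n(\tilde p,p_{\set{n-1}})\int_q^{\tilde p}\Delta B(\cdot,p)$, which is exactly the summand $i=1$ of the first line of $R^{(g)}_n$; the summands with $i\ge 2$ are invariant under projection in the first variable (being built from factors that already satisfy it) and are forced by the symmetry of $\Delta\psi^{(g)}_n$ in its $n$ arguments. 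Second, $\Delta B$ enters through the unstable bidifferential $\omega^{(0)}_2=B$ sitting inside the differentials $\cW_n$ that govern the loop equations, most transparently through the $k=2$ term of $\cT_0$ in~\eqref{eq:cT-omega}, whose variation is $\tfrac12\int_{q^-}^{q^+}\int_{q^-}^{q^+}\Delta B$. This is the origin of the double integral $\int_q^p\int_{q'}^{p'}\Delta B$ in the second line, while the accompanying factor $\psi^{(g-1)}_{n+2}+\sum\psi^{(g_1)}\psi^{(g_2)}$ is precisely the disconnected two-point structure that this insertion multiplies after the residues of the loop equation are taken. The indirect dependence of every lower differential on $B$ is absorbed by the inductive hypothesis.

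Concretely, the verification splits into two checks. First, one shows that $R^{(g)}_n$ obeys the linearized projection property: applying the projection operator in the first variable reproduces $R^{(g)}_n$ up to the explicit $\Delta B$-in-kernel term, because every remaining piece is built from differentials that already satisfy the projection property in that variable, and $\Delta B$ is holomorphic at $\cP\times\cP$ so that the integrals $\int\Delta B$ create no new poles there. Second, one checks that the principal part of $R^{(g)}_n$ in the first variable coincides with the one dictated by the linearized loop equations; here the double-residue term supplies exactly the contribution of varying $\omega^{(0)}_2$ inside $\cW_n$, whereas the inductive hypothesis supplies the contribution of the lower differentials. The manifest symmetry of $R^{(g)}_n$ then guarantees consistency across all arguments.

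The main obstacle is the second check: matching principal parts. One must expand $\cW_n$ through~\eqref{eq:cT-omega}--\eqref{eq:cW-def2}, isolate the direct occurrences of $\omega^{(0)}_2=B$, and verify that the resulting variation reassembles into the double residue of the disconnected combination $\psi^{(g-1)}_{n+2}+\sum_{g_1+g_2=g}\psi^{(g_1)}_{|I_1|+1}\psi^{(g_2)}_{|I_2|+1}$ paired with $\int_q^p\int_{q'}^{p'}\Delta B$, with the correct combinatorial weight $\tfrac12$ and the correct stability constraints on the splitting. Keeping track of the interplay between the substitution $q^\pm\to\mathfrak{g}^{\pm\frac{u\hbar}{2}}_{\partial_x}p$ and the extraction of principal parts in the loop equation, together with the double contour integral of $\Delta B$ and the bookkeeping needed to promote the one-variable identity to the fully symmetric statement, is the most delicate part of the argument.
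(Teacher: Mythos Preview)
Your approach is genuinely different from the paper's. You propose a direct inductive proof: differentiate the characterizing conditions (generalized loop equations and projection property) at $\epsilon=0$ and verify that $R^{(g)}_n$ satisfies the resulting linear recursion. The paper instead avoids any direct computation with the generalized loop equations. It cites the formula as known for the original CEO topological recursion, and then reduces the generalized case to the CEO case by an auxiliary deformation: perturb $dx$ by a parameter $\delta$ so that for $\delta\neq 0$ all key points become simple critical points and generalized TR coincides with CEO TR; rewrite the residues as contour integrals over fixed boundaries $\partial D_q$ so that the formula is uniform in $\delta$; invoke the known CEO result for $\delta\neq 0$; then interchange $\lim_{\delta\to 0}$ with $\partial_\epsilon|_{\epsilon=0}$ and use the compatibility of generalized TR with limits.

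What each approach buys: the paper's route is short and robust --- it outsources the hard combinatorics to the already-proved CEO case and only needs analyticity in $(\epsilon,\delta)$ plus the limit theorem for generalized TR. Your route is self-contained but runs straight into the step you yourself flag as ``the main obstacle'': tracking the variation of $\omega^{(0)}_2$ inside $\cW_n$ through the operator $\sum_{r\ge0}(-d\tfrac1{dy})^r[u^r]e^{u(\cS(u\hbar\partial_x)-1)y}$ and matching principal parts against the double-residue term. You describe this step but do not carry it out, and in the generalized setting (arbitrary orders of $dx,dy$ at key points) it is substantially more intricate than in the CEO case. So your proposal is a plausible outline, but as written it is incomplete at exactly the point where the paper's reduction-to-CEO argument pays off.
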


\begin{proof}
	This formula is proved in the context of the original CEO topological recursion in~\cite{EO-1st}, see also an independent proof in~\cite{Kazarian} and an exposition in~\cite[Appendix]{alexandrov2024topologicalrecursionrationalspectral}. It the context of generalized topological recursion we can reduce this statement to the setting of the original CEO topological recursion by the deformation arguments of~\cite[Proof of Theorem 6.4]{alexandrov2024degenerateirregulartopologicalrecursion}. In a nutshell, the idea is to apply the following system of steps:
	\begin{itemize}
		\item Rewrite Equation~\eqref{eq:Deformation-B} replacing the residues at the points $q,q'\in \cP$ by the integrals over the boundary of some system of disks $D_q$, $q\in \cP$:
		\begin{align} \label{eq:Deformation-B-1}
			\Delta\psi^{(g)}_n(p_{\set n}) & =
			\sum_{i=1}^n\sum_{q\in\cP}\oint_{p\in \partial D_q}\psi^{(g)}_n(p,p_{\set{n}\setminus\{i\}})\int\limits_q^p\Delta B(\cdot,p_i)
			\\ \notag & \quad +
			\frac12\sum_{q,q'\in\cP}\oint_{p\in \partial D_q}\oint_{p\in \partial D_{q'}}
			\biggl(\psi^{(g-1)}_{n+2}(p,p',p_{\set n})\biggr.
			\\ \notag & \qquad \qquad \qquad +\biggl.\sum_{\substack{g_1+g_2=g \\ I_1\sqcup I_2=\set{n}\\2g_i-2+|I_i|+1>0}}
			\psi^{(g_1)}_{|I_1|+1}(p,p_{I_1})\psi^{(g_2)}_{|I_2|+1}(p',p_{I_2})\biggr)
			\int\limits_q^p\int\limits_{q'}^{p'}\Delta B.
		\end{align}
		\item Consider an additional deformation of the initial data with respect to a small parameter $\delta$ as in~\cite[Proof of Theorem 6.4]{alexandrov2024degenerateirregulartopologicalrecursion}. That is, $\Sigma$, $dy$, $B(\epsilon)$ are fixed. The differential $dx(\delta)$ is a local deformation of $dx$ inside $\sqcup_{q\in \cP}D_q$, where $dx(\delta)|_{D_q} =  dx+\delta\sum_{q\in\cP} \frac{dz_q}{(z_q-q)^{A_q}}$. Here $z_q$ is a local coordinate on $D_q$ and $A_q\in Z_{>0}$ is strictly bigger than the order of zeroes of $y$ at $q$. Finally, $\cP(\delta)$ is the set of all simple critical points of $dx(\delta)$ in $\sqcup_{q\in \cP}D_q$, which is the same as the set of all \nonregular{} points in this case.
		\item We get a two-parameter deformation family that analytically depends on $\epsilon$ and $\delta$. Thus we can interchange $\lim_{\delta\to 0}$ and $\restr{\epsilon}{0}\partial_\epsilon$.
		\item With this setup for $\delta\not=0$ the equation
		\begin{align} \label{eq:Deformation-B-2}
			\Delta\psi^{(g)}_n(\delta|p_{\set n}) & =
			\sum_{i=1}^n\sum_{q\in\cP}\oint_{p\in \partial D_q}\psi^{(g)}_n(\delta|p,p_{\set{n}\setminus\{i\}})\int\limits_q^p\Delta B(\cdot,p_i)
			\\ \notag & \quad +
			\frac12\sum_{q,q'\in\cP}\oint_{p\in \partial D_q}\oint_{p\in \partial D_{q'}}
			\biggl(\psi^{(g-1)}_{n+2}(\delta|p,p',p_{\set n})\biggr.
			\\ \notag & \qquad \qquad \qquad +\biggl.\sum_{\substack{g_1+g_2=g \\ I_1\sqcup I_2=\set{n}\\2g_i-2+|I_i|+1>0}}
			\psi^{(g_1)}_{|I_1|+1}(\delta|p,p_{I_1})\psi^{(g_2)}_{|I_2|+1}(\delta|p',p_{I_2})\biggr)
			\int\limits_q^p\int\limits_{q'}^{p'}\Delta B
		\end{align}
		is equivalent to the standard deformation formula in the context of the original CEO topological recursion. In order to see this we note that for $\delta\not=0$ the generalized topological recursion coincides with the original CEO topological recursion (for a local function $x(\delta)$ obtained as some primitive of $dx(\delta)$ on $\sqcup_{q\in \cP}D_q$). Moreover, the differentials $\psi^{(g)}_n(\delta)$ have no residues at the points $q\in \cP(\delta)$, $\delta\not=0$, which fixes the discrepancy with the choice of the lower integration limits for $\Delta B$.
		\item Finally, note that since the generalized topological recursion is compatible with the limits~\cite[Theorem 5.3]{alexandrov2024degenerateirregulartopologicalrecursion}, we see that in the limit $\delta\to 0$ we obtain the statement of the lemma in the context of generalized topological recursion.
	\end{itemize}
\end{proof}

Now we can use this deformation formula to prove Lemma~\ref{lem:Independence-of-B}.

\begin{proof}[Proof of Lemma~\ref{lem:Independence-of-B}]
	
	Note that any two different choices of $B$ can be connected by a path of admissible choices. Thus it is sufficient to prove that the infinitesimal deformation of $B(\epsilon) = B+\epsilon \Delta B+ O(\epsilon^2)$ --- according to Definition~\ref{def:blobbedTR} this deformation nontrivially affects $\psi_n^{B(\epsilon)} = \psi_n^B + \epsilon \Delta \psi_n^B + O(\epsilon^2)$ in~\eqref{eq:89} and  the shifts by $B(\epsilon) = B+\epsilon \Delta B+ O(\epsilon^2)$ in~\eqref{eq:90} --- leaves the $\omega_n(\epsilon)$ infinitesimally intact, that is,  $\omega_n(\epsilon) = \omega_n + O(\epsilon^2)$.
	
	Thus, we have to compute the coefficient of $\epsilon^1$ in the following expression:
	\begin{align}
		\{\omega_n(\epsilon)-\delta_{n,2} B(\epsilon)\}
		& \coloneqq \{\psi^{B(\epsilon)}_n-\delta_{n,2}B(\epsilon)\} *  \{\phi_n -\delta_{n,2} B(\epsilon)\},
	\end{align}
	where $[\epsilon^1]\psi^{B(\epsilon)}_n$ is given by Equation~\eqref{eq:Deformation-B}.
	
	According to Definition~\ref{def:convolution-system}, there are two types of terms that contribute to $[\epsilon^1]\omega_n(\epsilon)-\delta_{n,2} \Delta B$ given as a sum over graphs. We basically have to repeat this definition with the either one of the $\psi$-vertices or one of the $\phi$-vertices being distinguished.
	In the case we have a distinguished $\psi$-vertex, it is decorated by $\Delta\tilde\psi_n = [\epsilon^1]\psi^{B(\epsilon)}_n- \delta_{n,2} \Delta B$. In the case we have a distinguished $\phi$-vertex of index $2$, it is decorated by $-\Delta B$. Pictorially, we represent $[\epsilon^1]\omega_n(\epsilon)-\delta_{n,2} \Delta B$ as the sum over graphs of the following two types:
	\begin{align} \label{eq:TwoTypesOfGraphs}
		\vcenter{
			\xymatrix@C=10pt@R=0pt{
				\txt{\tiny 1} \ar@{-}[dr] &   & \rule{30pt}{0pt} & &  &
				\\
				\txt{\tiny 2} \ar@{-}[r] &  *+[o][F-]{} \ar@{}[u]|(.8){\tilde\psi} \ar@{-}@/^5pt/[rrd]|{*}  & &   &  &
				\\
				\vdots \ar@{-}[ur] & & &  *+[o][F*]{}
				\ar@{}[lu]_(.1){\tilde\phi} & &
				\\
				\vdots \ar@{-}[r] & *++[o][F=]{} \ar@{}[u]|(.6){\Delta\tilde \psi} \ar@{-}@/^5pt/[rru]|{*} \ar@{-}@/_5pt/[rru]|{*}  \ar@{-}@/_5pt/[drr]|{*}& &   & &
				\\
				&  & & *+[o][F*]{}  \ar@{}[lu]_(.1){\tilde\phi}  & \vdots \ar@{-}[l] &
				\\
				\vdots \ar@{-}[r]& *+[o][F-]{} \ar@{}[u]|(.4){\tilde\psi} \ar@{-}@/_5pt/[urr]|{*} & & &\txt{\tiny n} \ar@{-}[lu] &
				\\
				\vdots \ar@{-}[ur]
			}
		}
		\qquad \text{and} \qquad
		\vcenter{
			\xymatrix@C=10pt@R=0pt{
				\txt{\tiny 1} \ar@{-}[dr] &   & \rule{30pt}{0pt} & &  &
				\\
				\txt{\tiny 2} \ar@{-}[r] &  *+[o][F-]{} \ar@{}[u]|(.8){\tilde\psi} \ar@{-}@/^5pt/[rrd]|{*}  & &   &  &
				\\
				\vdots \ar@{-}[ur] & & &   *++[o][F-]{} \save *+[o][F*]{} \restore 
				\ar@{}[lu]_(.3){-\Delta B} & &
				\\
				& *+[o][F-]{} \ar@{}[u]|(.6){\tilde\psi} \ar@{-}[rru]|{*} \ar@{-}@/_5pt/[drr]|{*}& &   & &
				\\
				&  & & *+[o][F*]{}  \ar@{}[lu]_(.1){\tilde\phi}  & \vdots \ar@{-}[l] &
				\\
				\vdots \ar@{-}[r]& *+[o][F-]{} \ar@{}[u]|(.4){\tilde\psi} \ar@{-}@/_5pt/[urr]|{*} & & &\txt{\tiny n} \ar@{-}[lu] &
				\\
				\vdots \ar@{-}[ur]
			}
		}
	\end{align}
	Note that the distinguished $\phi$-vertex in the graphs of the second type might also be connected to a leaf, or, in one exceptional graph, it might be connected to two leaves:
	\begin{align} \label{eq:ExceptionalGraph}
		\vcenter{
			\xymatrix@C=10pt@R=0pt{
				& \txt{\tiny 1} \ar@{-}[dl]
				\\
				*++[o][F-]{} \save *+[o][F*]{} \restore 
				\ar@{}[u]^(0.8){-\Delta B} &
				\\
				& \txt{\tiny 2} \ar@{-}[ul]
			}
		}
	\end{align}
	The weight of this exceptional graph of the second type is equal to $-\Delta B(p_1,p_2)$.

	In terms of these graphs (recall the convention that we use $\tilde \psi_m$ and $\tilde \phi_m$ in the decorations, that is, we subtract $B$), Equation~\eqref{eq:Deformation-B} can equivalently be stated as
	\begin{align} \label{eq:deformation-B-graphs}
		\vcenter{
			\xymatrix@C=10pt@R=0pt{
				\txt{\tiny 1} \ar@{-}[dr] &
				\\
				\vdots \ar@{-}[r] &  *++[o][F=]{} \ar@{}[u]|(0.7){\Delta\tilde \psi}
				\\
				\txt{\tiny n} \ar@{-}[ru] &
			}
		}
		& = -\sum_{i=1}^n
		\vcenter{
			\xymatrix@C=10pt@R=5pt{
				\txt{\tiny 1} \ar@{-}[dr] &  & &  \rule{30pt}{0pt}
				\\
				\hat{\vcenter{\hbox{\txt{\tiny\textit{i}}}}} \ \  \vdots \ar@{-}[r] &  *+[o][F-]{}\ar@{}[u]|(.5){\tilde\psi} \ar@{-}[rr]|{*} & & *++[o][F-]{} \save *+[o][F*]{} \restore 
				\ar@{}[lu]_(.3){-\Delta B} & {\vcenter{\hbox{\txt{\tiny\textit{i}}}}} \ar@{-}[l]
				\\
				\txt{\tiny n} \ar@{-}[ru] & & &
			}
		}
		- \vcenter{
			\xymatrix@C=10pt@R=0pt{
				\txt{\tiny 1} \ar@{-}[dr] &  & \rule{25pt}{0pt} &
				\\
				\vdots \ar@{-}[r] &  *+[o][F-]{} \ar@{}[u]|(0.7){\tilde\psi} \ar@{-}@/_5pt/[rr]|{*} \ar@{-}@/^5pt/[rr]|{*} & & *++[o][F-]{} \save *+[o][F*]{} \restore 
				\ar@{}[lu]_(.3){-\Delta B}
				\\
				\txt{\tiny n} \ar@{-}[ru] &
			}
		}
		\\ \notag & \quad
		-\sum_{I\sqcup J}
		\vcenter{
			\xymatrix@C=10pt@R=0pt{
				&  & &  \rule{30pt}{0pt}
				\\
				I \big \{ \vdots \ \ &  *+[o][F-]{}\ar@{}[u]|(.6){\tilde\psi} \ar@{-}[rrd]|{*} \ar@{-}[l]+<5pt,7pt> \ar@{-}[l]+<5pt,-7pt> & &
				\\
				& & & *++[o][F-]{} \save *+[o][F*]{} \restore 
				\ar@{}[lu]_(.3){-\Delta B}
				\\
				J \big \{ \vdots \ \ &  *+[o][F-]{}\ar@{}[u]|(.6){\tilde\psi}  \ar@{-}[rru]|{*} \ar@{-}[l]+<5pt,7pt> \ar@{-}[l]+<5pt,-7pt> & &
			}
		}
	\end{align}
	Note that the order of the automorphism group of the second summand matches the coefficient $\frac 12$ in the second summand of Equation~\eqref{eq:Deformation-B}.
	
	Now we substitute Equation~\eqref{eq:deformation-B-graphs} instead of the distinguished $\psi$-vertex in each of the graphs of the first type in the expression for $[\epsilon^1]\omega_n(\epsilon)-\delta_{n,2} \Delta B$. Here we face the following problem: for some graphs the new distinguished $\phi$-vertex (decorated by $-\Delta B$) is further connected to a $\phi$-vertex. This type of graphs has never occurred before. Note, however, that since $\Delta B$ and $\tilde \phi_n$ are regular near the points $q_i\in\cP$, the convolution $*$ applied to $-\Delta B$ and $\tilde \phi_n$ vanishes. In other words, the only graphs that contribute after this substitution are the ones where the two edges incident to the new distinguished $\phi$-vertex are either connected to $\psi$-vertices or to the leaves. Thus by this substitution we obtain graphs of the second type.
	
	Now each graph of the second type except for the exceptional one~\eqref{eq:ExceptionalGraph} enters the formula for $[\epsilon^1]\omega_n(\epsilon)-\delta_{n,2} \Delta B$ exactly twice. First, as the original contribution of the second type in~\eqref{eq:TwoTypesOfGraphs} and second, as a summand in the substitution of~\eqref{eq:deformation-B-graphs} in the graphs of the first type in~\eqref{eq:TwoTypesOfGraphs}, with the opposite signs. Thus the only graph that give a non-trivial contribution to the formula for $[\epsilon^1]\omega_n(\epsilon)-\delta_{n,2} \Delta B$ is the exceptional graph~\eqref{eq:ExceptionalGraph}, and we have
	\begin{align}
		[\epsilon^1]\omega_n(\epsilon)-\delta_{n,2} \Delta B = -\delta_{n,2} \Delta B.
	\end{align}
	We see that the infinitesimal deformation of $\{\omega_n\}$ with respect to an arbitrary infinitesimal deformation of $B$ is trivial, which, as we discussed above, proves that the blobbed differentials do not depend on the choice $B$.
\end{proof}

\subsection{Compatibility with non-perturbative differentials}

\begin{proposition}
Assume that $\Sigma$ is compact, $B$ is the standard Bergman kernel fixed by the requirement of vanishing $\mathfrak{A}$-periods, and $\{\phi_n\}$ are the KP integrable differentials of Krichever construction, see Section~\ref{sec:Kr-constr}. In this case the blobbed differentials associated with the spectral curve data $(\Sigma,dx,dy,\cP,\{\phi_n\}_{n\geq 1})$ coincide with nonperturbative differentials $\omega_n^{\np}$ of~\cite{ABDKSnp}.
\end{proposition}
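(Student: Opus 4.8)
The strategy is to unwind the definition of $\omega_n^{\np}$ from~\cite{ABDKSnp} and to recognize it term by term as the blobbed construction of Definition~\ref{def:blobbedTR}. For the stated input data the blobbed differentials are $\omega_n=\tilde\omega_n+\delta_{n,2}B$, where $\{\tilde\omega_n\}$ is the convolution (Definition~\ref{def:convolution-system}) of the perturbative generalized-TR differentials $\tilde\psi^B_n=\psi^B_n-\delta_{n,2}B$ with the Krichever blobs $\tilde\phi_n=\phi_n-\delta_{n,2}B$. Because both the perturbative recursion and the Krichever construction on the compact curve $\Sigma$ use the \emph{same} standard Bergman kernel $B$, the two $\delta_{n,2}B$ subtractions match; the blobs $\tilde\phi_n$ are then exactly the theta-derivative part of~\eqref{eq:Kr-omega}, holomorphic on $\Sigma^n$, while $\tilde\psi^B_n$ has poles only at $\cP$, so the convolution $*$ by residues at $\cP$ is well defined. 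Lemma~\ref{lem:Independence-of-B} moreover guarantees that this identification is canonical and does not secretly depend on the choice of Bergman kernel.

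Next I would recall that the non-perturbative differentials of Eynard--Mari\~no and Borot--Eynard, as made precise in~\cite{ABDKSnp}, arise from the perturbative TR differentials $\psi^B_n$ by a dressing with the Riemann theta function in which the theta argument is translated by the $\mathfrak{B}$-periods (equivalently, the filling-fraction derivatives) of the perturbative data. The heart of the proof is the assertion that this theta dressing is \emph{precisely} the convolution with the Krichever blobs. I would establish this through the potential/Moyal-product reformulation of Section~\ref{S_potentials}, which turns both constructions into statements about a single generating series.

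Concretely, the Krichever blobs admit the potential $F^\phi(s)=\log\bigl(\theta(s)/\theta(0)\bigr)$ taken with respect to the holomorphic forms $d\eta_\ell=\eta_\ell$, since by~\eqref{eq:Kr-omega} one has $\tilde\phi_n=\restr{s}{0}\prod_{i=1}^n\bigl(\sum_\ell\eta_\ell(p_i)\partial_{s_\ell}\bigr)F^\phi(s)$. Writing the perturbative differentials through their own potential $F^\psi$ in terms of differentials $d\xi_k$ with poles at $\cP$, the convolution potential is the Moyal product
\[
e^{F^\omega(t,s)}=e^{\sum_{k,\ell}b_{k,\ell}\partial_{t_k}\partial_{s_\ell}}\,e^{F^\psi(t)}\,\frac{\theta(s)}{\theta(0)},\qquad b_{k,\ell}=d\xi_k*d\eta_\ell=\sum_{q\in\cP}\res_{p=q}d\xi_k(p)\int_q^p\eta_\ell .
\]
The operator $e^{\sum_{k,\ell}b_{k,\ell}\partial_{t_k}\partial_{s_\ell}}$ shifts each theta argument $s_\ell$ by $\sum_k b_{k,\ell}\partial_{t_k}$ acting on $e^{F^\psi(t)}$; upon extracting the differentials this reproduces exactly the theta function whose argument is translated by the period pairing of the perturbative free energy, i.e.\ the non-perturbative partition function of~\cite{ABDKSnp}. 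Matching the two generating series, together with their defining tangency relations, then yields $\omega_n=\omega_n^{\np}$ for all $n$.

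The main obstacle I anticipate is the precise arithmetic identification of the bilinear coefficients $b_{k,\ell}=\sum_{q\in\cP}\res_{p=q}d\xi_k(p)\int_q^p\eta_\ell$ with the period/shift data used in~\cite{ABDKSnp}: one has to show, via Riemann bilinear relations, that summing residues at the key points of the pole differentials $d\xi_k$ against the Abel-map primitives of the holomorphic forms $\eta_\ell$ yields exactly the translation of the theta characteristics by the $\mathfrak{B}$-periods of the perturbative differentials. In tandem one must check that the bookkeeping of the Bergman subtractions $\delta_{n,2}B$, together with the associated quadratic corrections and normalizations such as $\theta(0)$, is consistent on both sides. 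Once this dictionary is verified the remaining steps are formal, since both systems of differentials are then determined by one and the same potential through the same extraction formulas.
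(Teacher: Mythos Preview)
Your proposal is essentially correct and shares the crucial insight with the paper's proof: the identification rests on the Riemann bilinear relations, which convert the convolution pairing $\tilde\psi * \eta_\ell = \sum_{q\in\cP}\res_{p=q}\tilde\psi(p)\int_q^p\eta_\ell$ into the $\mathfrak{B}$-period $\tfrac{1}{2\pi\ii}\oint_{\mathfrak{B}_\ell}\tilde\psi$. You correctly flag this as the ``main obstacle,'' and indeed it is the only nontrivial step.

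The paper, however, takes a more direct route. Rather than passing through the potential/Moyal formalism of Section~\ref{S_potentials}, it compares the two constructions \emph{graph by graph}: the non-perturbative differentials of~\cite{ABDKSnp} are defined there (Section~4.1) by a sum over bipartite graphs structurally identical to the $G'_n$ graphs of Lemma~\ref{lem:Gamma-accent} (with the roles of $\psi$ and $\phi$ swapped), and the only difference is the edge operation---$\mathfrak{B}$-periods in~\cite{ABDKSnp} versus residues at~$\cP$ here. The paper then carries out the Riemann bilinear computation explicitly in three lines: writing the sum of residues as $\tfrac{1}{2\pi\ii}\oint_{\partial\overset\times\Sigma}$, splitting into $\mathfrak{A}$- and $\mathfrak{B}$-cycle contributions, and using $\oint_{\mathfrak{A}_j}\tilde\psi=0$ together with $\oint_{\mathfrak{A}_j}\eta_i=\delta_{ij}$.

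Your detour through potentials is valid but buys little: you still need the same RBR computation to identify $b_{k,\ell}$, and you incur the extra bookkeeping of checking that $\tilde\psi^B_n$ admits a finite $d\xi_k$-expansion and that the non-perturbative partition function of~\cite{ABDKSnp} matches the Moyal-product potential (which amounts to re-deriving their graph formula from the exponential). The paper's graph-level comparison is shorter and sidesteps both issues.
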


\begin{proof} The nonperturbative differentials are defined in~\cite{ABDKSnp} by a formula with summation over graphs similar to that one of Lemma~\ref{lem:Gamma-accent} but with a different expression for the contribution of the graphs. An equivalence of the two expressions for the contributions of the graphs follows essentially from the Riemann bilinear relations. In more details, let $\overset \times \Sigma$ be the $4g$-gon obtained as a cut of $\Sigma$ along the $\mathfrak{A}$- and $\mathfrak{B}$-cycles.  Let $\{\psi_n\}$ be the differentials associated by generalized topological recursion to the input data $(\Sigma,dx,dy,\cP,B)$, where $B$ is the Bergman kernel normalized on $\mathfrak{A}$-cycles. By assumption, $\tilde\phi_n(p_{\set n}) =  \prod_{i=1}^n(\eta(p_i)\partial_w)\log\theta(w)$, and since these differentials are holomorphic on $\Sigma$, we can rewrite the sum of the residues in the definition of the convolution $*$ as an integral over $\partial\overset \times \Sigma$. More precisely, we have:
\begin{align}
	&
	\sum_{q\in \cP}\res\limits_{p=q} \tilde\psi_{n+1}(p,p_{\set{n}}) \int_q^p \eta(\cdot) \partial_w \big(\prod_{i=1}^{n'}\eta(p'_i) \partial_w\big) \log\theta(w)
	\\ \notag
	& = \frac{1}{2\pi\ii}\oint_{p\in\partial\overset\times\Sigma} \tilde\psi_{n+1}(p,p_{\set{n}}) \int_o^p \eta(\cdot) \partial_w \big(\prod_{i=1}^{n'}\eta(p'_i) \partial_w\big) \log\theta(w),
\end{align}
where $o\in\Sigma$ is some fixed point not on the boundary of $\overset \times \Sigma$ (recall that the residues of $\tilde\psi_{n+1}$ at the points of $\cP$ vanish).

In order to shorten the notation in what follows, we omit the index $n+1$ of $\tilde \psi$, the arguments $p_{\set{n}}$, and the form $\big(\prod_{i=1}^{n'}\eta(p'_i) \partial_w\big) \log\theta(w)$ (to which we apply $\eta(\cdot) \partial_w$)in the subsequent computation. Recall also that since the Bergman kernel is normalized on $\mathfrak{A}$-cycles, we have $\int_{\mathfrak{A_j}}\tilde \psi = 0$. Thus, we have:
\begin{align}
	\frac{1}{2\pi\ii} \oint_{p\in\partial\overset\times\Sigma} \tilde\psi(p) \int_o^p \eta \partial_w & = \frac{1}{2\pi\ii} \oint_{p\in\partial\overset\times\Sigma} \tilde\psi(p) \int_o^p \sum_{i=1}^g \eta_i(\cdot) \partial_{w_i}
	\\ \notag
	& = - \frac{1}{2\pi\ii} \sum_{j=1}^g \oint_{ \mathcal{A}_j} \tilde\psi \oint_{\mathfrak{B}_j} \eta(\cdot) \partial_w + \frac{1}{2\pi\ii} \sum_{j=1}^g \oint_{ \mathcal{B}_j} \tilde\psi \oint_{\mathfrak{A}_j} \eta(\cdot) \partial_w
	\\ \notag
	& =  \frac{1}{2\pi\ii} \oint_{ \mathcal{B}} \tilde\psi \partial_w.
\end{align}
The latter formula was used instead of the convolution $*$ on the edges of graphs in the definition of $\omega_n^{\np}$ in~\cite{ABDKSnp}. Since further the structure of graphs in~\cite[Section 4.1]{ABDKSnp} coincides with the version of the convolution formula presented in the proof of Proposition~\ref{prop:def-omegas} (in terms of graphs $G'_n$, with $\{\phi_n\}$ and $\{\psi_n\}$ interchanged), we identify the blobbed differentials associated with the spectral curve data $(\Sigma,dx,dy,\cP,\{\phi_n\}_{n\geq 1})$ with the nonperturbative differentials $\omega_n^{\np}$ of~\cite{ABDKSnp}.
\end{proof}

\subsection{KP integrability}

\begin{theorem}\label{th:KP-blobbed}
Assume that the system of blobs $\{\phi_n\}$ in the initial data of blobbed topological recursion is KP integrable. Then the system of differentials of the corresponding blobbed topological recursion is also KP integrable.
\end{theorem}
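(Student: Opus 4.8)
The plan is to reduce the statement directly to Theorem~\ref{th:KP-conv}. The only obstruction to applying that theorem verbatim is that it is phrased for the rational kernel $B_0=\frac{dz_1dz_2}{(z_1-z_2)^2}$, whereas Definition~\ref{def:blobbedTR} allows an arbitrary Bergman kernel $B$. This is precisely what Lemma~\ref{lem:Independence-of-B} resolves: since the blobbed differentials do not depend on the auxiliary kernel, I would compute them with $B=B_0$, where $z$ is a meromorphic function whose restriction to a neighborhood $U\supset\cP$ embeds $U$ into $\C P^1$ (such $z$ exists since $\cP$ is finite, and locally the recursion only sees $B$ together with the germs of $dx,dy$ at $\cP$). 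With this choice the blobbed differentials become exactly $\omega_n=\tilde\omega_n+\delta_{n,2}B_0$, where $\{\tilde\omega_n\}$ is the convolution of $\{\tilde\psi^{B_0}_n\}$ and $\{\tilde\phi_n\}$ in the sense of Theorem~\ref{th:KP-conv}, with $\{\psi_n\}=\{\psi^{B_0}_n\}$ and $\{\phi_n\}$ the blobs.

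It then remains to verify the hypotheses of Theorem~\ref{th:KP-conv}. The blobs $\{\phi_n\}$ are KP integrable by assumption. The topological recursion differentials $\{\psi^{B_0}_n\}$ are KP integrable as well: with the rational kernel $B_0$ the generalized recursion depends only on the germs of $dx,dy$ at $\cP$ and on $B_0$, so through the embedding $z$ it reproduces a topological recursion on a rational spectral curve, which is KP integrable by~\cite{alexandrov2024topologicalrecursionrationalspectral}. The regularity conditions are immediate: the differentials $\psi^{B_0}_n$ have poles only at the key points $\cP\subset U$ in each variable, so $\tilde\psi^{B_0}_n$ extends through $z$ to $\C P^1$ and is holomorphic outside $U$; the blobs $\phi^{\langle d\rangle}_n$ are holomorphic on $U^n$, the sole exception being the biresidue-one diagonal pole of $\phi^{\langle 0\rangle}_2$, which coincides with that of $B_0$ in the coordinate $z$ and is therefore removed in $\tilde\phi_2=\phi_2-B_0$, leaving $\{\tilde\phi_n\}$ regular on $U$.

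With all hypotheses in place, Theorem~\ref{th:KP-conv} gives the KP integrability of $\{\omega_n\}$, that is, of the differentials of blobbed topological recursion.

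The step I expect to be the main obstacle is the KP integrability of $\{\psi^{B_0}_n\}$. One must argue carefully that generalized topological recursion with the genus-zero kernel $B_0$ is genuinely a local computation near $\cP$ that is insensitive to the global geometry of $\Sigma$, and hence coincides, via the embedding $z$, with a rational-spectral-curve computation falling under the scope of~\cite{alexandrov2024topologicalrecursionrationalspectral}; the remaining regularity bookkeeping is routine.
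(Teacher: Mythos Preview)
Your proposal is correct and follows essentially the same route as the paper: invoke Lemma~\ref{lem:Independence-of-B} to choose $B=B_0=\frac{dz_1dz_2}{(z_1-z_2)^2}$, embed a neighborhood $U\supset\cP$ into $\C P^1$ via $z$, use the rational-curve KP integrability of~\cite{alexandrov2024topologicalrecursionrationalspectral} (together with~\cite{alexandrov2024degenerateirregulartopologicalrecursion} for the generalized setting) for $\{\psi^{B_0}_n\}$, and then apply Theorem~\ref{th:KP-conv}. The only small addition in the paper's version is the explicit remark that KP integrability, once established on the restriction to $U$, automatically propagates to all of $\Sigma$ because it is a global property of the system of differentials.
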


This theorem generalizes and unifies the KP integrability of the non-perturbative tau function conjectured in~\cite{BorEyn-AllOrderConjecture} and proved in~\cite{ABDKSnp}. It also provides the right context to the result of~\cite{alexandrov2024topologicalrecursionrationalspectral} claiming that the differentials of topological recursion on a global spectral curve is KP integrable if the spectral curve is rational. Indeed, by Theorem~\ref{th:KP-blobbed}, the KP integrability of the system of TR differentials follows from the KP integrability of the system of differentials~$\phi_n=\delta_{n,2}B$ considered as blobs, and by Proposition~\ref{prop:trivialKP}, such system is KP integrable if the spectral curve is rational (and it can be upgraded to an ``only if'' statement, see also~\cite{ABDKS3}). Notice, however, that the proof of Theorem~\ref{th:KP-blobbed} given below rely on the mentioned result of~\cite{alexandrov2024topologicalrecursionrationalspectral}.

\begin{proof} By Lemma~\ref{lem:Independence-of-B}, we have a freedom in a choice of $B$. Let us choose it in a special form
\begin{equation}
B(p_1,p_2)=\frac{dz(p_1)dz(p_2)}{(z(p_1)-z(p_2))^2}
\end{equation}
for some meromorphic function~$z$. We may assume that $z$ takes finite pairwise distinct values at the points of $\cP$, and $dz\ne0$ at these points. Along with the pole on the diagonal, the differential $B$ may have some additional poles, namely, at the pairs of distinct points $(p_1,p_2)$ with equal values of $z$. We can exclude these extra poles by
replacing the curve $\Sigma$ by its open subset $U$ containing~$\cP$. It is even sufficient to assume that $U$ is a union of small disks centered at the points of~$\cP$. The function~$z$ provides an embedding of $U$ to~$\C P^1$, and the differential~$B$ coincides with the standard Bergman kernel on~$\C P^1$ under this embedding. Thus, replacing $\Sigma$ by $U$, we may assume that the spectral curve in the initial data of the topological recursion defining the differentials $\psi^B_n$ is rational. The differentials $dx$ and $dy$ are defined locally in a neighborhood of~$\cP$ but the differentials $\psi^B_n$ still extend as global meromorphic differentials on~$(\C P^1)^n$.

By~\cite[Theorem 1.1]{alexandrov2024topologicalrecursionrationalspectral} and~\cite[Theorem 6.4]{alexandrov2024degenerateirregulartopologicalrecursion}, the differentials~$\psi^B_n$ are KP integrable. By Theorem~\ref{th:KP-conv}, we conclude that the differentials $\omega_n$ of blobbed topological recursion are also KP integrable, at least for the restriction to $U\subset\Sigma$. But since the KP integrability for a system of differentials is a global property, it holds for the whole spectral curve~$\Sigma$.
\end{proof}

\subsection{Recursions}

The differentials of blobbed topological recursion are defined in Section~\ref{sec:TR-blobbed} by a combinatorial formula with summation over graphs. In this section, we present a number of possible recursions for these differentials simplifying their computation in practice.

\begin{proposition}
The differentials of blobbed topological recursion satisfy generalized loop equations, the same as for the differentials of GenTR:
\begin{equation}
\sum_{r\ge0}\bigl(-d\tfrac1{dy}\bigr)^r[u^r]e^{u(\cS(u\hbar\partial_x)-1)y}
\restr{q^\pm}{ \mathfrak{g}^{\pm\frac{u\hbar}{2}}_{\partial_x}p}\cW_n(q^+,q^-;p_{\set n})
\end{equation}
is holomorphic in~$p$ at any $q\in\cP$, where $\cW_n$ is the extended differential associated with the system of differentials $\{\omega_n\}$.
\end{proposition}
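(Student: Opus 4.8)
The plan is to reduce the generalized loop equation for the blobbed differentials $\{\omega_n\}$ to the one already satisfied by the topological recursion differentials $\{\psi^B_n\}$, following closely the proof of Proposition~\ref{prop:cP-cut}. The guiding point is that the blobs $\phi_n$ are holomorphic at $\cP$, so they cannot contribute to the polar structure at the key points that the loop equation probes; all poles of $\{\omega_n\}$ at $\cP$ are inherited from $\{\psi^B_n\}$ through the convolution.

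First I would apply the resummation trick from the proof of Proposition~\ref{prop:def-omegas} to the extended differential $\cW_n(q^+,q^-;p_{\set n})$ associated with $\{\omega_n\}$, starting from the graphical presentation of Lemma~\ref{lem:Gamma-accent}. The aim is to isolate the subgraph carrying the two distinguished arguments $q^\pm$ and to resum it into one super-vertex decorated by $\cW^{\psi^B}_m(q^+,q^-;p'_{\set m})$, the extended differential of the system $\{\psi^B_n\}$, so that the rest of each graph---the convolutions $*$ with the holomorphic blob vertices $\tilde\phi_k$ and the restrictions to the external leaves---assembles into operators acting only on the auxiliary variables $p'_1,\dots,p'_m$.

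The second, decisive step is a commutation observation. The loop equation operator in the statement acts purely on the distinguished arguments: it substitutes $q^\pm=\mathfrak{g}^{\pm\frac{u\hbar}{2}}_{\partial_x}p$ and multiplies by the $dx$-, $dy$-dependent factors $e^{u(\cS(u\hbar\partial_x)-1)y}$ and $(-d\tfrac1{dy})^r$, all in the single variable~$p$. Since the operators produced in the first step touch only $p'_1,\dots,p'_m$, they commute with it. As $\{\psi^B_n\}$ is by construction the system of generalized topological recursion differentials for $(\Sigma,dx,dy,\cP,B)$, it obeys the generalized loop equations, so the loop-equation expression built from $\cW^{\psi^B}_m$ is holomorphic in $p$ at every $q\in\cP$; applying the $p'$-operators, which do not involve $p$, preserves this holomorphicity, and summing over graphs gives the claim.

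The hard part is the first step. In Proposition~\ref{prop:cP-cut} the passage to a $\psi^B$-super-vertex relied on the duality of Proposition~\ref{prop:duality-convolution-global}, which required both systems to have poles at disjoint finite sets with vanishing residues and periods. Here the blobs are holomorphic at $\cP$, so their pole set is empty and that duality is unavailable. I would instead extract the $\psi^B$-super-vertex by routing the $\int_{q^-}^{q^+}$-structure through the Bergman vertices $B$ by means of the projection property of $\psi^B$ (the analogue of Eq.~\eqref{eq:B0-phi}), which turns a $\int_{q^-}^{q^+}$ attached at a Bergman vertex into the corresponding integral of $\psi^B$ and thereby rebuilds $\cW^{\psi^B}_m$, the genuine blobs $\tilde\phi_k$ supplying only the dressing operators on the $p'$-variables. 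The remaining contributions, in which $q^\pm$ attach directly to genuine blob vertices, must be controlled separately: there the relevant weights are holomorphic at $\cP$, so I expect that the loop equation operator---which on such inputs reduces to a symmetrization over the local sheets of~$x$, as in the base case of the recursion---produces no poles in $p$; making this precise for the higher-ramification generalized loop equation is the technical heart of the argument.
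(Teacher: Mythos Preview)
The paper states this proposition without proof, so your proposal is being judged on its own merits, with the proof of Proposition~\ref{prop:cP-cut} as the only available template.

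Your overall strategy is the right one and is exactly what the paper's template suggests: express $\cW^\omega_n$ as operators in auxiliary variables applied to $\cW^{\psi^B}_m$, commute the loop operator (which acts only on $q^\pm$) past those operators, and invoke the loop equations for~$\psi^B$. You have also correctly located the genuine obstacle. The resummation of Proposition~\ref{prop:def-omegas}, built on Lemma~\ref{lem:Gamma-accent}, collects the $\ell_\pm$-incident vertices into a $\cW^\phi$ super-vertex, not a $\cW^{\psi^B}$ one; in Proposition~\ref{prop:cP-cut} the swap to $\cW^\psi$ at the $\psi$-key-points was achieved via the duality of Proposition~\ref{prop:duality-convolution-global}. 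Here the blobs are globally holomorphic, so their pole set $\cP^\vee$ is empty and the dual convolution $\phi*\psi$ vanishes identically---the duality fails, and with it the direct role-swap.

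The gap you acknowledge is real, and your sketch does not close it. Your ``routing through Bergman vertices'' handles only the very special subgraphs in which a single $\int_{q^-}^{q^+}$ hits an index-$2$ $\phi$-vertex decorated by~$B$ with its other leg on a $\psi$-vertex; it does not reorganise the mixed contributions where several $\int_{q^-}^{q^+}$ hit genuine blob vertices, nor the index-$2$ $B$-vertices whose second leg goes to a leaf~$\ell_i$ or back to~$\ell_\pm$. More seriously, your fallback---that on holomorphic inputs the loop operator ``reduces to a symmetrisation over the sheets of~$x$'' and therefore produces no poles---is not justified for the generalized loop equation: the operator contains the factors $(-d\tfrac{1}{dy})^r$ and $e^{u(\cS(u\hbar\partial_x)-1)y}$, and both $\tfrac{1}{dy}$ and $\partial_x$ are singular at the key points, so holomorphicity of $\cW^\phi_m$ in $q^\pm$ alone does not imply holomorphicity of the full expression in~$p$. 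Equivalently, a system with $\phi_n=\delta_{n,2}B$ and all higher $\phi$'s zero does \emph{not} in general satisfy the generalized loop equations at~$\cP$, so the reduction to $\cW^\phi$ cannot conclude by itself.

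To close the argument you need an honest ``leaves on $\psi$'' analogue of Lemma~\ref{lem:Gamma-accent} for the $\ell_\pm$-leaf alone, or else a careful order-by-order analysis showing that, in the full sum over graphs, every pole at~$\cP$ introduced by the loop operator acting on blob-attached pieces is cancelled by a matching pole from a $\psi$-attached piece. Neither is supplied, and the paper's machinery (Propositions~\ref{prop:duality-convolution-global} and~\ref{prop:def-omegas}) does not furnish it directly.
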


Let $G_{\ell,m}\subset G_{\ell+m}$ be the set of graphs of Definition~\ref{def:convolution-system} with $\ell$ $\psi$-leaves marked by $1,\dots,\ell$ and $m$ $\phi$-leaves marked by $\ell+1,\dots,\ell+m$. Set
\begin{equation}
\omega_{\ell,m}=\sum_{\Gamma\in G_{\ell,m}}\frac{\mathsf{w}_\Gamma}{|\Aut(\Gamma)|}
\end{equation}
and denote by $\omega_{\ell,m}^{\langle d\rangle}$ the coefficient of $\hbar^d$ in $\omega_{\ell,m}$. Accordingly, we have
\begin{equation}
\omega_n(p_{\set{n}})=\sum_{I\sqcup J=\set{n}}\omega_{|I|,|J|}(p_I,p_J)+\delta_{n,2}B(p_1,p_2).
\end{equation}

\begin{proposition}
The blobbed TR differentials $\omega_{\ell,m}^{\langle d\rangle}$ with $\ell>0$ satisfy projection property
\begin{equation}
\omega_{\ell,m}^{\langle d\rangle}(p,p_{\set{\ell+m-1}})=\sum_{q\in\cP}\res_{\tilde p=q}\omega_{\ell,m}^{\langle d\rangle}(\tilde p,p_{\set{\ell+m-1}})
\int_q^{\tilde p}B(\cdot,p).
\end{equation}
\end{proposition}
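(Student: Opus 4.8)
The plan is to reduce the statement to the projection property~\eqref{eq:proj} of the underlying generalized topological recursion differentials. Since $\ell>0$, in every graph $\Gamma\in G_{\ell,m}$ the first leaf $\ell_1$ (carrying the argument $p$) is attached to a $\psi$-vertex, i.e.\ to a vertex decorated by $\tilde\psi_k=\psi^{B}_k-\delta_{k,2}B$ for some valence $k$. Write $P$ for the projection operator in the variable $p$,
\begin{equation*}
(Pf)(p)=\sum_{q\in\cP}\res_{\tilde p=q}f(\tilde p)\int_q^{\tilde p}B(\cdot,p),
\end{equation*}
so that the assertion is precisely $P\,\omega_{\ell,m}^{\langle d\rangle}=\omega_{\ell,m}^{\langle d\rangle}$. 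As $P$ is linear and $B$ is $\hbar$-independent, and as each $\omega_{\ell,m}^{\langle d\rangle}$ is a finite sum of graph weights (Remark~\ref{rem:finiteness}), it suffices to show $P\,\mathsf{w}(\Gamma)=\mathsf{w}(\Gamma)$ for each individual $\Gamma$.

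Fixing $\Gamma$, let $v_1$ be the $\psi$-vertex incident to $\ell_1$, decorated by $\tilde\psi_k$. The only dependence of $\mathsf{w}(\Gamma)$ on $p$ enters through the slot of $\tilde\psi_k$ associated with the edge $\ell_1$, which is set to $p$ by the leaf operator $\restr{}{p}$. Every other ingredient producing $\mathsf{w}(\Gamma)$ — the remaining vertex weights, the evaluations $\restr{}{p_i}$ at the other leaves, and the convolutions $*$ along the internal edges — acts only on variables distinct from $p$. Writing accordingly $\mathsf{w}(\Gamma)=L\bigl(\tilde\psi_k(p,\,\cdot)\bigr)$, where $L$ denotes the linear composite of all these remaining operations in the complementary variables, I would observe that $P$ and $L$ commute, since they act on disjoint sets of variables: the residue $\res_{\tilde p=q}$ and the integral $\int_q^{\tilde p}B(\cdot,p)$ defining $P$ involve only $p$ and an auxiliary copy $\tilde p$ of it, whereas the residues at $\cP$ in the internal variables defining $*$, together with the leaf evaluations $\restr{}{p_i}$, involve none of these. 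Hence $P\,\mathsf{w}(\Gamma)=L\bigl((P\tilde\psi_k)(p,\,\cdot)\bigr)$.

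It then remains to note that $P\tilde\psi_k=\tilde\psi_k$. This is the projection property~\eqref{eq:proj} of generalized topological recursion applied in the slot attached to $\ell_1$, which is legitimate by symmetry of $\tilde\psi_k$ and holds uniformly in the remaining arguments. The unstable terms cause no trouble: by construction the $\hbar^0$-part of $\tilde\psi_2=\psi^B_2-B$ vanishes and $\tilde\psi_1=\psi^B_1$ contains no $(0,1)$-term, so that every surviving term of each $\tilde\psi_k$ has $2g-2+k>0$ and therefore satisfies~\eqref{eq:proj}. Consequently $P\,\mathsf{w}(\Gamma)=L\bigl(\tilde\psi_k(p,\,\cdot)\bigr)=\mathsf{w}(\Gamma)$; summing over $\Gamma$ with weights $1/|\Aut(\Gamma)|$ and extracting the coefficient of $\hbar^d$ gives the claim.

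The point requiring the most care is the commutation of $P$ with the internal-edge convolutions: although the defining residues are taken in formally independent variables, one should confirm that projecting in $p$ first introduces no new poles at $\cP$ in the internal variables that could alter the subsequent residues. This is guaranteed by the equality $P\tilde\psi_k=\tilde\psi_k$, which shows that the projected vertex weight is the same global differential, with the same polar behavior in the internal variables, as the original one; I would make this precise through a Fubini-type argument for iterated residues in distinct variables.
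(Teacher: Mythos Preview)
Your argument is correct and follows the same idea the paper relies on, namely that the projection property for $\omega_{\ell,m}^{\langle d\rangle}$ is inherited from the projection property~\eqref{eq:proj} of the $\tilde\psi^B$-differentials through the graph structure. The paper does not give a separate proof of this proposition; the only place it comments on this mechanism is in the proof of Proposition~\ref{prop:cP-cut}, where it says the projection property ``follows manifestly from the projection properties for $\{\psi_n\}$ and $\{\phi_n\}$, and the structure of the convolution formula'', so your write-up in fact supplies the details the paper omits.
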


The loop equation define the principal parts of the poles of $\omega^{\langle d\rangle}_{\ell,m}$ at the points of $\cP$ with respect to the first argument, from the known differentials computed in the previous steps of recursion, and the projection property recovers $\omega^{\langle d\rangle}_{\ell,m}$ from the principal parts of its poles. This provides a recursive procedure allowing one to compute all $\omega_{\ell,m}$ with $\ell>0$. However, the computation of $\omega_{0,m}$ is not covered by this recursion.

\medskip
Now, let us split
\begin{equation}
\omega_{\ell,m}=\sum_{k=0}^\infty \omega_{\ell,m,k}
\end{equation}
where $\omega_{\ell,m,k}$ involves the contribution of \emph{graphs with $k$ internal edges} in $G_{\ell,m}$. Then, for $k>0$ we have
\begin{multline}
k\,\omega_{\ell,m,k}(p_{\set{\ell+m}})=*\biggl(\omega_{\ell+1,m+1,k-1}(p,p_{\set{\ell}},\bar p,p_{\set{\ell+m}\setminus\set{\ell}})
\biggr.\\+\biggl.
\sum_{\substack{I_1\sqcup I_2=\set{\ell}\\J_1\sqcup J_2=\set{\ell+m}\setminus\set{\ell}\\k_1+k_2=k-1}}
\omega_{|I_1|+1,|J_1|,k_1}(p,p_{I_1},p_{J_1})\omega_{|I_2|,|J_2|+1,k_2}(p_{I_2},\bar p,p_{J_2})
\biggr)
\end{multline}
where
\begin{equation}
*\omega(p,\bar p)=\sum_{q\in\cP}\res_{p=q}\int_q^p\omega(p,\cdot).
\end{equation}
These relations provide another sort of recursion allowing one to compute $\omega_{\ell,m,k}$ by induction in~$k$. The initial differentials for this recursion are
\begin{equation}
\omega_{\ell,0,0}=\tilde \psi^B_\ell,
\qquad \omega_{0,m,0}=\tilde \phi_{m},
\qquad \omega_{\ell,m,0}=0, \quad \ell>0,m>0.
\end{equation}

\printbibliography

\end{document}